\tikzset{every path/.style={->,thick}}
\tikzstyle{envplace}=[circle, thick, draw=black, fill=white, minimum size=15pt, transform shape]
\tikzstyle{sysplace}=[circle, thick, draw=black, fill=black!20, minimum size=15pt ,transform shape]
\tikzstyle{transition}=[rectangle, thick, draw=black, fill=white, minimum size=10pt, transform shape]
\tikzstyle{token}=[circle, thick, fill=black, inner sep=1.5pt, transform shape]
\tikzstyle{specialSys}=[double=black!20, minimum size=13.5pt]
\tikzstyle{specialEnv}=[double, minimum size=13.5pt]
\newcommand{\refLemma}[1]{Lemma~\ref{lem:#1}}
\newcommand{\refDef}[1]{Def.~\ref{def:#1}}
\newcommand{\refFig}[1]{Fig.~\ref{fig:#1}}
\newcommand{\refAppendix}[1]{App.~\ref{sec:#1}}
\newcommand{\refSection}[1]{Sec.~\ref{sec:#1}}
\newcommand{\True}{\ensuremath\mathit{true}}
\newcommand{\False}{\ensuremath\mathit{false}}
\newcommand{\End}{\ensuremath\mathit{end}}
\newcommand{\pomF}{\ensuremath\mathbb{P}}
\newcommand{\pom}[1]{\pomF\left(#1\right)}
\newcommand{\pNet}{\ensuremath\mathcal{N}}
\newcommand{\pl}{\ensuremath\mathcal{P}}
\newcommand{\places}{\pl}
\newcommand{\tr}{\ensuremath\mathcal{T}}
\newcommand{\transitions}{\tr}
\newcommand{\fl}{\ensuremath\mathcal{F}}
\newcommand{\flow}{\fl}
\newcommand{\init}{\ensuremath\mathit{In}}
\newcommand{\pre}[2]{\mathit{pre}^{#1}(#2)}
\newcommand{\post}[2]{\mathit{post}^{#1}(#2)}
\newcommand{\petriNet}{\ensuremath\pNet=(\pl,\tr,\fl,\init)}
\newcommand{\conflict}[2]{{#1}\,\sharp\,{#2}}
\newcommand{\firable}[1]{\ensuremath [#1\rangle}
\newcommand{\brapro}{\ensuremath\iota}
\newcommand{\unf}{\ensuremath\brapro_U}
\newcommand{\past}{\ensuremath\mathit{past}}
\newcommand{\pGame}{\ensuremath\mathcal{G}}
\newcommand{\plS}{\ensuremath\pl_S}
\newcommand{\plE}{\ensuremath\pl_E}
\newcommand{\badplaces}{\ensuremath\pl_B}
\newcommand{\goodmarkings}{\ensuremath\mathcal{M}_G}
\newcommand{\badmarkings}{\ensuremath\mathcal{M}_B}
\newcommand{\pSpecial}{\ensuremath\mathcal{W}}
\newcommand{\petriGame}{\ensuremath\pGame=(\plS,\plE,\tr,\fl,\init,\pSpecial)}
\newcommand{\pGameBadMarkings}{\ensuremath\pGame=(\plS,\plE,\tr,\fl,\init,\badmarkings)}
\newcommand{\sysstrat}{\ensuremath{\sigma}}
\newcommand{\maxSys}{\ensuremath\text{max}_{S}}
\newcommand{\BR}{\ensuremath\mathit{BM}}
\newcommand{\bbD}{\ensuremath\mathbb{D}}
\newcommand{\reach}{\ensuremath\mathcal{R}}
\newcommand{\TPgame}{\ensuremath\mathbb{G}}
\newcommand{\TPstates}{\ensuremath V}
\newcommand{\TPsysstates}{\ensuremath\TPstates_0}
\newcommand{\TPenvstates}{\ensuremath\TPstates_1}
\newcommand{\TPinit}{\ensuremath I}
\newcommand{\TPedges}{\ensuremath E}
\newcommand{\TPfinal}{\ensuremath F}
\newcommand{\BuchiGame}{\ensuremath\TPgame = (\TPstates, \TPsysstates, \TPenvstates, \TPinit, \TPedges, \TPfinal)}
\newcommand{\TPoneState}{\ensuremath\mathcal{V}}
\newcommand{\TPdecset}{\ensuremath \mathit{DT}}
\newcommand{\TPtypeTwoMarking}{\ensuremath T2M}
\newcommand{\TPbackRulesGen}{\ensuremath \mathit{SBM}}
\newcommand{\DSid}{\ensuremath\mathit{id}}
\newcommand{\DSpl}{\ensuremath\mathit{pl}}
\newcommand{\DSdec}{\ensuremath\mathit{dec}}
\newcommand{\DStypetwo}{\ensuremath t2}
\newcommand{\DSlastmcut}{\ensuremath \mathit{lmc}}
\newcommand{\DSmarking}{\ensuremath\mathscr{M}}
\newcommand{\decisionsets}{\ensuremath\mathcal{D}}
\newcommand{\decsetgame}{\ensuremath\mathscr{D}}
\newcommand{\backwardrules}{\ensuremath\mathcal{B}}
\newcommand{\sucTT}{\ensuremath\mathit{suc}_{t2}}
\newcommand{\sucMC}{\ensuremath\mathit{suc}_{\mathit{mcut}}}
\newcommand{\sucS}{\ensuremath\mathit{suc}_{S}}
\newcommand{\sucDEC}{\ensuremath\mathit{suc}_{\mathit{dec}}}
\newcommand{\tfl}{\ensuremath\Upsilon}
\newcommand{\FLstart}{\ensuremath\rhd}
\newcommand{\FLend}{\ensuremath\lhd}
\newcommand{\lett}{\ensuremath\mathit{letter}}
\newcommand{\labels}{\ensuremath\mathscr{L}}
\newcommand{\cl}{\ensuremath\mathit{co}}
\newcommand{\id}{\ensuremath\mathit{id}}
\newcommand{\oneend}{\ensuremath \#_1}
\newcommand{\twoend}{\ensuremath \#_2}
\newcommand{\inde}{\ensuremath\mathit{index}}
\newcommand{\typeTwo}{NES\nobreakdash-}
\newcommand{\TypeTwo}{NES-}
\newcommand{\MODT}{MOD\nobreakdash-3 }
\newcommand{\parseNodeDistance}{%
{%
\def\tikz@lib@place@single@factor{1}%
\expandafter\tikz@lib@place@parse@nums\expandafter{\tikz@node@distance}%
\xdef\nodeDisX{\the\pgf@x}%
\xdef\nodeDisY{\the\pgf@y}%
}%
}
\definecolor{ganttGreen}{HTML}{2a8728}
\title{Global Winning Conditions in Synthesis of Distributed Systems with Causal Memory\\(Full Version)}
\titlerunning{Global Winning Conditions in Synthesis of Distributed Systems with Causal Memory}
\author{Bernd Finkbeiner}{CISPA Helmholtz Center for Information Security, Saarbrücken, Germany}{finkbeiner@cispa.de}{https://orcid.org/0000-0002-4280-8441}{}
\author{Manuel Gieseking}{University of Oldenburg, Oldenburg, Germany}{gieseking@informatik.uni-oldenburg.de}{https://orcid.org/0000-0001-9073-3002}{}
\author{Jesko Hecking-Harbusch}{CISPA Helmholtz Center for Information Security, Saarbrücken, Germany}{jesko.hecking-harbusch@de.bosch.com}{https://orcid.org/0000-0003-2076-617X}{}
\author{Ernst-Rüdiger Olderog}{University of Oldenburg, Oldenburg, Germany}{olderog@informatik.uni-oldenburg.de}{https://orcid.org/0000-0002-3600-2046}{}
\authorrunning{B.\ Finkbeiner, M.\ Gieseking, J.\ Hecking-Harbusch, E.-R.\ Olderog}
\keywords{Synthesis, distributed systems, reactive systems, Petri games, decidability}
\begin{document}

\maketitle

\begin{abstract}
In the synthesis of distributed systems, we automate the development of distributed programs and hardware by automatically deriving correct implementations from formal specifications. 
For synchronous distributed systems, the synthesis problem is well known to be undecidable. 
For asynchronous systems, the boundary between decidable and undecidable synthesis problems is a long-standing open question. 
We study the problem in the setting of Petri games, a framework for distributed systems where asynchronous processes are equipped with causal memory. 
Petri games extend Petri nets with a distinction between system places and environment places. 
The components of a distributed system are the players of the game, represented as tokens that exchange information during each synchronization. 
Previous decidability results for this model are limited to local winning conditions, i.e., conditions that only refer to individual components. 

In this paper, we consider global winning conditions such as mutual exclusion, i.e., conditions that refer to the state of all components. 
We provide decidability and undecidability results for global winning conditions. 
First, we prove for winning conditions given as bad markings that it is decidable whether a winning strategy for the system players exists in Petri games with a bounded number of system players and one environment player. 
Second, we prove for winning conditions that refer to both good and bad markings that it is undecidable whether a winning strategy for the system players exists in Petri games with at least two system players and one environment player. 
Our results thus show that, on the one hand, it is indeed possible to use global safety specifications like mutual exclusion in the synthesis of distributed systems. 
However, on the other hand, adding global liveness specifications results in an undecidable synthesis problem for almost all Petri games.
\end{abstract}

\section{Introduction}\label{sec:introduction}

The \emph{synthesis problem} probes whether there exists an implementation for a formal specification and derives such an implementation if it exists.
This approach automates the creation of systems.
Engineers can think on a more abstract level about \emph{what} a system should achieve instead of \emph{how} the system should achieve its goal. 
The synthesis problem for a system consisting of one component interacting with its environment is often encoded as a two-player game with complete observation between the \emph{system} player and the \emph{environment} player (cf.\ \cite{DBLP:conf/cav/JobstmannGWB07, DBLP:conf/tacas/Ehlers11, DBLP:conf/cav/BohyBFJR12, DBLP:conf/tacas/FaymonvilleFRT17, LTLsynt, DBLP:journals/acta/LuttenbergerMS20, DBLP:journals/corr/abs-1904-07736}).
The system player tries to satisfy the winning condition of the game while the environment player tries to violate it.
A winning strategy for the system player is a correct implementation as it encodes the system's reaction to all environment behaviors.

The synthesis problem for \emph{distributed systems} aims to derive a correct implementation for every concurrent component of a distributed system. 
Each component can interact with its environment.
Distributed systems can be differentiated depending on whether the components progress \emph{synchronously} or \emph{asynchronously}.
For synchronous distributed systems, the synthesis problem is well known to be undecidable, as observed by Pnueli and Rosner~\cite{DBLP:conf/focs/PnueliR90}. 
For asynchronous distributed systems with causal memory, the boundary between decidable and undecidable synthesis problems is a long-standing open question~\cite{DBLP:conf/icalp/Muscholl15, DBLP:journals/iandc/FinkbeinerO17}.
For the synthesis of asynchronous distributed systems, the memory model changes compared to the two-player game and the number of players increases to encode the different components of the system.
In distributed systems, components observe only their local surroundings.
This can be encoded by \emph{causal memory}~\cite{DBLP:conf/fsttcs/GastinLZ04, DBLP:conf/fsttcs/MadhusudanTY05, DBLP:conf/icalp/GenestGMW13}:
Two players share no information while they run concurrently; during every synchronization, however, they exchange their entire local histories, including all of their previous synchronizations with other players.

In this paper, we consider \emph{reactive systems}, i.e., the components continually interact with their environment.
\emph{Control games}~\cite{DBLP:conf/icalp/GenestGMW13} based on \emph{asynchronous automata}~\cite{DBLP:journals/ita/Zielonka87} and \emph{Petri games}~\cite{DBLP:journals/iandc/FinkbeinerO17} are formalisms for the synthesis of asynchronous distributed reactive systems with causal memory.
We focus on Petri games.
Here, several system players play against several environment players in a Petri net.
Tokens represent players and places either belong to the system or to the environment, resulting in a distribution of system and environment players.
Deciding the existence of a winning strategy for the system players is EXPTIME-complete for Petri games with a bounded number of system players, one environment player, and bad places as local winning condition~\cite{DBLP:journals/iandc/FinkbeinerO17}.
This also holds for Petri games with a bounded number of environment players, one system player, and bad markings as global winning condition~\cite{DBLP:conf/fsttcs/FinkbeinerG17}.
Local winning conditions cannot express global properties like mutual exclusion.

We consider global winning conditions and contribute decidability and undecidability results regarding the synthesis of asynchronous distributed reactive systems with causal memory.
In the first part of this paper, we prove that it is decidable whether a winning strategy for the system players exists in Petri games with a bounded number of system players, one environment player, and \emph{bad markings} as global winning condition.
Bad markings are a \emph{safety} winning condition in the sense that they define markings as bad that the system players have to avoid in order to win the Petri game.
Decidability is achieved by a reduction to a two-player game with complete observation and a Büchi winning condition.
In the two-player game, it is encoded that transitions with the environment player fire as late as possible, i.e., transitions \emph{without} the environment player fire before transitions \emph{with} it.
This order of transitions encodes causal memory~\cite{DBLP:journals/iandc/FinkbeinerO17}.
For every sequential play of the two-player game, we need to check that no bad marking is reached for the different orders of fired concurrent transitions.
The causal history of system players can grow infinitely large.
We show that the finite causal history of each system player until its last synchronization with the environment player can be stored finitely and suffices to find bad markings.

In the second part of this paper, we investigate whether global winning conditions beyond bad markings are decidable.
We report on two undecidability results to further underline the significance of our decidability result:
We prove that it is undecidable whether a winning strategy exists for the system players in Petri games with at least two system players, one environment player, and \emph{good and bad markings} as winning condition.
For this winning condition, no bad marking should be reached \emph{until} a good marking is reached, which can be expressed in linear-time temporal logic (LTL)~\cite{DBLP:conf/focs/Pnueli77}.
Notice that it is not required to terminate in a good marking.
Good markings can be used to simulate the undecidable synchronous setting of Pnueli and Rosner~\cite{DBLP:conf/focs/PnueliR90} in the asynchronous setting of Petri games.
This is realized by identifying executions as good if players deviate too much from the synchronous setting.
Next, we prove that it is undecidable whether a winning strategy exists for the system players in  Petri games with \emph{good markings} and at least three players, out of which one is an environment player and each of the other two can change between being a system and an environment player.
Good markings are a \emph{liveness} winning condition in the sense that they define markings as good, one of which the system players have to reach in order to win the Petri game.
Here, bad markings from the first undecidability result are encoded by repeatedly changing all players to environment players. 
With these results, we obtain an overview regarding decidability and undecidability for global winning conditions. 

\subparagraph{Related Work}
A formal connection exists between Petri games and \emph{control games}~\cite{DBLP:conf/icalp/GenestGMW13} based on \emph{asynchronous automata}~\cite{DBLP:journals/ita/Zielonka87}:
Petri games can be translated into control games and vice versa, at an exponential blow-up in each direction~\cite{DBLP:conf/concur/BeutnerFH19}.
This translates decidability in acyclic communication architectures~\cite{DBLP:conf/icalp/GenestGMW13}, originally obtained for control games, to Petri games, and decidability in single-process systems~\cite{DBLP:conf/fsttcs/FinkbeinerG17}, originally obtained for Petri games, to control games.
Further decidability results exist for control games with acyclic communication architectures~\cite{DBLP:conf/fsttcs/MuschollW14}.
Decidability has also been obtained for restrictions on the dependencies of actions~\cite{DBLP:conf/fsttcs/GastinLZ04} or on the synchronization behavior~\cite{DBLP:conf/concur/MadhusudanT02, DBLP:conf/fsttcs/MadhusudanTY05} and for decomposable games~\cite{DBLP:conf/fsttcs/Gimbert17}.

The decidability result of this paper does not transfer to control games because the translation in~\cite{DBLP:conf/concur/BeutnerFH19} produces Petri games with as many system \emph{and} environment players as there are processes in the control game.
The undecidability results of this paper transfer to control games.
System players in Petri games correspond to processes with only controllable actions in control games; environment players correspond to processes with only uncontrollable actions~\cite{DBLP:conf/concur/BeutnerFH19}.
Bad markings from the first undecidability result can be simulated by additional uncontrollable actions for all processes preventing the reaching of good markings afterward.

For Petri games with several system and environment players, bounded synthesis is a semi-decision procedure to find winning strategies for the system players~\cite{DBLP:conf/birthday/Finkbeiner15, DBLP:journals/corr/Tentrup16, DBLP:conf/atva/Hecking-Harbusch19}.
Bounded synthesis and the reduction for bad places are implemented in the tool \textsc{AdamSYNT}~\cite{DBLP:conf/cav/FinkbeinerGO15, DBLP:journals/corr/abs-1711-10637, DBLP:conf/tacas/GiesekingHY21}.

\section{Motivating Example}\label{sec:BadMarkingsMotivation}

We introduce the intuition behind Petri games and bad markings with the example in \refFig{BMfigMotivation}.
There, we search for a strategy for two power plants, which should react to the energy production of renewable sources based on the weather forecast.
A Petri game differentiates the places of a Petri net as \emph{system} places (depicted in gray) and as \emph{environment} places (depicted in white).
For example, \emph{p} is a system place whereas \emph{forecast} is an environment place.
The players of a Petri game are represented by tokens.
The type of the place, where a token is residing, dictates whether the token represents a system or an environment player.

After transition \emph{sunny} fires to indicate a sunny forecast, there are two system players in place $p$ (each representing one power plant) and one environment player in place~$s$.
Causal memory implies that both system players know what the weather forecast predicts.
They do not know whether the actual energy production is high (indicated by $s_h$ firing) or low (indicated by $s_l$ firing) producing three or two units of energy in place $w$.
Nevertheless, each power plant has to decide whether to produce two or one unit of energy in place~$k$ by $p_h$ or $p_l$ firing.
The two power plants should produce together with the renewable sources either four or five units of energy.
Therefore, any final marking resulting in a different energy production is a bad marking, i.e.,
the set of bad markings is $\{ M : \pl \rightarrow \mathbb{N} \mid (M(k) + M(w) < 4 \lor\allowbreak M(k) + M(w) > 5) \land \exists x \in \{s', c', r'\} : (M(x) = 1 \land \forall y \in \pl \setminus\{x, k, w\} : M(y) = 0) \} $.
The second conjunct ensures the marking being final by requiring that the only environment player is in one of the three environment places~$s'$, $c'$, and~$r'$ and the system players are only in system places~$k$ and~$w$. 
We assume that players always choose one of their successors.
In \refSection{pg}, the finer notion of strategies being \emph{deadlock-avoiding} is presented.

A winning strategy for the system players produces one unit of energy at both power plants for a sunny forecast, two units of energy at one power plant and one unit of energy at the other for a cloudy forecast, and two units of energy at both power plants for a rainy forecast.
The specification is expressible with the local winning condition of bad places by having transitions from each bad marking leading to a bad place.
This is only~so~as the example has no infinite behavior.
For Petri games with infinite behavior and one environment player, the global winning condition of bad markings can specify losing behavior between players without requiring their synchronization which is impossible for local winning conditions.

\begin{figure}[t]
	\centering
	\begin{tikzpicture}[label distance=-0.5mm, scale=0.9, every node/.style={transform shape}] 
		\node [envplace] (e0) [tokens = 1, label=left:\emph{forecast}]{};
		\node [envplace] (e1) [right of=e0, right of=e0, label=below:$s$]{};
		\node [envplace] (e2) [right of=e1, right of=e1, label=right:$s'$]{};
		\node [sysplace] (s0) [right of=e2, right of=e2, label=above:$w$]{};
		\node [envplace] (e3) [right of=s0, right of=s0, above of=s0, label=355:$c$]{};
		\node [envplace] (e4) [right of=e3, right of=e3, yshift=-0.5cm, label=right:$c'$]{};
		\node [envplace] (e5) [right of=s0, right of=s0, below of=s0, label=5:$r$]{};
		\node [envplace] (e6) [right of=e5, right of=e5, yshift=0.5cm, label=right:$r'$]{};
		\node [sysplace] (s1) [right of=e4, right of=e4, yshift=0.5cm, label=below:$p$]{};
		\node [sysplace] (s2) [below of=s1, below of=s1, label=right:$k$]{};

		\node [transition] (t0) [right of = e0, label =below:\emph{sunny}]{};
		\path[->] (t0)
			edge [pre] (e0)
			edge [post] (e1);
		\node [transition] (t1) [above of = t0, label =left:\emph{cloudy}]{};
		\path[->] (t1)
			edge [pre] (e0)
			edge [post] (e3);
		\node [sysplace] (s7) [opacity=0, above of=t0]{};
		\draw[->]
 			(t1.north) -- ([yshift=2.5mm]s7.north) -- ([yshift=2.5mm]s1.north) -- node[right] {2} ([yshift=0.05cm]s1.north)
		;
		\node [transition] (t2) [below of = t0, label =left:\emph{rainy}]{};
		\path[->] (t2)
			edge [pre] (e0)
			edge [post] (e5);
		\node [transition] (t3) [right of = e1, yshift=0.5cm, label =175:$s_h$]{};
		\path[->] (t3)
			edge [pre] (e1)
			edge [post] (e2)
			edge [post] node[above, xshift=0.75cm, yshift=-0.175cm] {3} ([yshift=0.1cm]s0.west);
		\node [transition] (t4) [right of = e1, yshift=-0.5cm, label =185:$s_l$]{};
		\path[->] (t4)
			edge [pre] (e1)
			edge [post] (e2)
			edge [post] node[below, xshift=0.75cm, yshift=0.175cm] {2} ([yshift=-0.1cm]s0.west);

		\node [transition] (t5) [left of = e3, yshift=-0.5cm, label =355:$c_h$]{};
		\path[->] (t5)
			edge [pre] (e3)
			edge [post] (e4)
			edge [post] node[above] {2} (s0);
		\node [transition] (t6) [right of = e3, label =right:$c_l$]{};
		\path[->] (t6)
			edge [pre] (e3)
			edge [post] (e4);
		\draw [->] (t6) |- ([xshift=0.05cm]s0.east);

		\node [transition] (t7) [left of = e5, yshift=0.5cm, label =5:$r_h$]{};
		\path[->] (t7)
			edge [pre] (e5)
			edge [post] (e6)
			edge [post] (s0);
		\node [transition] (t8) [right of = e5, label =right:$r_l$]{};
		\path[->] (t8)
			edge [pre] (e5)
			edge [post] (e6);

		\node [transition] (t9) [below of=s1, xshift=-0.5cm, label=left:$p_h$]{};
		\path[->] (t9)
			edge [pre] (s1)
			edge [post] node[left] {2} (s2);
		\node [transition] (t10) [below of=s1, xshift=0.5cm, label=right:$p_l$]{};
		\path[->] (t10)
			edge [pre] (s1)
			edge [post] (s2);

		\node [sysplace] (s8) [opacity=0, above of=e1]{};
		\node [sysplace] (s9) [opacity=0, above of=t9]{};
		\draw[->]
 			(t0.north) -- ([xshift=0.1cm, yshift=0.1cm]s8.north) -- ([yshift=0.1cm]s9.north) -- node[below left] {2} ([xshift=-0.05cm]s1.north west)
		;
		\node [transition] (t11) [opacity=0, right of=s2]{};
		\draw[->]
 			(t2.south) -- ([yshift=-0.2cm]t2.south) -- ([xshift=0.25cm, yshift=-0.2cm]t11.south) |- node[above left, xshift=-0.25cm] {2} ([xshift=0.05cm]s1.east)
		;
	\end{tikzpicture}
	\caption{Two power plants observe if \emph{sunny}, \emph{cloudy}, or \emph{rainy} weather is \emph{forecast}.
	Depending on the actual weather, renewable sources produce up to three units of energy.
	The power plants produce one or two units of energy each and have to maintain the total energy production between four and five units of energy as all final markings with different energy production are bad markings.
	}
	\label{fig:BMfigMotivation}
\end{figure}
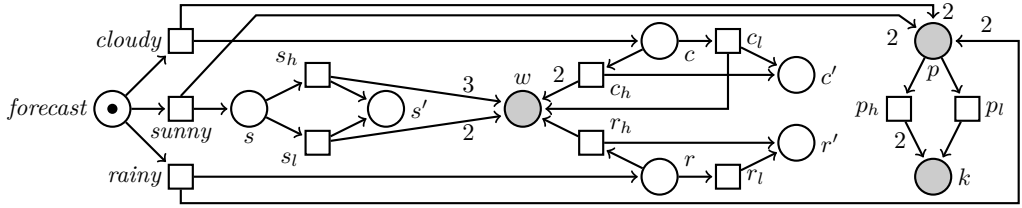

\section{Petri Nets}
\label{sec:pn}

A \emph{Petri net}~\cite{DBLP:journals/tcs/NielsenPW81, DBLP:books/sp/Reisig85a} $\petriNet$ consists of the disjoint finite sets of \emph{places} $\pl$ and of \emph{transitions}~$\tr$, the \emph{flow relation} $\fl$ as multiset over $(\pl \times \tr)\cup (\tr \times \pl)$, and the \emph{initial marking}~$\init$ as multiset over $\pl$.
For a place $p$, the \emph{precondition} is the set $\pre{}{p} = \{ t \in \tr \mid \flow(t, p) > 0 \}$ and the \emph{postcondition} is the set $\post{}{p} = \{ t \in \tr \mid \flow(p, t) > 0 \}$.
For a transition~$t$, the \emph{precondition} is the multiset over $\pl$ defined by $\pre{}{t}(p) = \flow(p, t)$ for all $p \in \pl $ and the \emph{postcondition} is the multiset over $\pl$ defined by $\post{}{t}(p) = \flow(t, p)$ for all $p \in \pl $.
States of Petri nets are represented by multisets over $\pl$, called \emph{markings}.
A marking~$M$ puts $M(p)$ \emph{tokens} in every place $p\in\pl$.
A transition $t$ is \emph{enabled} in a marking $M$ if $\pre{}{t} \subseteq M$. 
If no transition is enabled in a marking~$M$, then $M$ is called \emph{final}.
An enabled transition~$t$ can \emph{fire} in a marking~$M$ resulting in the successor marking $M' = M - \pre{}{t} + \post{}{t}$ (written $M \firable{t} M'$).
For markings~$M$ and~$M'$, we write $M \firable{t_0,\dotsc,t_{n-1}} M'$ if there exist markings $ M_0,\dotsc,M_n $ such that $M_0 = M$, $M_{n} = M'$, and $M_i \firable{t_i} M_{i+1}$ for $0 \leq i < n$.
The set of \emph{reachable markings} of $\pNet$ is defined as $\reach(\pNet) = \{M \mid \exists n \in \mathbb{N}, t_0,\dotsc, t_{n-1} \in \tr : \init \firable{t_0,\dotsc,t_{n-1}} M\}$.
A net $\pNet'$ is a \emph{subnet} of $\pNet$ (written $\pNet' \sqsubseteq \pNet$) if $\places' \subseteq \places$, $\transitions' \subseteq \transitions$, $\init' = \init$, and $\flow' = \flow \upharpoonright (\places' \times \transitions') \cup (\transitions' \times \places')$.
We enforce $\init' = \init$ to maintain all players when later defining strategies for Petri games.

We call elements in $\places \cup \transitions$ \emph{nodes}. 
For nodes $x$ and $y$, we write $x \lessdot y$ if $x\in\pre{}{y}$. 
With $\leq$, we denote the reflexive, transitive closure of $\lessdot$.
The \emph{causal past} of $x$ is $\past(x) = \{ y \mid y \leq x \}$.
Nodes $x$ and $y$ are \emph{causally related} if $x \leq y \lor y \leq x$.
They are \emph{in conflict} (written $\conflict{x}{y}$)~if,~for a place $p\in\pl$, there are distinct transitions $t_1, t_2 \in \post{}{p}$ with $t_1 \leq x \land t_2 \leq y$.
Node $x$ is in \emph{self-conflict} if $\conflict{x}{x}$.
We call $x$ and $y$ \emph{concurrent} if they are neither~causally~related~nor~in~conflict.

An \emph{occurrence net} is a Petri net where the pre- and postcondition of transitions are sets, the initial marking coincides with places without ingoing transitions,
other places have exactly one ingoing transition,
no infinite path starting from any given node and following the inverse flow relation exists,
and no transition is in self-conflict.
A \emph{homomorphism} maps nodes from $\pNet_1$ to $\pNet_2$ preserving the type of nodes and the pre- and postcondition of transitions.

A branching process \cite{DBLP:journals/acta/Engelfriet91, DBLP:journals/tcs/MeseguerMS96, DBLP:series/eatcs/EsparzaH08} describes parts of possible behaviors of a Petri net.
We use the \emph{individual token semantics}~\cite{DBLP:journals/iandc/GoltzR83}.
A \emph{branching process} of a Petri net $\pNet$ is a pair $\iota = (\pNet^\iota,\lambda^\iota)$ where $\pNet^\iota$ is an occurrence net and $\lambda^\iota :\pl^\iota\cup \tr^\iota\rightarrow\pl\cup \tr$ is a homomorphism from $\pNet^\iota$ to~$\pNet$ that is injective on transitions with the same precondition.
Intuitively, whenever a node can be reached on two distinct paths in a Petri net~$\pNet$, it is split up in the branching process of~$\pNet$.
$\lambda^\iota$ labels the nodes of $\pNet^\iota$ with the original nodes of $\pNet$.
The injectivity condition avoids additional unnecessary splits. 
The \emph{unfolding} $\unf =(\pNet^U, \lambda^U)$ of $\pNet$ is a maximal branching process: Whenever there is a set of pairwise concurrent places $C$ such that $\lambda^U[C] = \pre{\pNet}{t}$ for some transition $t \in \tr$, then there exists $t' \in \tr^{U}$ with $\lambda^U(t')=t$ and $\pre{U}{t'} = C$.

\section{Petri Games and Büchi Games}
\label{sec:pg}

A \emph{Petri game}~\cite{DBLP:journals/iandc/FinkbeinerO17, DBLP:conf/fsttcs/FinkbeinerG17} is a tuple $\petriGame$.
The places of the underlying Petri net $\petriNet$ are partitioned into \emph{system places}~$\plS$ and \emph{environment places}~$\plE$.
We call tokens on system places \emph{system players} and tokens on environment places \emph{environment players}.
The game is played by firing transitions in $\pNet$.
Players \emph{synchronize} when a joint transition fires.
Intuitively, a strategy controls the behavior of system players by deciding which transitions to allow.
Environment players are uncontrollable and transitions only dependent on environment players cannot be restricted. 
The \emph{winning condition} is given by~$\pSpecial$ as the set of \emph{bad places}~$\badplaces \subseteq \pl$, \emph{bad markings}~$\badmarkings \subseteq \reach(\pNet)$, or \emph{good markings} $\goodmarkings  \subseteq \reach(\pNet)$ or the pair of disjoint sets of \emph{good and bad markings} $(\goodmarkings,\badmarkings) \in \pom{\reach(\pNet)} \times \pom{\reach(\pNet)}$.
We depict Petri games as Petri nets and color system places gray and environment places white.
A Petri game has \emph{a bounded number of system players and one environment player} if, for a bound $k \in \mathbb{N}$, every system place contains at most $k$ tokens for all reachable markings of $\pNet$ and the sum of tokens in all environment places is exactly one for all reachable markings~of~$\pNet$.

A \emph{strategy} for $\pGame$ is a branching process $\sysstrat = (\pNet^\sysstrat, \lambda^\sysstrat)$ of $\pNet$ satisfying \emph{justified refusal}:
If there is a set of pairwise concurrent places $C$ in $\pNet^\sysstrat$ and a transition $t \in \tr$ with $\lambda^\sysstrat[C] = \pre{\pNet}{t}$, then there either is a transition $t'\in\tr^\sysstrat$ with $\lambda^\sysstrat(t')=t$ and $C = \pre{\sysstrat}{t'}$ or there is a system place $p \in C \cap (\lambda^\sysstrat)^{-1}[\plS]$ with $t \not\in \lambda^\sysstrat[\post{\sysstrat}{p}]$.
Justified refusal enforces that only system places can prohibit transitions based on their causal past:
From every situation in the game, a transition possible in the underlying net is either in the strategy or there is a system place that never allows it.
A strategy is a restriction of possible transitions in the Petri game because it is a branching process which describes subsets of the behavior of a Petri net.
We further require $\sysstrat$ to be \emph{deterministic}: For every reachable marking $M$ of $\sysstrat$ and system place $p \in M$, there is at most one transition from $\post{\sysstrat}{p}$ enabled in $M$.
Notice that $\post{\sysstrat}{p}$ can contain more than one transition as long as at most one of them is enabled in the same reachable marking.
This allows that the environment player decides between different branches of the Petri game and the system player later on reacts to every decision.

A strategy is \emph{deadlock-avoiding} if, for every final, reachable marking $M$ in the strategy, $\lambda^\sysstrat[M]$ is final as well. 
A strategy $\sysstrat$ is \emph{winning for bad places} $\pSpecial = \badplaces$ if it is deadlock-avoiding and no reachable marking in~$\sysstrat$ contains a place corresponding to a bad place.
A strategy $\sysstrat$ is \emph{winning for bad markings} $\pSpecial = \badmarkings$ if it is deadlock-avoiding and no reachable marking in~$\sysstrat$ corresponds to a bad marking.
One branching process $\brapro^1 = (\pNet^1, \lambda^1)$ is a \emph{subprocess} of another $\brapro^2 = (\pNet^2, \lambda^2)$ if $\pNet^1 \sqsubseteq \pNet^2$ and $\lambda^1 = \lambda^2 \upharpoonright (\pl^1 \cup \tr^1)$.
A \emph{play} $\pi=(\pNet^\pi,\lambda^\pi)$ is a subprocess of a strategy $\sysstrat=(\pNet^\sysstrat,\lambda^\sysstrat)$ with $\forall p \in \pl^\pi : |\post{}{p}| \leq 1$.
It is \emph{maximal} if, for each set of pairwise concurrent places~$C$ in $\pNet^\pi$ with $C = \pre{\sysstrat}{t}$ for some $t \in \tr^\sysstrat$, a place $p\in C $ and a transition $t' \in \tr^\pi$ exist with $t'\in\post{\pi}{p}$.
A \emph{complete firing sequence} of a play $\pi$ is a possibly infinite sequence of fired transitions such that each transition of $\pi$ occurs.
A strategy $\sysstrat$ is \emph{winning for good markings} $\pSpecial = \goodmarkings$~if, for all complete firing sequences $t_0 t_1 t_2 \ldots$ of all maximal plays~$\pi$ of $\sysstrat$ with $M_0 = \init^\pi$ and $M_0 \firable{t_0} M_1 \firable{t_1} M_2 \firable{t_2} \ldots$, there exists $i \geq 0$ with $\lambda^\pi[M_i] \in \goodmarkings$.
A strategy $\sysstrat$ is \emph{winning for good and bad markings} $\pSpecial = (\goodmarkings, \badmarkings) $ if, for all complete firing sequences $t_0 t_1 t_2 \ldots$ of all maximal plays $\pi$ of $\sysstrat$ with $M_0 = \init^\pi \land M_0 \firable{t_0} M_1 \firable{t_1} M_2 \firable{t_2} \ldots$, there exists $i \geq 0$ with $\lambda^\pi[M_i] \in \goodmarkings \land \forall 0 \leq j < i : \lambda^\pi[M_j]\notin \badmarkings$.
Terminating in a final marking as winning condition is different from reaching a good marking as players are not required to terminate in a good marking and can reach a bad marking afterward.

A \emph{Büchi game} has two players: \emph{Player~0} represents the \emph{system}, \emph{Player~1} the \emph{environment}.
Both act on complete information about the game arena and the play so far.
To win, Player~0 has to ensure that an accepting state is visited infinitely often.
A winning strategy for Player~0 corresponds to a correct implementation of the encoded synthesis problem.
Deciding the existence of a winning strategy can be done in polynomial time~\cite{DBLP:conf/soda/ChatterjeeH12}.

Formally, a \emph{Büchi game} $\BuchiGame$ consists of the finite set of \emph{states} $\TPstates$ partitioned into the disjoint sets of states~$\TPsysstates$ of Player~0 and of states~$\TPenvstates$ of Player~1, the \emph{initial state} $\TPinit \in \TPstates$, the \emph{edge relation} \mbox{$\TPedges \subseteq \TPstates \times \TPstates$}, and the set of \emph{accepting states} $\TPfinal \subseteq \TPstates$.
We assume that all states in a Büchi game have at least one outgoing edge.
A \emph{play} is a possibly infinite sequence of states which is constructed by letting Player~0 choose the next state from the successors in $\TPedges$ whenever the game is in a state from $\TPsysstates$ and by letting Player~1 choose otherwise.
An \emph{initial play} is a play that starts from the initial state.
A play is \emph{winning} for Player~0 if it visits at least one accepting state infinitely often.
Otherwise, the play is \emph{winning} for Player~1.
A \emph{strategy} for Player~0 is a function $ f : \TPstates^* \cdot \TPsysstates \rightarrow \TPstates $ that maps plays ending in states of Player~0 to one possible successor according to~$\TPedges$.
A play \emph{conforms} to a strategy $f$ if all successors of states in $\TPsysstates$ are chosen in accordance with~$f$.
A strategy~$f$ is winning for Player~0 if all initial plays that conform to $f$ are winning for Player~0.

\section{Decidability in Petri Games with Bad Markings}\label{sec:decPG}

We present a reduction from Petri games with a bounded number of system players, one environment player, and bad markings to Büchi games.
In the following, we give an intuition for the main concepts of the reduction, before presenting the structure of the Büchi game in the remainder of this section.
More details are in \refAppendix{decidability} and 
a running example is in \refFig{graphgame}.

Petri games use unfoldings, which can be of infinite size, to encode the causal memory of players.
By contrast, Büchi games have two players with complete information and a finite number of states.
To overcome these differences when encoding Petri games, states in the corresponding Büchi games consist of a representation of the current marking and some additional information.
Edges in the Büchi game mostly correspond to a transition firing in the Petri game.
We say that a transition fires in the Büchi game when it fires in the encoded Petri game.
Concurrency between transitions in the Petri game is encoded by having most possible interleavings in the Büchi game.
Some interleavings are left out to encode causal memory of the players in Petri games:
Causal memory is simulated in Büchi games by transitions with an environment place in their precondition firing as late as possible at \emph{mcuts}~\cite{DBLP:journals/iandc/FinkbeinerO17}.
An mcut is a situation in the Petri game where all system players have progressed maximally, i.e., the environment player can choose between all remaining possible transitions.
Mcuts can only be defined for Petri games with at most one environment player.
Player~1 in the Büchi game makes decisions only at states corresponding to mcuts.

We make two key additions to ensure that the idea of firing transition with the environment player only at mcuts can be lifted from local winning conditions to global winning conditions:
First, we add \emph{backward moves} to detect bad markings and nondeterministic decisions.
Intuitively, backward moves allow us to rewind transitions with only system players participating.
They are realized by each system player remembering its history until its last synchronization with the environment player.
In every state of the Büchi game, it is checked whether the backward moves of all system players allow us to rewind the game in such a way that a bad marking is reached or a nondeterministic decision is found.

Second, we add the so-called \emph{\typeTwo case} to handle system players playing infinitely without synchronizing with the environment player directly in the Büchi game.
The abbreviation NES stands for \emph{no more environment synchronization} and is necessary when some system players play infinitely but without synchronization with the environment player.
In~\cite{DBLP:journals/iandc/FinkbeinerO17}, this situation is called the type\nobreakdash-2 case and can be handled as a preprocessing step, because only the local winning condition of bad places is considered.
This is impossible for the global winning condition of bad markings considered in this paper.
Throughout this paper, the \typeTwo case can be disregarded by adding the restriction that each system player either terminates or synchronizes infinitely often with the environment player.

For the \typeTwo case, every system player has a three-valued flag.
As long as the system player will terminate or will synchronize with the environment player in the future, the flag should be set to \emph{negative \typeTwo status}.
When system players can play infinitely without synchronizing with the environment player, they should set their flags to \emph{positive \typeTwo status}.
After the \typeTwo case, participating system players obtain an \emph{ended \typeTwo status}, which excludes them from the remaining Büchi game.
A positive \typeTwo status triggers the \typeTwo case.
Here, the system players with positive \typeTwo status have to prove that they can play infinitely without synchronizing with the environment player.
Therefore, the usual order of all transitions without the environment player being possible until reaching an mcut is interrupted.
Instead, only system players with positive \typeTwo status are considered until their proof of playing infinitely without the environment player is successful.
If the system players with positive \typeTwo status make a mistake in their proof, then Player~0 immediately loses the Büchi game.

\subsection{States and Initial State in the Büchi Game}

Decision tuples represent players of the Petri game in states in the Büchi game.
A \emph{decision tuple} for a player consists of an \emph{identifier}, a \emph{position}, a \emph{\typeTwo status}, a \emph{decision}, and a \emph{representation of the last mcut}.
The identifier uniquely determines the player.
The position gives the current place of the player.
System players with negative \typeTwo status $\False$ claim that they will terminate or fire a transition with an environment place in its precondition and are not part of the \typeTwo case.
In the \typeTwo case, system players go from positive \typeTwo status $\True$ to ended \typeTwo status $\End$ as described previously.

The decision is either $\top$ or the set of \emph{allowed transitions} by the player.
For system players, $\top$ indicates that a decision for a set of allowed transitions is missing and has to be chosen.
The representation of the last mcut encodes the last known position of the environment player.
There can be at most as many different such positions as there are system players.
Thus, a number suffices to identify the last known mcut.
Let $\maxSys$ be the maximal number of system players in the Petri game  which are visible at the same time.
The set of \emph{system decision tuples} is $\decisionsets_S = \{ (\id, p, b, T, K) \mid \id, K \in \{1, \ldots, \maxSys\} \land p \in \plS \land b \in \{\False, \True, \End\} \land (T = \top \lor T \subseteq \post{}{p} ) \} $,
the set of \emph{environment decision tuples} is $\decisionsets_E = \{ (0, p, \False, \post{}{p}, 0) \mid p \in \plE \}$, and
the set of all \emph{decision tuples} is $\decisionsets = \decisionsets_S \cup \decisionsets_E$.

\begin{example}
In \refFig{graphgame}, a branch of the Büchi game for the Petri game in \refFig{BMfigMotivation} is shown.
States with decision tuples with positive \typeTwo status are omitted because no infinite behavior occurs.
The initial state $v_0$ has one decision tuple for the environment player in place $\mathit{forecast}$ and empty information for the \typeTwo case and the backward moves.
After Player~1 plays the edge for transition $\mathit{sunny}$ firing, state $v_1$ with three decision tuples is reached.
The decision tuples for the two system players in place $p$ have $\top$ as decision.
There are 16 combinations of decisions by the two system players, out of which four are shown.
The first system player always allows transition~$p_l$ and the second system player allows no transition in~$v_2$, only one of the two transitions $p_l$ and $p_h$ in $v_6$ and $v_8$, or both transitions in~$v_7$.
\end{example}

\begin{figure}[t]
\centering
\begin{tikzpicture}[scale=0.7, every node/.style={transform shape}, 
sys/.style={
rectangle,
very thick,
fill=lightgray,
draw,
align=center,
minimum size=7mm,
},
env/.style={
rectangle,
very thick,
fill=white,
draw,
align=center,
minimum size=7mm,
},
node distance=7mm and 2mm
]
\node[env, accepting, label={[label distance=-1mm, xshift=-.3mm, yshift=2mm]left:\small\(v_0\)}]	(0) {\small$(\{(0, \mathit{forecast}, \False, \{\mathit{sunny},\mathit{cloudy},\mathit{rainy}\}, 0)\},\emptyset, []^7)$};
\draw[<-,>=stealth'] (0) -- node[above] {} ++(-4.25cm,0); 
\node[right of=0, right of=0, right of=0, right of=0, right of=0, right of=0, right of=0, yshift=0.25cm]	(000) {\small$\ldots$};
\node[right of=0, right of=0, right of=0, right of=0, right of=0, right of=0, right of=0, yshift=-0.25cm]	(00) {\small$\ldots$};

\node[sys, below of=0, below of=0, below of=0, below of=0, left of=0, left of=0, left of=0, label={[label distance=-1mm, xshift=4mm, yshift=8mm]left:\small\(v_1\)}]	(1)	{\small$(\{\textcolor{blue}{(0, s, \False, \{s_h, s_l\}, 0),}$\\[-0.1cm] \small$\textcolor{blue}{(1, p, \False, \top, 1),}$\\[-0.1cm] \small$\textcolor{blue}{(2, p, \False, \top, 1)}\}, \emptyset, []^7)$};

\node[left of=1, left of=1, left of=1, left of=1, yshift=0.5cm]	(111) {\small$\ldots$};
\node[left of=1, left of=1, left of=1, left of=1, yshift=-0.5cm]	(1111) {\small$\ldots$};
\node[left of=1, left of=1, left of=1]	(11111) {\small$\ldots$};

\node[sys, right of=1, right of=1, right of=1, right of=1, right of=1, right of=1, label={[label distance=-1mm, xshift=4mm, yshift=8mm]left:\small\(v_2\)}]	(2)	{\small$(\{(0, s, \False, \{s_h, s_l\}, 0),$\\[-0.1cm] \small$(1, p, \False, \textcolor{blue}{\{p_l\}}, 1),$\\[-0.1cm] \small$(2, p, \False, \textcolor{blue}{\emptyset}, 1)\}, \emptyset, []^7)$};
\node[env, accepting, right of=2, right of=2, right of=2, right of=2, right of=2, right of=2, right of=2, right of=2, right of=2, xshift=0.16cm, label={[label distance=-1mm, xshift=4mm, yshift=8.5mm]left:\small\(v_3\)}]	(22)	{\small$(\{(0, s, \False, \{s_h, s_l\}, 0),\textcolor{blue}{(1, k, \False, \emptyset, 1)},$\\[-0.1cm] \small$(2, p, \False, \emptyset, 1)\}, \emptyset, \textcolor{ganttGreen}{ [ ( \{ (1, p, \False, \{p_l\}, 1 ) \},}$\\[-0.1cm] \small$\textcolor{ganttGreen}{\{ (1, k, \False, \emptyset, 1) \} ) ] }, []^{\textcolor{ganttGreen}{6}})$};
\node[sys, above of=22, above of=22, above of=22, right of=22, xshift=0.75cm, yshift=0.15cm, label={[label distance=-1mm, xshift=0mm, yshift=7mm]left:\small\(v_{5}\)}]	(222)	{\small$(\{\textcolor{blue}{(0, s', \False, \emptyset, 0)}, (1, k, \False, \emptyset, 1),$\\[-0.1cm] \small$(2, p, \False, \emptyset, 1), \textcolor{blue}{(3, w, \False, \emptyset, 2)}, $\\[-0.1cm] \small$\textcolor{blue}{(4, w, \False, \emptyset, 2)}, \textcolor{blue}{(5, w, \False, \emptyset, 2)}\}, \emptyset, $\\[-0.1cm] \small$[ ( \{ (1, p, \False, \{p_l\}, 1 ) \}, \{ (1, k, \False, \emptyset, 1) \} ) ], []^6)$};
\node[sys, above of=22, above of=22, yshift=0.2cm, left of=22, left of=22, left of=22, left of=22, left of=22, left of=22, left of=22, left of=22, left of=22, left of=22, left of=22, left of=22, xshift=-0.05cm, label={[label distance=-1mm, yshift=4mm]left:\small\(v_4\)}]	(2222)	{\small$(\{\textcolor{blue}{(0, s', \False, \emptyset, 0)}, (1, k, \False, \emptyset, 1), (2, p, \False, \emptyset, 1), \textcolor{blue}{(3, w, \False, \emptyset, 2)}, $\\[-0.1cm] \small$\textcolor{blue}{(4, w, \False, \emptyset, 2)} \}, \emptyset, [ ( \{ (1, p, \False, \{p_l\}, 1 ) \},\{ (1, k, \False, \emptyset, 1) \} ) ], []^6)$};
\node[sys, below of=1, below of=1, yshift=-0.4cm, xshift=0.145cm, label={[label distance=-1mm, xshift=4mm, yshift=8mm]left:\small\(v_6\)}]	(4)	{\small$(\{(0, s, \False, \{s_h, s_l\}, 0),$\\[-0.1cm] \small$(1, p, \False, \textcolor{blue}{\{p_l\}}, 1),$\\[-0.1cm] \small$(2, p, \False, \textcolor{blue}{\{p_l\}}, 1)\}, \emptyset, []^7)$};
\node[sys, right of=4, right of=4, right of=4, right of=4, right of=4, right of=4, right of=4, label={[label distance=-1mm, xshift=4mm, yshift=8mm]left:\small\(v_7\)}]	(5)	{\small$(\{(0, s, \False, \{s_h, s_l\}, 0),$\\[-0.1cm] \small$(1, p, \False, \textcolor{blue}{\{p_l\}}, 1),$\\[-0.1cm] \small$(2, p, \False, \textcolor{blue}{\{p_h, p_l\}}, 1)\}, \emptyset, []^7)$};
\node[sys, right of=5, right of=5, right of=5, right of=5, right of=5, right of=5, right of=5, xshift=0.045cm, label={[label distance=-1mm, xshift=4mm, yshift=8mm]left:\small\(v_8\)}]	(3)	{\small$(\{(0, s, \False, \{s_h, s_l\}, 0),$\\[-0.1cm] \small$(1, p, \False, \textcolor{blue}{\{p_l\}}, 1),$\\[-0.1cm] \small$(2, p, \False, \textcolor{blue}{\{p_h\}}, 1)\}, \emptyset, []^7)$};
\node[sys, below of=4, below of=4, below of=4, left of=4, left of=4, label={[label distance=-1mm, xshift=4mm, yshift=11.75mm]left:\small\(v_9\)}]	(7)	{\small$(\{(0, s, \False, \{s_h, s_l\}, 0),$\\[-0.1cm]\small$\textcolor{blue}{(1, k, \False, \emptyset, 1)},$\\[-0.1cm] \small$(2, p, \False, \{p_l\}, 1)\}, \emptyset,$\\[-0.1cm]\small$\textcolor{ganttGreen}{ [( \{ (1, p, \False, \{p_l\}, 1) \},}$\\[-0.1cm]\small$\textcolor{ganttGreen}{\{ (1, k, \False, \emptyset, 1) \} ) ] },[]^{\textcolor{ganttGreen}{6}})$};
\node[sys, right of=7, right of=7, right of=7, right of=7, right of=7, right of=7, label={[label distance=-1mm, xshift=5mm, yshift=11.75mm]left:\small\(v_{10}\)}]	(8)	{\small$(\{(0, s, \False, \{s_h, s_l\}, 0),$\\[-0.1cm]\small$(1, p, \False, \{p_l\}, 1),$\\[-0.1cm] \small$\textcolor{blue}{(2, k, \False, \emptyset, 1)}\}, \emptyset, $\\[-0.1cm]\small$\textcolor{ganttGreen}{[( \{ (2, p, \False, \{p_l\}, 1) \},}$\\[-0.1cm]\small$\textcolor{ganttGreen}{\{ (2, k, \False, \emptyset, 1) \} )]}, []^{\textcolor{ganttGreen}{6}})$};
\node[sys, right of=8, right of=8, right of=8, right of=8, right of=8, right of=8, label={[label distance=-1mm, xshift=5mm, yshift=11.75mm]left:\small\(v_{11}\)}]	(9)	{\small$(\{(0, s, \False, \{s_h, s_l\}, 0),$\\[-0.1cm] \small$\textcolor{blue}{(1, k, \False, \emptyset, 1)},$\\[-0.1cm] \small$(2, p, \False, \{p_h\}, 1)\}, \emptyset, $\\[-0.1cm]\small$\textcolor{ganttGreen}{ [ ( \{ (1, p, \False, \{p_l\}, 1) \}},$\\[-0.1cm]\small$\textcolor{ganttGreen}{\{ (1, k, \False, \emptyset, 1) \} ) ] }, []^{\textcolor{ganttGreen}{6}})$};
\node[sys, right of=9, right of=9, right of=9, right of=9, right of=9, right of=9, right of=9, xshift=0.5cm, label={[label distance=-1mm, xshift=5mm, yshift=11.75mm]left:\small\(v_{12}\)}]	(10)	{\small$(\{(0, s, \False, \{s_h, s_l\}, 0),$\\[-0.1cm] \small$(1, p, \False, \{p_l\}, 1),$\\[-0.1cm] \small$\textcolor{blue}{(2, k, \False, \emptyset, 1), (3, k, \False, \emptyset, 1)}\}, \emptyset,$\\[-0.1cm]\small$\textcolor{ganttGreen}{ [ ( \{ (2, p, \False, \{p_h\}, 1) \},}$\\[-0.1cm]\small$\textcolor{ganttGreen}{\{ (2, k, \False, \emptyset, 1), (3, k, \False, \emptyset, 1) \} ) ] }, []^{\textcolor{ganttGreen}{6}})$};
\node[env, accepting, below of=7, below of=7, below of=7, right of=7, right of=7, right of=7, xshift=0.37cm, label={[label distance=-1mm, xshift=5mm, yshift=8.5mm]left:\small\(v_{13}\)}]	(11)	{\small$(\{(0, s, \False, \{s_h, s_l\}, 0), \textcolor{blue}{(1, k, \False, \emptyset, 1)}, \textcolor{blue}{(2, k, \False, \emptyset, 1)}\}, $\\[-0.1cm]\small$\emptyset, \textcolor{ganttGreen}{ [ ( \{ (1, p, \False, \{p_l\}, 1) \}, \{ (1, k, \False, \emptyset, 1) \} ) ] } ,$\\[-0.1cm]\small$\textcolor{ganttGreen}{ [ ( \{ (2, p, \False, \{p_l\}, 1) \}, \{ (2, k, \False, \emptyset, 1) \} ) ] }, []^{\textcolor{ganttGreen}{5}})$};
\node[env, accepting, right of=11, right of=11, right of=11, right of=11, right of=11, right of=11, right of=11, right of=11, right of=11, right of=11, right of=11, right of=11, right of=11, label={[label distance=-1mm, xshift=5mm, yshift=8.5mm]left:\small\(v_{14}\)}]	(12)	{\small$(\{(0, s, \False, \{s_h, s_l\}, 0), \textcolor{blue}{(1, k, \False, \emptyset, 1)}, \textcolor{blue}{(2, k, \False, \emptyset, 1), } $\\[-0.1cm]\small$\textcolor{blue}{(3, k, \False, \emptyset, 1)}\}, \emptyset,  \textcolor{ganttGreen}{ [ ( \{ (1, p, \False, \{p_l\}, 1) \}, \{ (1, k, \False, \emptyset, 1) \} ) ] } ,$\\[-0.1cm]\small$\textcolor{ganttGreen}{ [ ( \{ (2, p, \False, \{p_h\}, 1) \}, \{ (2, k, \False, \emptyset, 1), (3, k, \False, \emptyset, 1) \} ) ] } , []^{\textcolor{ganttGreen}{5}})$};
\node[env, accepting, below of=11, below of=11, below of=11, yshift=-0.4cm, left of=11, left of=11, left of=11, label={[label distance=-1mm, xshift=5mm, yshift=16mm]left:\small\(v_{15}\)}] (15) {\small$(\{\textcolor{blue}{(0, s', \False, \emptyset, 0)},$\\[-0.1cm]\small$(1, k, \False, \emptyset, 1),$\\[-0.1cm]\small$(2, k, \False, \emptyset, 1),$\\[-0.1cm]\small$\textcolor{blue}{(3, w, \False, \emptyset, 2)},$\\[-0.1cm]\small$\textcolor{blue}{(4, w, \False, \emptyset, 2)},$\\[-0.1cm]\small$\textcolor{blue}{(5, w, \False, \emptyset, 2)}\}, $\\[-0.1cm]\small$\emptyset, \langle$\emph{BM} as in $v_{13}\rangle )$};
\node[env, accepting, right of=15, right of=15, right of=15, right of=15, right of=15, yshift=0.2cm, label={[label distance=-1mm, xshift=5mm, yshift=14mm]left:\small\(v_{16}\)}] (16) {\small$(\{\textcolor{blue}{(0, s', \False, \emptyset, 0)},$\\[-0.1cm]\small$(1, k, \False, \emptyset, 1),$\\[-0.1cm]\small$(2, k, \False, \emptyset, 1),$\\[-0.1cm]\small$\textcolor{blue}{(3, w, \False, \emptyset, 2)},$\\[-0.1cm]\small$\textcolor{blue}{(4, w, \False, \emptyset, 2)}\}, $\\[-0.1cm]\small $\emptyset,\langle$\emph{BM} as in $v_{13}\rangle)$};
\node[env, accepting, right of=16, right of=16, right of=16, right of=16, right of=16, right of=16, xshift=0.5cm, yshift=0.35cm, label={[label distance=-1mm, xshift=5mm, yshift=10.5mm]left:\small\(v_{17}\)}] (17) {\small$(\{(\textcolor{blue}{(0, s', \False, \emptyset, 0)}, (1, k, \False, \emptyset, 1),$\\[-0.1cm]\small$(2, k, \False, \emptyset, 1), (3, k, \False, \emptyset, 1),$\\[-0.1cm]\small$ \textcolor{blue}{(4, w, \False, \emptyset, 2)}, \textcolor{blue}{(5, w, \False, \emptyset, 2)}\}, $\\[-0.1cm]\small$\emptyset,\langle$\emph{BM} as in $v_{14}\rangle)$};
\node[env, accepting, right of=17, right of=17, right of=17, right of=17, right of=17, right of=17, right of=17, right of=17, yshift=-0.2cm, label={[label distance=-1mm, xshift=5mm, yshift=12mm]left:\small\(v_{18}\)}] (18) {\small$(\{\textcolor{blue}{(0, s', \False, \emptyset, 0)}, (1, k, \False, \emptyset, 1),$\\[-0.1cm]\small$(2, k, \False, \emptyset, 1), (3, k, \False, \emptyset, 1),$\\[-0.1cm]\small$ \textcolor{blue}{(4, w, \False, \emptyset, 2)}, \textcolor{blue}{(5, w, \False, \emptyset, 2)}, $\\[-0.1cm]\small$\textcolor{blue}{(6, w, \False, \emptyset, 2)}\}, $\\[-0.1cm]\small$\emptyset, \langle$\emph{BM} as in $v_{14}\rangle)$};

\node[env, above of=22, above of=22, left of=22, left of=22, left of=22, left of=22]	(14) {\small$F_N$};
\node[env, accepting, below of=17, below of=17, yshift=-0.2cm]	(13) {\small$F_B$};

\draw[->, >=stealth']
   (0) edge (00)
   (0) edge (000)
   (1) edge (111)
   (1) edge (1111)
   (1) edge (2)
   (1) edge (4)
   (1.310) edge (5.north west)
   (2) edge ([xshift=-0.1cm]22.west)
   (4) edge (7)
   (4) edge (8.north west)
   (3) edge (9.north east)
   (3) edge (10)
   (7) edge ([yshift=0.1cm]11.120)
   (8) edge ([yshift=0.1cm]11.60)
   (9) edge ([yshift=0.1cm]12.120)
   (10) edge ([yshift=0.1cm]12.60)
   (11) edge ([yshift=0.1cm]15.north)
   (11) edge ([yshift=0.1cm]16.north)
   (12) edge ([yshift=0.1cm]17.north)
   (12) edge ([yshift=0.1cm]18.north)
   (22) edge (222)
   (22.175) -- (2222.south east)
   (222) edge (14)
   (2222) edge (14)
   (17) edge (13)
   (13) edge [loop right] (13)
   (14) edge [loop above] (14)
;
\draw[->, >=stealth']
 (5.35) -- (14)
;
\draw[->, >=stealth']
 ([yshift=0.11cm]15.south east) -- ([xshift=-0.1cm, yshift=-0.225cm]13.west)
;
\draw[->, >=stealth']
 ([yshift=0.115cm]16.south east) -- ([xshift=-0.1cm, yshift=-0.225cm]13.north west)
;
\draw[->, >=stealth']
 (0.south) |- ([xshift=-52mm, yshift=-2mm]0.south) |- ([yshift=3mm]1.north) -- (1.north)
;
\draw[->, >=stealth']
 ([xshift=0.2mm]1.340) |- ([yshift=3mm]3.north) -- (3.north)
;
\draw[->, >=stealth']
 (18.east) -- ([xshift=2.75mm]18.east) |- ([xshift=10mm]14.south east) -- (14)
;
\end{tikzpicture}
\caption{Part of the Büchi game for the Petri game in \refFig{BMfigMotivation} is given. States of Player~0 are gray, states of Player~1 white. Most states are labeled for identification. Double squares are accepting states. Changes from previous states are blue for decision tuples and green for backward moves.
}
\label{fig:graphgame}
\end{figure}

Almost all states in the Büchi game contain decision tuples and additional information for the \typeTwo case and for backward moves.
The \emph{states in the Büchi game} are defined as $\TPstates = \TPstates_\mathit{BN} \cup \decsetgame \times (\plS \rightarrow \{0, \ldots, k\}) \times (\backwardrules^*)^{\maxSys}$ with $\TPstates_\mathit{BN} = \{F_B, F_N\}$.
Finite winning and losing behavior in the Petri game is represented in the Büchi game by the two unique states $F_B$ and $F_N$ in $\TPstates_\mathit{BN}$.
A \emph{decision marking} is a set of decision tuples corresponding to a reachable marking in the Petri game such that each identifier occurs at most once.
$\decsetgame$ is the set of all such decision markings.
The next element stores the underlying multiset over system places of the decision marking from the start of the \typeTwo case restricted to system players with positive \typeTwo status.
In the \typeTwo case, repeating this multiset proves that the system players with positive \typeTwo status can play infinitely without firing a transition with an environment place in its precondition.
This element is the empty multiset if not in the \typeTwo case.
More details are in \refSection{typetwo}.
$\backwardrules : \pom{\decisionsets_S} \times \pom{\decisionsets_S}$ is the set of backward moves to detect states corresponding to a bad marking or a nondeterministic decision.
The remaining elements are $\maxSys$ sequences of backward moves.
Each identifier in a decision tuple maps to the position of a sequence of backward moves.
More details are in \refSection{backwardrules}.

The \emph{initial state in the Büchi game} has as many decision tuples with unique identifier, \typeTwo status $\False$, $\top$ as decision, and last mcut~$1$ as there are tokens in system places in $\init$ of the Petri game and one decision tuple with identifier~$0$, \typeTwo status~$\False$, the postcondition of $p_E$ as decision, and last mcut~$0$ for the one environment place $p_E$ with one token in~$\init$.
The other parts are the empty multiset or the empty sequence of backward moves.

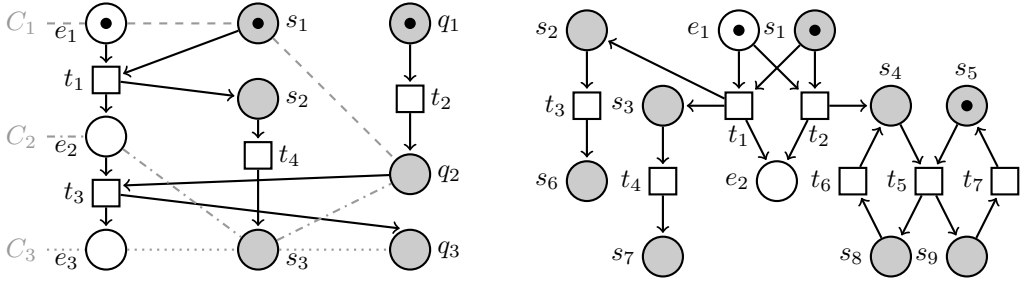
\begin{figure}[t]
	\centering
	\begin{subfigure}[t]{0.49\textwidth}
	\centering
	\begin{tikzpicture}[label distance=-0.5mm, every node/.style={transform shape}]
		\node [envplace] (e1) [tokens=1, label={[yshift=-1.5mm]left:{$e_1$}}] {};
		\node [sysplace] (s1) [tokens=1, right of=e1, right of=e1, label=right:{$s_1$}] {};
		\node [sysplace] (q1) [tokens=1, right of=s1, right of=s1, label=right:{$q_1$}] {};
		\node [envplace] (e2) [yshift=-1.5cm, label={[yshift=-1.5mm]left:{$e_2$}}] {};
		\node [sysplace] (s2) [below of=s1, label=right:{$s_2$}] {};
		\node [sysplace] (q2) [below of=q1, below of=q1, label=right:{$q_2$}] {};
		\node [envplace] (e3) [below of=e1, below of=e1, below of=e1, label={[yshift=-1.5mm]left:{$e_3$}}] {};
		\node [sysplace] (s3) [below of=s2, below of=s2, label={[yshift=-1.5mm]right:{$s_3$}}] {};
		\node [sysplace] (q3) [below of=q2, label=right:{$q_3$}] {};
		\node [transition] 	(t1) at ($(e1)!0.5!(e2)$) [label=left:{$t_1$}] {}
			edge [pre]  (e1)
			edge [pre]  (s1)
			edge [post] (e2)
			edge [post] (s2);
		\node [transition] 	(t2)  [below of = q1, label=right:{$t_2$}] {}
			edge [pre]  (q1)
			edge [post] (q2);
		\node [transition] 	(t3) at ($(e2)!0.5!(e3)$) [label=left:{$t_3$}] {}
			edge [pre]  (e2)
			edge [post] (e3)
			edge [post] ([xshift=-0.1cm]q3.north);
		\draw[thick, shorten >=1pt, ->] (q2) -- ([yshift=0.1cm]t3.east);
		\node [transition] 	(t4)  [below of = s2, yshift=0.25cm, label=right:{$t_4$}] {}
			edge [pre]  (s2)
			edge [post] (s3);
		\node (m1) [left=of e1,xshift=5mm] {\color{gray!80}$C_1$};
		\node (m2) [left=of e2,xshift=5mm] {\color{gray!80}$C_2$};
		\node (m3) [left=of e3,xshift=5mm] {\color{gray!80}$C_3$};
		\draw[-,gray!80,dashed] (m1) -- (e1) -- (s1) -- (q2);
		\draw[-,gray!80,dash dot] (m2) -- (e2) -- (s3) -- (q2);
		\draw[-,gray!80,dotted] (m3) -- (e3) -- (s3) -- (q3);
	\end{tikzpicture}
	\subcaption{The mcuts of the unfolding are $C_1$, $C_2$, and $C_3$.}
	\label{fig:mcut}
	\end{subfigure}%
	~
	\begin{subfigure}[t]{0.49\textwidth}
	\centering
	\begin{tikzpicture}[label distance=-0.5mm, every node/.style={transform shape}]
		\node [envplace] (e1) [tokens=1, label=left:{$e_1$}] {};
		\node [sysplace] (s1) [tokens=1, right of = e1, label=left:{$s_1$}] {};
		\node [envplace] (e2) [xshift=0.5cm, below of=e1, below of=e1, label=left:{$e_2$}] {};
		\node [sysplace] (s2) [below of =e1, left of = e1, label=left:{$s_3$}] {};
		\node [sysplace] (s22) [left of = e1, left of=e1, label=left:{$s_2$}] {};
		\node [sysplace] (s222) [below of = s2, below of = s2, label=left:{$s_7$}] {};
		\node [sysplace] (s2222) [below of = s22, below of = s22, label=left:{$s_6$}] {};
		\node [transition] 	(t22)  [below of = s2, label=left:{$t_4$}] {}
			edge [pre]  (s2)
			edge [post] (s222);
		\node [transition] 	(t222)  [below of = s22, label=left:{$t_3$}] {}
			edge [pre]  (s22)
			edge [post] (s2222);
		\node [sysplace] (s3) [below of=s1, right of=s1, label=above:{$s_4$}] {};
		\node [sysplace] (s4) [right of = s3, tokens=1, label=above:{$s_5$}] {};
		\node [sysplace] (s5) [below of = s3, below of = s3, label=left:{$s_8$}] {};
		\node [sysplace] (s6) [below of = s4, below of = s4, label=left:{$s_9$}] {};
		\node [transition] 	(t2)  [below of = e1, label=below:{$t_1$}] {}
			edge [pre]  (s1)
			edge [pre]  (e1)
			edge [post] (s2)
			edge [post] (s22)
			edge [post] (e2);
		\node [transition] 	(t3)  [below of = s1, label=below:{$~t_2$}] {}
			edge [pre]  (s1)
			edge [pre]  (e1)
			edge [post] (s3)
			edge [post] (e2);
		\node [transition] 	(t4)  [below of = s3, xshift=0.5cm, label=left:{$t_5$}] {}
			edge [pre]  (s3)
			edge [pre]  (s4)
			edge [post] (s5)
			edge [post] (s6);
		\node [transition] 	(t5)  [left of = t4, label=left:{$t_6$}] {}
			edge [pre]  (s5)
			edge [post] (s3);
		\node [transition] 	(t6)  [right of = t4, label=left:{$t_7$}] {}
			edge [pre]  (s6)
			edge [post] (s4);
	\end{tikzpicture}
	\subcaption{Markings containing $s_2$ and $s_7$ are bad markings.}
	\label{fig:type2}
	\end{subfigure}
	\caption{Two Petri games illustrate mcuts, backward moves, and the \typeTwo case.}
	\label{fig:mcutAndType2}
\end{figure}

\subsection{States of Player~0, States of Player~1, and Accepting States}

Causal memory in Petri games is encoded in Büchi games by letting Player~0 fix the decisions of allowed transitions for system players as early as possible and having Player~1 fire transitions with an environment place in their precondition as late as possible at mcuts.
\emph{Cuts} are markings in unfoldings.
An \emph{mcut} is a cut where all enabled transitions have an environment place in their precondition, i.e., all system players progressed maximally on their own.
With \refFig{mcut}, we illustrate mcuts.
The initial cut $\{e_1,s_1,q_1\}$ is \emph{not} an mcut as the enabled transition~$t_2$ has only the system place~$q_1$ in its precondition.
After $t_2$ fires, the cut $C_1 = \{e_1,s_1,q_2\}$ is an mcut as the only enabled transition~$t_1$ has environment place~$e_1$ in its precondition.
Analog arguments lead to $\{e_2,s_2,q_2\}$ not being an mcut and $C_2 = \{e_2,s_3,q_2\}$ being an mcut.
The final cut $C_3 = \{e_3,s_3,q_3\}$ is an mcut as there are no enabled transitions.

A decision marking~$\bbD$ in the states in the Büchi game \emph{corresponds to an mcut} when no~$\top$ and no positive \typeTwo status are part of $\bbD$ and every transition with only system places in its precondition is not enabled or not allowed by a participating system player in~$\bbD$. 
A state in the Büchi game can correspond to an mcut although the cut in the unfolding of the Petri game is not an mcut as the decisions of the system players in the Büchi game can disallow transitions.
\emph{States of Player~1} are $F_B$, $F_N$, and states corresponding to an mcut.
\emph{States of Player~0} are all other states.
\emph{Accepting states} are $F_B$ and states corresponding to an mcut.

\begin{example}
	The Petri game from \refFig{BMfigMotivation} has $\{\mathit{forecast}\}$, $\{ \{e, k:i\} \mid  e \in \{s, c, r\} \land 2 \leq i \leq 4 \}$, and $\{ \{s', w:w_{s'}, k:i\}, \{c', w:w_{c'}, k:i\}, \{r', w:w_{r'}, k:i\} \mid 2 \leq w_{s'}\leq 3 \land 1 \leq w_{c'}\leq 2 \land 0 \leq w_{r'}\leq 1 \land 2 \leq i \leq 4  \}$ as mcuts,
	i.e., the initial cut, cuts where the power plants produced energy while the energy production by renewable sources was not selected, and all final, reachable cuts.
	In the Büchi game in \refFig{graphgame}, the eight states $v_0$, $v_3$, and $v_{13}$ to $v_{18}$ of Player~1 have decision markings that correspond to an mcut.
	For states $v_0$, $v_{13}$, and $v_{14}$, all enabled transitions have an environment place in their precondition.
	For states~$v_{15}$~to~$v_{18}$, each decision marking corresponds to a final cut.
	The decision marking of state~$v_3$ corresponds to an mcut as the second system player in $p$ decided to not allow any of its outgoing transitions.
\end{example}

\subsection{Edges in the Büchi Game}

Edges in the Büchi game mostly connect states $\TPoneState =(\bbD, M_{T2}, \BR_1, \ldots, \BR_{\maxSys})$ and $\TPoneState'=(\bbD', M_{T2}', \BR_1', \ldots, \BR_{\maxSys}')$ where $\bbD$ is a decision marking, $M_{T2}$ is a marking, and $\BR_1, \ldots, \BR_{\maxSys}$ are as many sequences of backward moves as the maximum number $\maxSys$ of system players in the Petri game.
There are five sets of edges $\mathit{TOP}$, $\mathit{SYS}$, $\mathit{NES}$, $\mathit{MCUT}$, and $\mathit{STOP}$.
In the following description of the five sets of edges, not mentioned elements of the connected states stay the same.
The formal definitions can be found in \refAppendix{edgesBG}.
\begin{enumerate}[(1)]
	\item Edges from $\mathit{TOP}$ occur from states where at least one decision tuple in $\bbD$ has $\top$ as decision.
			To obtain $\bbD'$, Player~0 replaces each $\top$ in the decision tuples of system players with a set of allowed transitions and can change the \typeTwo status of decision tuples for system players from $\False$ to $\True$.
			The underlying marking of decision tuples with positive \typeTwo status $\True$ is stored in $M_{T2}'$ when a \typeTwo status changes.
	\item Edges from $\mathit{SYS}$ occur from states where all decision tuples in $\bbD$ have negative \typeTwo status and at least one transition with only system places in its precondition is enabled and allowed by the decision tuples in $\bbD$.
			To get~$\bbD'$, Player~0 simulates one such transition~$t$ firing by removing decision tuples~$\bbD_\mathit{pre}$ for the precondition of $t$ and adding decision tuples~$\bbD_\mathit{post}$ for the postcondition of $t$.
			For $\bbD_\mathit{post}$, the last mcut of all participating players is the maximum of their previous values and Player~0 picks the decisions and can change the \typeTwo status as in~(1).
			Marking~$M_{T2}'$ is obtained as in (1).
			Backward move $(\bbD_\mathit{pre}, \bbD_\mathit{post})$ is added to $\BR_\id$ of all participating players with identifier $\id$ to get $\BR_\id'$.
	\item Edges from $\mathit{NES}$ are the \typeTwo case and occur from states where a decision tuple in~$\bbD$ has positive \typeTwo status.
			To obtain $\bbD'$, Player~0 fires a transition as in~(2) but only from decision tuples with positive \typeTwo status resulting in new decision tuples with positive \typeTwo status. 
			This includes the storage of backward moves.
			The \typeTwo case is successful if the marking $M_{T2}$ is reached again and all players in it moved.
			Then, decision tuples with \typeTwo status $\True$ are set to \typeTwo status $\End$ and $M_{T2}'$ becomes the empty marking.
	\item Edges from $\mathit{MCUT}$ occur from states where all enabled and allowed transitions have an environment place in their precondition.
			To get $\bbD'$, Player~1 fires one such transition.
			Decision tuples for the precondition of the transition are removed, decision tuples for the postcondition are added.
			Added decision tuples for system players have \typeTwo status~$\False$, $\top$ as decision, an empty sequence of backward moves, and the highest last mcut.
			As backward moves store the past of system players until their last mcut, backward moves for system players that are part of the transition are removed.
			If backward moves become never applicable by firing the transition, they are removed from the successor state.
	\item Edges from $\mathit{STOP}$ occur from states with no transition enabled or corresponding to losing behavior.
			They replace other outgoing edges for losing behavior.
			States corresponding to termination lead to the winning state~$F_B$.
			States corresponding to a deadlock but not termination lead to the losing state~$F_N$.
			If backward moves detect a bad marking or a nondeterministic decision, the state leads to~$F_N$.
			In the \typeTwo case, a synchronization of decision tuples with positive and negative \typeTwo status or a deadlock or vanishing of decision tuples with positive \typeTwo status leads to~$F_N$.
			Decision tuples with positive \typeTwo status can vanish when transitions with empty postcondition fire.
\end{enumerate}

\begin{example}
	In \refFig{graphgame}, outgoing edges of state $v_1$ are in $\mathit{TOP}$.
	Outgoing edges of states~$v_2$,~$v_6$, and $v_8$ to $v_{12}$ are in $\mathit{SYS}$.
	Outgoing edges of states $v_0$, $v_3$, $v_{13}$, and $v_{14}$ are in $\mathit{MCUT}$.
	Other edges are in $\mathit{STOP}$.
	No edges in $\mathit{NES}$ exist in the depicted part.
	Outgoing edges of states~$v_4$ and~$v_5$ represent the deadlock of the second system player in~$p$ disallowing both outgoing transitions while only they are enabled.
	The outgoing edge of state~$v_7$ encodes a nondeterministic decision of the second system player, which allows two enabled transitions.
	Such a decision is only useful if another player ensures that at most one of the transitions becomes enabled.
	Outgoing edges of states $v_{15}$ to $v_{17}$ represent termination.
	The outgoing edge of state~$v_{18}$ represents a bad marking for six produced units of energy.
\end{example}

When, as in our construction,
(I) Player~0 immediately resolves $\top$ to the decisions of system players,
(II) Player~0 decides which transitions with only system places in their precondition fire following the decisions of system players, and
(III) Player~1 decides as late as possible at mcuts which transitions with an environment place in their precondition fire following the decisions of system players, then the corresponding Büchi games encode causal memory~\cite{DBLP:journals/iandc/FinkbeinerO17}.
Allowed transitions with only system places in their precondition fire in an order determined by Player~0 until an mcut is reached.
There, Player~1 decides for the environment player which allowed transition to fire.
Afterward, this process repeats itself.

\subsection{Backward Moves in the Büchi Game}
\label{sec:backwardrules}

In the Büchi game, Player~0 can avoid markings by picking the firing order for transitions with only system places in their precondition.
In \refFig{type2}, the two system players in~$s_2$ and~$s_3$ are reached after~$t_1$ fires.
One can fire~$t_3$, the other~$t_4$.
This results in the firing sequences $t_1 t_3 t_4$ and $t_1 t_4 t_3$.
If $s_2$ and~$s_7$ are in a bad marking, then Player~0 can decide for edges corresponding to the first firing sequence and the bad marking is missed.
We introduce backward moves to avoid such problems.
A \emph{backward move} is a pair of decision markings.
It stores the change to the decision tuples by edges from $\mathit{SYS}$ and $\mathit{NES}$.
For every such edge from $\TPoneState =(\bbD, M_{T2}, \BR_1, \ldots, \BR_{\maxSys})$ to $\TPoneState' =(\bbD', M_{T2}', \BR_1', \ldots, \BR_{\maxSys}')$, we obtain $\bbD_\mathit{pre}$ and $\bbD_\mathit{post}$ with $\bbD' = (\bbD \setminus \bbD_\mathit{pre}) \cup \bbD_\mathit{post}$ and add backward move $(\bbD_\mathit{pre}, \bbD_\mathit{post})$ to the end of $\BR_\id$ of all participating players with identifier $\id$.

For every state $\TPoneState'$ in the Büchi game, it is checked with backward moves if $\TPoneState'$ is losing due to a bad marking or a nondeterministic decision.
The decision marking~$\bbD'$ and all decision markings that are reachable via backward moves are checked.
Therefore, it is checked whether backward moves $(\bbD_\mathit{pre}, \bbD_\mathit{post})$ are \emph{applicable} to~$\bbD'$, i.e., whether $\bbD_\mathit{post} \subseteq \bbD'$ and $(\bbD_\mathit{pre}, \bbD_\mathit{post})$ is the last backward move of all participating players.
In this case, the backward move is removed from the end of the sequences of backward moves of all participating players and $\bbD = (\bbD' \setminus \bbD_\mathit{post}) \cup \bbD_\mathit{pre}$ results from the application of the backward move.
The underlying marking of~$\bbD$ is checked to not be a bad marking and $\bbD$ is checked to have only deterministic decisions.
This is repeated recursively from $\bbD$ for all applicable backward moves until no backward move is applicable.
If a decision marking corresponding to a bad marking or a nondeterministic decision is detected, the current state $\TPoneState'$ only has an edge to $F_N$.

The identifier of players in decision tuples is used to map the decision tuple to the corresponding sequence of backward moves, i.e., for each system player in the Petri game, the Büchi game collects a sequence of backward moves.
Edges from $\mathit{MCUT}$ empty the sequence of backward moves of decision tuples when their system place is in the precondition of the fired transition.
This removal can make backward moves not applicable because some participating players do not have the backward move as their last one anymore.

The sequence of backward moves can grow infinitely long when system players play infinitely without the environment player and without the \typeTwo case.
This would result in a Büchi game with infinitely many states.
To avoid this, the Büchi game becomes losing for Player~0 when it plays in a way that corresponds to a strategy with a variant of useless repetitions~\cite{DBLP:conf/fsttcs/Gimbert17} for the system players in the Petri game.
Our variant of useless repetitions identifies the repetition of a loop consisting only of transitions without the environment player in their precondition such that the last mcut of the system players does not change, i.e., the system players repeat a loop in which they do not exchange any new information about the environment player.
Thus, winning strategies have to avoid playing a useless repetition more than once between the successor of an mcut and the next mcut.
This can be achieved either by continuing to the next mcut or by setting some players to \typeTwo status $\True$ and completing the \typeTwo case, i.e., playing infinitely without the environment player.

\begin{example}
In \refFig{graphgame}, we include the collection of backward moves.
State~$v_{13}$ represents each power plant producing one unit of energy after a sunny weather forecast.
It is reached from state $v_6$ either via state $v_9$ or $v_{10}$ depending on which power plant produces energy first.
State~$v_{13}$ has a backward move for each power plant: $( \{ (1, p, \False, \{p_l\}, 1) \}, \{ (1, k, \False, \emptyset, 1) \} )$ and $( \{ (2, p, \False, \{p_l\}, 1) \}, \{ (2, k, \False, \emptyset, 1) \} )$.
Because the three markings $\{s, k:2\}$ (underlying marking of~$v_{13}$), $\{s, p, k\}$ (applying one backward move), and $\{s, p:2\}$ (applying both backward moves) are no bad markings and all decisions are deterministic, state $v_{13}$ continues with edges for the transitions of the environment place $s$ instead of having an edge to $F_N$.
\end{example}

\subsection{Encoding the \TypeTwo Case Directly in the Büchi Game}
\label{sec:typetwo}

We handle the \emph{\typeTwo case} where system players play infinitely without firing a transition with an environment place in its precondition directly in the Büchi game as players in the \typeTwo case might be in a bad marking.
This is in contrast to the reduction for bad places~\cite{DBLP:journals/iandc/FinkbeinerO17}.

In the Büchi game, Player~0 has to reach an accepting state infinitely often in order to win the game.
Only $F_B$ and states corresponding to an mcut are accepting states.
Transitions with only system places in their precondition are fired between successors of mcuts and the following mcut.
Thus, if the system players can fire transitions with only system places in their precondition infinitely often, eventually a useless repetition is reached which is losing.
To overcome this, we give Player~0 the possibility to change the \typeTwo status for decision tuples of system players from negative to positive.
The underlying marking of this change is stored and afterward only transitions from decision tuples with positive \typeTwo status can be fired.
Firing these transitions maintains the positive \typeTwo status for new decision tuples.
Instead of firing infinitely many transitions, the \typeTwo case is ended if the stored marking is reached again and all players in the marking have moved.
In this case, the \typeTwo status of all decision tuples with positive \typeTwo status is changed to ended \typeTwo status and the Büchi game continues with the remaining decision tuples with negative \typeTwo status.
The requirement to move is necessary as otherwise too many players could get ended \typeTwo status.
Decision tuples with ended \typeTwo status are maintained as backward moves can be applicable to them, i.e., backward moves store the \typeTwo status and allow us to reverse it in search for a bad marking.
We can thus ensure that continuing with the case where all decision tuples have negative \typeTwo status avoids bad markings that span the \typeTwo case.

Player~0 has to disclose decision tuples with positive \typeTwo status if system players fire infinitely many transitions with only system places in their precondition.
Otherwise, they lose the game as no accepting state is reached infinitely often.
It is losing if system players with positive and negative \typeTwo status synchronize, if players with positive \typeTwo status deadlock, if one such player is not moved and the marking from the start of the \typeTwo case is repeated, if all such players vanish, or if another marking is repeated.
Notice that at most one \typeTwo case is necessary per branch in the strategy tree of the Büchi game.
For a safety winning condition, possible \typeTwo cases after the first successful one can simply terminate.

\begin{example}
	A Petri game with necessary \typeTwo case in the encoding Büchi game is shown in \refFig{type2}.
	After Player~0 allows transition $t_2$ and Player~1 fires it, a state is reached where the decision tuples for $s_4$ and $s_5$ can be set to positive \typeTwo status by Player~0.
	After transitions $t_5$, $t_6$, and $t_7$ fire, the marking $\{s_4, s_5\}$ is repeated and the \typeTwo case is successful, proving that $t_5$, $t_6$, and $t_7$ can fire infinitely often.
\end{example}

\subsection{Decidability Result}

We analyze the properties of the constructed Büchi game.
Detailed proofs are in \refAppendix{BMcorrectness}.

\begin{lemma}[From Büchi game to Petri game strategies]\label{lem:BGtoPG}
	If Player~0 has a winning strategy in the Büchi game, then there is a winning strategy for the system players in the Petri game.
\end{lemma}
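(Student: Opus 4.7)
The plan is to extract the strategy $\sysstrat$ directly from the winning strategy $f$ by traversing the play tree of the Büchi game and simultaneously building a subprocess of the unfolding $\unf$. I will use the identifiers stored in the decision tuples as the bridge: each identifier $\id$ in a decision tuple $(\id,p,b,T,K)$ is associated to a unique place in the branching process being constructed, and each edge in the Büchi game fires a transition in the underlying Petri net, which I copy into $\sysstrat$ with appropriate pre- and postconditions determined by the identifiers of the participating decision tuples. Initial decision tuples map to the places of $\init$. Along a single $\mathit{MCUT}$-edge Player~1's choices become the branching points of $\sysstrat$, while $\mathit{TOP}$- and $\mathit{SYS}$-edges extend the current concurrent configuration with Player~0's chosen transitions. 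The $\mathit{NES}$-case produces an infinite repeating pattern in $\sysstrat$ corresponding to system players who never synchronize with the environment again.

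Next I verify that $\sysstrat$ is a well-formed strategy. For \emph{justified refusal}, suppose a set $C$ of pairwise concurrent places in $\sysstrat$ corresponds, via the identifier map, to a decision marking $\bbD$ in some $f$-reachable Büchi state, and a transition $t$ of $\pNet$ is enabled at $\lambda^\sysstrat[C]$. Either $t$ is allowed by every participating system-player decision tuple, in which case an edge from $\mathit{SYS}$ or $\mathit{MCUT}$ fires it and the copy is added to $\sysstrat$; or some system-player decision in $C$ does not contain $t$, which is exactly the witnessing place $p$ demanded by justified refusal. \emph{Determinism} follows because when Player~0 fixes the decision $T$ for a system-player token at a $\mathit{TOP}$-edge, this decision accompanies that identifier through all successor states, so only a single transition from $\post{\sysstrat}{p}$ can ever be enabled in a reachable marking. \emph{Deadlock-avoidance} follows from the $\mathit{STOP}$ rules: a reachable marking of $\sysstrat$ that is final but whose image in $\pNet$ is not final would trigger an edge to $F_N$, contradicting that $f$ conforms only to winning plays.

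The core step is showing $\sysstrat$ is winning for $\badmarkings$: no reachable marking $M$ of $\sysstrat$ satisfies $\lambda^\sysstrat[M]\in\badmarkings$. Here the backward moves do the essential work. Any reachable marking of $\sysstrat$ arises by firing some sequence of $\mathit{SYS}$/$\mathit{NES}$/$\mathit{MCUT}$-transitions between two mcuts in some order, possibly different from the interleaving chosen in the conforming Büchi play. However, the accumulated sequences $\BR_1,\dots,\BR_{\maxSys}$ record exactly those $\mathit{SYS}$- and $\mathit{NES}$-steps whose effects could be undone, and at every state the check for applicable backward moves explores every decision marking reachable by reordering these steps. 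If a reordering yields a bad marking, the outgoing edge is redirected to $F_N$, which $f$ can never conform to. Thus every interleaving realizable in $\sysstrat$ between consecutive mcuts is safe, and since $\mathit{MCUT}$-edges correctly synchronize the environment token, the images of reachable markings in $\sysstrat$ are never bad.

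The main obstacles I anticipate are twofold. First, the book-keeping linking identifiers in the Büchi state to freshly created places in $\sysstrat$ must be done so that the justified-refusal check survives across $\mathit{MCUT}$-edges, where sequences of backward moves are pruned; here I will argue that a pruned backward move either has its anchor place consumed by the mcut transition (and is therefore no longer applicable) or survives in the successor state. Second, the $\mathit{NES}$-case must produce a genuine infinite play of $\sysstrat$ rather than a finite fragment; the requirement that the marking $M_{T2}$ be exactly repeated with every positive-\typeTwo player having moved lets me close the constructed finite pattern into an infinite branch of $\sysstrat$, whose reachable markings were all already checked by backward moves inside the $\mathit{NES}$-loop, so no bad marking can hide there. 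Combining these observations yields that $\sysstrat$ is a deadlock-avoiding, deterministic branching process satisfying justified refusal and avoiding all bad markings, i.e., a winning strategy in the Petri game.
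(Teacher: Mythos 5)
Your overall route is the same as the paper's: build $\sysstrat$ by traversing the tree of the winning Büchi-game strategy, maintain a correspondence between decision markings (with their identifiers) and cuts of the branching process under construction, let $\mathit{MCUT}$-edges create the environment branching, unroll the $\mathit{NES}$-loop into an infinite branch, and argue that every interleaving of concurrent system transitions realizable in $\sysstrat$ is covered by a decision marking reachable via backward moves, so that the $\mathit{BAD}$/$\mathit{NDET}$/$\mathit{STOP}$ checks in a winning $f$ transfer to $\sysstrat$. That is exactly the paper's argument (\refDef{BGtoPG} together with the equivalence relation of \refDef{equivDTcuts} and the two-directional cut/decision-marking correspondence), at a comparable level of detail.

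There is, however, one step that does not go through as you state it: your argument for \emph{determinism}. You claim that because the decision $T$ chosen at a $\mathit{TOP}$-edge accompanies the identifier through all successor states, ``only a single transition from $\post{\sysstrat}{p}$ can ever be enabled in a reachable marking.'' This does not follow. Player~0 may legitimately choose $T$ with $|T| \geq 2$ (the paper explicitly allows this so that a system player can react to different environment branches), and then $\post{\sysstrat}{p}$ contains copies of several transitions; nothing about the persistence of $T$ prevents two of them from being enabled in the \emph{same} reachable marking of $\sysstrat$. Determinism is instead enforced by the $\mathit{NDET}$ check, which inspects every decision marking reachable via backward moves from every state and redirects to $F_N$ whenever two allowed transitions of one system place (or two instances of one transition) are simultaneously enabled; since $f$ is winning it never enters such a state, and the same interleaving-coverage argument you already use for bad markings then yields determinism of $\sysstrat$. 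The fix is available inside your own framework, but as written the determinism paragraph proves nothing, and it matters: without routing it through the $\mathit{NDET}$ check you would also lose the paper's companion argument that no transition violating justified refusal is overlooked when the strategy is assembled.
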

\begin{proof}[Proof Sketch]
	From the tree $T_f$ representing the winning strategy $f$ for Player~0 in the Büchi game, we inductively build a winning strategy $\sysstrat$ for the system players in the Petri game.
	Each cut in $\sysstrat$ is associated with a node in $T_f$, transitions are added following the edges in~$T_f$, and the associated cut is updated if needed.
	This strategy $\sysstrat$ for the system players in the Petri game is winning as it visits equivalent cuts to the reachable states in~$f$.
\end{proof}

\begin{lemma}[From Petri game to Büchi game strategies]\label{lem:PGtoBG}
	If the system players have a winning strategy in the Petri game, then there is a winning strategy for Player~0 in the Büchi game.
\end{lemma}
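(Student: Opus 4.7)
The plan is to define a strategy $f$ for Player~0 by induction on the plays of the Büchi game, maintaining the invariant that every initial play conforming to $f$ induces a sequence of cuts in the winning Petri game strategy $\sysstrat=(\pNet^\sysstrat,\lambda^\sysstrat)$, with the current decision marking corresponding to the current cut in $\sysstrat$ and the recorded backward moves being precisely the system-only firings that produced that cut from the previous mcut.

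First I would set up this correspondence. The initial state of the Büchi game matches the initial cut of $\pNet^\sysstrat$ since both encode $\init$. For successor states that Player~0 controls, the moves are read off from $\sysstrat$: at a $\top$-state, for each decision tuple of a system player at place $p$, Player~0 chooses exactly the set of transitions $\lambda^\sysstrat[\post{\sysstrat}{q}]$, where $q$ is the corresponding place in the associated cut of $\pNet^\sysstrat$; at a $\mathit{SYS}$-state, Player~0 picks any enabled, allowed transition whose firing in $\sysstrat$ is consistent with the current partial interleaving. For states of Player~1, determinism of $\sysstrat$ together with justified refusal guarantees that every environment transition chosen in the Büchi game has a matching transition firing at the associated cut of $\sysstrat$, so the invariant is restored by advancing the cut accordingly.

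Second, I would handle the \typeTwo case. If, following $\sysstrat$ at some point, a subset of system players fires transitions without an environment place in their precondition infinitely often, the Büchi game would otherwise fail to visit any mcut again, losing for Player~0. Because the Petri game has only a bounded number of system players and each system place is bounded, the pigeonhole principle applied to the underlying multiset restricted to those players yields a marking $M_{T2}$ that recurs along this infinite system-only continuation, with every involved player having fired at least once between the two occurrences. Player~0's strategy switches the relevant decision tuples from $\False$ to $\True$ at the first occurrence and then plays exactly the witnessing loop; after the loop closes, those tuples become $\End$ and the play continues with the remaining decision tuples, which by choice of $\sysstrat$ either terminate or synchronize with the environment player.

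Finally, I would check that $f$ is winning. Since $\sysstrat$ avoids bad markings and is deterministic, every decision marking obtainable from the current one by rolling back the recorded backward moves corresponds to a cut already present in $\sysstrat$, whose underlying marking is not bad and whose outgoing branching is deterministic; hence no $\mathit{STOP}$-edge to $F_N$ is triggered by the backward-move check. Deadlock-avoidance of $\sysstrat$ ensures that any finite maximal play ends in a final marking of $\pNet$, routing the Büchi play to $F_B$. Otherwise the play is infinite and either visits mcuts infinitely often, or invokes the mechanism above, after which progress toward the next mcut resumes; in both cases an accepting state is visited infinitely often. The principal obstacle is the \typeTwo argument: one must show that Player~0 can always choose the switch-point and the positive-\typeTwo subset so that the recorded multiset $M_{T2}$ genuinely recurs with every participating player having moved, and so that no edge in the chosen loop synchronizes a positive-\typeTwo tuple with a negative-\typeTwo one. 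This is settled by combining the boundedness-based pigeonhole with the freedom in $\sysstrat$ to schedule concurrent system-only transitions in an order that first isolates the \typeTwo players and then closes the loop.
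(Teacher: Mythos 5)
Your overall plan---maintain an inductive correspondence between decision markings and cuts of $\sysstrat$, read Player~0's choices off the Petri game strategy, handle the \typeTwo{}case by a pigeonhole argument on the bounded marking space, and argue winningness via the equivalence of reached markings---is the same skeleton as the paper's proof, and your treatment of the \typeTwo{}loop (including the requirement that every participating player moves between the two occurrences of $M_{T2}$) matches the paper's intent. However, there is a genuine gap: you translate an \emph{arbitrary} winning strategy $\sysstrat$ directly, but the Büchi game contains a losing condition $\mathit{UR}$ (useless repetitions) that sends any state whose backward-move sequences witness a repeated, information-free loop of system-only transitions to $F_N$. A winning Petri game strategy is perfectly free to contain such redundant loops between an mcut and the next mcut---nothing in the definitions of determinism, justified refusal, deadlock-avoidance, or bad-marking avoidance forbids them---so your constructed $f$ can be routed into $F_N$ by the $\mathit{UR}$ check even though $\sysstrat$ is winning. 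Worse, if system players fire system-only transitions unboundedly often without either reaching an mcut or entering the \typeTwo{}case, the backward-move sequences you record grow without bound and the corresponding states do not even exist in the finite Büchi game. Your final paragraph implicitly assumes that every infinite system-only continuation is handled by the \typeTwo{}mechanism and every finite one reaches the next mcut within the allowed budget, but neither is guaranteed for an arbitrary $\sysstrat$.

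The paper closes this gap with a preprocessing step that you would need to add: it defines useless repetitions of a Petri game strategy, shows that removing one via a \emph{shortcut} again yields a winning strategy (fewer markings are reached, and determinism and justified refusal are preserved), and only then translates the resulting \emph{minimal} strategy, whose inter-mcut segments are short enough that $\mathit{UR}$ never fires and the backward-move sequences stay within the finite bound. A second, related normalization is also needed and missing from your sketch: the Büchi game permits at most one \typeTwo{}case per branch (once a tuple has ended \typeTwo{}status, no further switch from $\False$ to $\True$ is allowed), whereas $\sysstrat$ may contain several disjoint sets of system players each looping forever without the environment in the same branch; the paper's \refLemma{oneTTperBranch} shows how to prune $\sysstrat$ to at most one such set per branch while preserving winningness. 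Without these two reductions your induction is not well-defined on all winning strategies, so the proof as written does not go through.
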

\begin{proof}[Proof Sketch]
	We skip unnecessary \typeTwo cases and useless repetitions in the winning strategy~$\sysstrat$ for the system players in the Petri game. 
	We replace $\top$ based on the postcondition of system places, disclose necessary \typeTwo cases, fire enabled transitions with only system places in their precondition in an arbitrary but fixed order between states after an mcut and the next mcut, and add all options at mcuts.
	This strategy for Player~0 in the Büchi game is winning as it visits equivalent states to the reachable cuts in~$\sysstrat$.
\end{proof}

\begin{theorem}[Game solving]\label{theo:gameSolving}
	For Petri games with a bounded number of system players, one environment player, and bad markings, the question of whether the system players have a winning strategy is decidable in \emph{2-EXPTIME}.
	If a winning strategy for the system players exists, it can be constructed in exponential time.
\end{theorem}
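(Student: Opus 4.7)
By \refLemma{BGtoPG} and \refLemma{PGtoBG}, the system players have a winning strategy in the Petri game if and only if Player~0 has a winning strategy in the constructed Büchi game. The plan is therefore to bound the size of this game, invoke a Büchi solver, and extract a Petri game strategy from the resulting positional winning strategy of Player~0.

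First, I would carry out a size analysis of the Büchi game. A decision tuple $(\id, p, b, T, K)$ has polynomially many choices for $\id$, $p$, $b$, and $K$, while $T$ ranges over at most exponentially many subsets of a postcondition. Since each state carries at most $\maxSys+1$ decision tuples, decision markings form an exponentially large alphabet, and the auxiliary marking $M_{T2}$ contributes at most another exponential factor. The dominant component is the bookkeeping of backward moves: each move is a pair of decision-marking fragments and is therefore drawn from an exponential alphabet, and the useless-repetition criterion enforced via the edges of $\mathit{SYS}$, $\mathit{NES}$, and $\mathit{STOP}$ bounds the length of each sequence $\BR_\id$ between two successive mcuts by the number of distinct decision markings, hence exponentially. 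Altogether the state space of the Büchi game is doubly exponential in the size of the Petri game, and invoking the polynomial-time algorithm of~\cite{DBLP:conf/soda/ChatterjeeH12} settles the decision problem in 2-EXPTIME.

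For the exponential-time construction of a Petri game strategy $\sysstrat$ when a winning strategy exists, I would follow the inductive translation in the proof of \refLemma{BGtoPG}: starting from the root of the tree $T_f$ induced by a positional Player~0 strategy $f$, we unfold the Petri game one mcut-segment at a time. Since the backward-move bookkeeping is only used to reject losing branches during game solving and never becomes part of the resulting branching process, the non-equivalent cuts of $\sysstrat$ correspond only to reachable decision markings together with their \typeTwo data, of which there are exponentially many, so $\sysstrat$ itself has exponential size and can be produced in exponential time after the game has been solved. The hard part will be to show that this projection is sound, i.e.\ that discarding the backward-move coordinates cannot merge cuts with different future obligations with respect to avoiding bad markings. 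I expect this to follow because the backward-move test verifies a property of the underlying Petri net marking rather than of the Büchi game state, so once Player~0 has won with the help of backward moves, the extracted $\sysstrat$ inherits the bad-marking avoidance guarantee directly from \refLemma{BGtoPG}.
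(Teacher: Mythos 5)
Your first half --- the size analysis and the 2-EXPTIME bound --- is essentially the paper's own argument: the decision markings and the marking $M_{T2}$ contribute exponential factors, the dominant factor is the $\maxSys$ sequences of backward moves (exponentially many possible moves, sequences of at most exponential length before a useless repetition is forced), yielding a doubly exponential state space, and polynomial Büchi solving~\cite{DBLP:conf/soda/ChatterjeeH12} then gives 2-EXPTIME. That part is fine and matches the paper.

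The construction claim is where you diverge, and where there is a genuine gap. The paper does \emph{not} prove an exponential bound on the constructed strategy: it takes a memoryless winning strategy for Player~0, represents it as a finite graph $G_f$ \emph{bounded by the size of the Büchi game}, and folds the \typeTwo loops and environment loops --- so the bound it actually establishes is doubly exponential (the appendix restatement of the theorem accordingly reads ``double exponential time''). Your attempt to sharpen this to exponential by projecting away the backward-move coordinates does not go through as stated: a positional winning strategy is a function of the full state, so two states with the same decision marking and \typeTwo data but different sequences $\BR_1,\ldots,\BR_{\maxSys}$ may be assigned different successors, and you cannot simply identify them. Moreover, even restricted to states reachable under the fixed winning strategy, the accepting mcut states do not have empty backward-move sequences (edges from $\mathit{MCUT}$ only clear the sequences of the players participating in the fired transition), so the reachable, folded strategy graph can a priori still have doubly exponentially many distinct nodes. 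You correctly flag the soundness of the projection as ``the hard part,'' but the justification you offer --- that the backward-move test only inspects the underlying marking --- addresses why discarding the coordinates preserves \emph{winningness}, not why it preserves \emph{well-definedness} of the merged strategy; that is precisely the step that is missing. To match what the paper actually proves, you should either drop the projection and settle for the doubly exponential bound inherited from $|G_f|$, or supply an argument that the winning strategy can be chosen to depend only on the decision marking and \typeTwo data.
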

\begin{proof}[Proof Sketch]
	The complexity is based on the double exponential number of states in the Büchi game and polynomial solving of Büchi games.
	There are exponentially many states in the size of the Petri game to represent decision tuples and each of these states has to store sequences of backward moves of at most exponential length in the size of the Petri game.
	This transfers to the size of the winning strategy because it can be represented finitely.
\end{proof}

\begin{remark}
	In the presented construction, Player~0 in the Büchi game decides both the decisions of the system players in the Petri game and the order in which concurrent transitions with only system places in their precondition are fired.
	The question might arise whether it is possible that Player~1 representing the environment determines the order in which concurrent transitions with only system places in their precondition are fired. 
	This is not possible because the system players can make different decisions depending on the order of transitions decided by the environment player.
	We present a detailed counterexample where this change would result in a different winner of a Petri game in \refAppendix{SysVsEnvScheduling}. 
\end{remark}

\section{Undecidability in Petri Games with Good Markings}\label{sec:undecPG}

We prove that it is undecidable if a winning strategy exists for the system players in Petri games with at least two system and one environment player and good and bad markings by enforcing an undecidable synchronous setting in Petri games.
For this winning condition, no bad marking should be reached \emph{until} a good marking is reached, which can be expressed in LTL.
Notice also that, after a good marking has been reached, it is allowed to reach a bad marking.
Afterward, we prove that it is undecidable if a winning strategy exists for the system players in Petri games with only good markings and at least three players, out of which one is an environment player and each of the other two changes between system and environment player.
Bad markings from the previous result are encoded by system players repeatedly changing to environment players and back.
More details can be found in \refAppendix{undec}.
The underlying main idea of the first construction is also used in other settings~\cite{DBLP:conf/concur/MadhusudanT02, DBLP:conf/focs/PnueliR90, DBLP:journals/jcss/Reif84}.

\subsection{Petri Game for the Post Correspondence Problem}

We recall that a strategy $\sysstrat$ is winning for good and bad markings $\pSpecial = (\goodmarkings, \badmarkings) $ if, for all complete firing sequences $t_0 t_1 t_2 \ldots$ of all maximal plays $\pi$ of $\sysstrat$ with $M_0 = \init^\pi \land M_0 \firable{t_0} M_1 \firable{t_1} M_2 \firable{t_2} \ldots$, there exists $i \geq 0$ with $\lambda^\pi[M_i] \in \goodmarkings \land \forall 0 \leq j < i : \lambda^\pi[M_j]\notin \badmarkings$.
The undecidability proof uses the Post correspondence problem~\cite{post1946variant}.
The \emph{Post correspondence problem} (PCP) is to determine, for a finite alphabet $\Sigma$ and two finite lists $r_0, r_1, \ldots, r_n$ and $v_0, v_1, \ldots, v_n$ of non-empty words over $\Sigma$, if there exists a non-empty sequence $i_1, i_2, \ldots, i_l \in \{0,1,\ldots,n\}$ such that $r_{i_1}r_{i_2}\ldots r_{i_l} = v_{i_1}v_{i_2}\ldots v_{i_l}$.
This problem is undecidable. 

To simulate the PCP in a Petri game, we use one environment and two system players.
The three players are \emph{independent} as they cannot communicate with each other.
Each system player outputs a solution to the PCP.
By firing a transition, a player \emph{outputs the label} of the transition.
The output of the first system player is $i_1 r_{i_1} \tau i_2 r_{i_2} \tau \ldots i_l r_{i_l} \tau \oneend$ and the output of the second one is $j_1 v_{j_1} \tau j_2 v_{j_2} \tau \ldots j_m v_{j_m} \tau \twoend$ for $i_1,\ldots, i_l, j_1, \ldots, j_m \in \{0,1,\ldots n\}$.
Both system players output \emph{indices} followed by the word from the index position of the respective list and $\tau$, and end symbol $\oneend$ or $\twoend$ at the end of the sequence.
Words $r_i$ for $i\in\{i_1,\ldots, i_l\}$ and $v_j$ for $j\in\{j_1, \ldots, j_m\}$ are output letter-by-letter.
A correct solution to the PCP fulfills $l > 0$, $m > 0$, $l=m$, $i_1=j_1$, $i_2=j_2$, \ldots, $i_l=j_m$, and $r_{i_1}r_{i_2}\ldots r_{i_l} = v_{j_1}v_{j_2}\ldots v_{j_m}$.

We ensure that strategies for the two system players can only win by outputting the same sequence of indices at both players.
This permits to decide for these strategies if a good marking is reached where both system players have output a correct solution.
Using the independence of the three players and depending on a choice by the environment player, we either check the equality of the output sequences of indices or of the letter-by-letter output sequences of words.
Therefore, the strategy for the system players has to behave as if both is tested.
With good markings, we restrict the asynchronous setting of Petri games to turn-taking firing sequences on the output indices or letters.
Thus, we consider equivalent firing sequences to the synchronous setting and can check the conditions for a correct solution to the PCP after both system players have output the end symbol.
With bad markings, we identify when output indices or output letters do not match.
System players can only output the end symbol after outputting at least one index and word to ensure non-empty solutions.

\subsection{Linear Firing Sequences via Good Markings}

We use \MODT counters to restrict the asynchronous setting of Petri games to firing sequences equivalent to the synchronous setting of Pnueli and Rosner~\cite{DBLP:conf/focs/PnueliR90, DBLP:journals/ipl/Schewe14}.
The main idea is that we are just interested in runs where the first system player is only zero or one step ahead of the second system player.
When the second system player is ahead of the first one or the first system player is two or more steps ahead of the second one, then a good marking is reached and the possible reaching of bad markings afterwards does not matter.
Hence, bad markings are only checked for runs where the first system player is zero or one step ahead of the second system player until they reach a good marking for giving an answer to the PCP.

Formally, for each system player, we introduce two \emph{\MODT counters} to count the number of output indices and of output letters modulo three.
When a player outputs an index, the respective index counter is increased by one, and accordingly for output letters and the letter counter.
If a counter would reach value three, it is reset to zero.
We define good markings based on the two \MODT index counters and the two \MODT letter counters.
In a \emph{linear firing sequence for indices (letters)}, the two system players output the indices (letters) alternately with the first system player preceding the second one at each turn.
We ensure that the environment player first decides that either the output indices or letters are checked for equality.
Afterward, a good marking is reached when a firing sequence is not a linear firing sequence for indices or letters, depending on the decision by the environment player.

In \refFig{linear_scheduling}, we visualize the reachability graph for the two system players when only considering either the values of their \MODT counters for indices or letters.
Markings are differentiated in the reachability graph depending on if a good marking is reached before, e.g., position $(0\parallel 1)$ does not lead to position $(1\parallel 1)$ as the path to $(1\parallel 1)$ does not include a good marking.
With linear firing sequences, we only consider firing sequences where the first system player outputs the first index or letter before the second system player as the opposite cases are good markings.
For firing sequences not reaching a good marking, equality of output indices or letters is checked at positions $(0\parallel 0)$, $(1\parallel 1)$, and $(2\parallel 2)$.
Thereby, equality of output indices or letters at the same position can be checked without storing all outputs and it is ensured that solutions have the same length.

Notice that linear firing sequences for indices do not restrict the order in which the two system players output letters between two indices, and vice versa.
Also, we at least need a \MODT counter.
For a MOD\nobreakdash-2 counter, the good marking $(2\parallel 0)$ is replaced by $(0\parallel 0)$, implying that all firing sequences contain a good marking.
A \MODT counter prevents that one player overtakes the other.
Thus, indices or letters at different positions are not compared, i.e., output indices or letters at position $(0\parallel 3)$ (not modulo three) can be different.

\begin{figure}[t]
\centering
\scalebox{0.75}{	
\begin{tikzpicture}[
	terminal/.style={
		rectangle, minimum size=6mm,
		very thick, draw=black
	},
	good/.style={
		rectangle, minimum size=6mm,
		very thick, draw=black,
		top color=green!30, bottom color=green!30
	},
	node distance=5mm, every on chain/.style={join}, every join/.style={shorten <= 2pt, shorten >= 2pt, ->,->, >=stealth'}
]
{ [start chain]
	\node (start) [on chain, terminal] {$(0\parallel 0)$};
	{ [start branch=plus1]
		\node (plus1) [good, on chain=going below] {$(0\parallel 1)$};
	}
	\node [on chain=going right, terminal, xshift=10mm] {$(1\parallel 0)$};
	{ [start branch=minus1]
		\node (minus1) [good, on chain=going below] {$(2\parallel 0)$};
	}
	\node [on chain=going right, terminal, xshift=10mm] {$(1\parallel 1)$};
	{ [start branch=plus2]
		\node (plus2) [good, on chain=going below] {$(1\parallel 2)$};
	}
	\node [on chain=going right, terminal, xshift=10mm] {$(2\parallel 1)$};
	{ [start branch=minus2]
		\node (minus2) [good, on chain=going below] {$(0\parallel 1)$};
	}
	\node [on chain=going right, terminal, xshift=10mm] {$(2\parallel 2)$};
	{ [start branch=plus3]
		\node (plus3) [good, on chain=going below] {$(2\parallel 0)$};
	}
	\node (from) [on chain=going right, terminal, xshift=10mm] {$(0\parallel 2)$};
	{ [start branch=minus3]
		\node (minus3) [good, on chain=going below] {$(1\parallel 2)$};
	}
}
\draw[shorten <= 2pt, shorten >= 2pt, ->, >=stealth'] (from.east) -- ([xshift=1.5cm]from.east) -- ([xshift=1.5cm, yshift=-1.6cm]from.east) -- ([xshift=-1.5cm, yshift=-1.6cm]start.west) -- ([xshift=-1.5cm]start.west) -- (start.west);

\node (d0) [right of=plus1, right of=plus1, right of=plus1, yshift=0.25cm] {\ldots};
\draw[shorten <= 2pt, shorten >= 2pt, ->, >=stealth'] (plus1) to (d0);
\node (d1) [right of=plus1, right of=plus1, right of=plus1, yshift=-0.25cm] {\ldots};
\draw[shorten <= 2pt, shorten >= 2pt, ->, >=stealth'] (plus1) to (d1);
\node (d2) [right of=plus2, right of=plus2, right of=plus2, yshift=0.25cm] {\ldots};
\draw[shorten <= 2pt, shorten >= 2pt, ->, >=stealth'] (plus2) to (d2);
\node (d3) [right of=plus2, right of=plus2, right of=plus2, yshift=-0.25cm] {\ldots};
\draw[shorten <= 2pt, shorten >= 2pt, ->, >=stealth'] (plus2) to (d3);
\node (d4) [right of=plus3, right of=plus3, right of=plus3, yshift=0.25cm] {\ldots};
\draw[shorten <= 2pt, shorten >= 2pt, ->, >=stealth'] (plus3) to (d4);
\node (d5) [right of=plus3, right of=plus3, right of=plus3, yshift=-0.25cm] {\ldots};
\draw[shorten <= 2pt, shorten >= 2pt, ->, >=stealth'] (plus3) to (d5);
\node (d6) [right of=minus1, right of=minus1, right of=minus1, yshift=0.25cm] {\ldots};
\draw[shorten <= 2pt, shorten >= 2pt, ->, >=stealth'] (minus1) to (d6);
\node (d7) [right of=minus1, right of=minus1, right of=minus1, yshift=-0.25cm] {\ldots};
\draw[shorten <= 2pt, shorten >= 2pt, ->, >=stealth'] (minus1) to (d7);
\node (d8) [right of=minus2, right of=minus2, right of=minus2, yshift=0.25cm] {\ldots};
\draw[shorten <= 2pt, shorten >= 2pt, ->, >=stealth'] (minus2) to (d8);
\node (d9) [right of=minus2, right of=minus2, right of=minus2, yshift=-0.25cm] {\ldots};
\draw[shorten <= 2pt, shorten >= 2pt, ->, >=stealth'] (minus2) to (d9);
\node (d10) [right of=minus3, right of=minus3, right of=minus3, yshift=0.25cm] {\ldots};
\draw[shorten <= 2pt, shorten >= 2pt, ->, >=stealth'] (minus3) to (d10);
\node (d11) [right of=minus3, right of=minus3, right of=minus3, yshift=-0.25cm] {\ldots};
\draw[shorten <= 2pt, shorten >= 2pt, ->, >=stealth'] (minus3) to (d11);
\end{tikzpicture}
}
\caption{The reachability graph for the two system players is depicted when only considering either the values of their \MODT index counters or of their \MODT letter counters and differentiating markings depending on if a good marking is reached before. Good markings are colored green.
	All behavior after a good marking (including reaching a bad marking) is winning by definition.
	To compare output indices or letters, only the specific firing sequence in white has to be considered.
	}
\label{fig:linear_scheduling}
\end{figure}
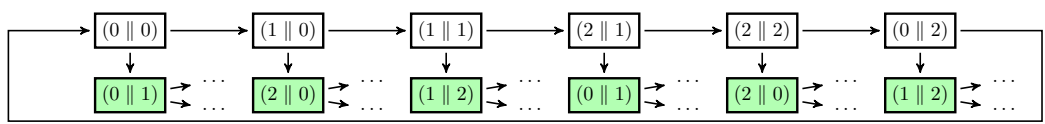

\subsection{Preventing Untruthful Termination}

The good markings to only consider linear firing sequences introduce new possibilities for the system players to be winning.
These possibilities arise when the system players can enforce all firing sequences to reach a good marking.
They occur when a system player terminates without the end symbol ($\oneend$ or $\twoend$) and are called \emph{untruthful termination}.
Untruthful termination is prevented by letting the environment player decide which system player it believes to not terminate with the end symbol or that everything is okay.
This decision happens together with the initial choice of the environment player between checking equality of indices or letters.
Due to the independence of the players, each system player has to behave as if the environment player is anticipating it to untruthfully terminate and has to output the end symbol to avoid this.
Therefore, no untruthful termination can occur.

\subsection{Undecidability Results}

A winning strategy exists in the Petri game iff there exists a solution to the instance of the PCP.
The only strategy with a chance to be winning for the two system players is to output the same solution to the PCP and we can translate solutions between both cases.
We obtain:

\begin{theorem}\label{theo:undecGoodAndBad}
	For Petri games with good and bad markings and at least two system and one environment player, the question if the system players have a winning strategy is undecidable.
\end{theorem}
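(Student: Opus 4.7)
The plan is to reduce the Post correspondence problem (PCP) to the existence of a winning strategy in a Petri game with two system players, one environment player, and good and bad markings. Given a PCP instance with alphabet $\Sigma$ and lists $r_0, \ldots, r_n$ and $v_0, \ldots, v_n$, I construct a Petri game $\pGame$ built from three disjoint gadgets that never share a transition, so that the three players are \emph{independent} and each system player's strategy is a deterministic output behavior that cannot depend on the environment's mode choice or on the other system player.

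The environment gadget lets the lone environment token fire exactly one of four initial transitions fixing a \emph{mode}: (a) compare index outputs, (b) compare letter outputs, (c) monitor system player~1 for truthful termination by $\oneend$, or (d) monitor system player~2 for truthful termination by $\twoend$. After committing to a mode, the environment token is parked in an inert place. Each system player owns a gadget that produces, step by step, an index from $\{0,\ldots,n\}$, the letters of the corresponding word from its list, a separator $\tau$, and eventually its end symbol, with two \MODT counters tracking the number of indices and letters produced. Good markings are exactly the configurations highlighted in \refFig{linear_scheduling}: those in which the pair of index counters (under mode (a)) or the pair of letter counters (under mode (b)) leaves the diagonal ``player~1 ahead of player~2 by $0$ or $1$ step'', together with the single success marking in which both players have already emitted their end symbol and every required equality check has passed. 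Bad markings are, in modes (a) or (b), configurations where the two players just placed distinct index or letter tokens at identical diagonal positions, and, in modes (c) or (d), configurations where the monitored player deadlocks without ever producing its end symbol.

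For the forward direction, a PCP solution $i_1, \ldots, i_l$ induces the strategy in which each system player outputs the fixed sequence $i_1 r_{i_1} \tau \cdots i_l r_{i_l} \tau \oneend$ (respectively with $v$-words and $\twoend$). In modes (c) and (d) no termination bad marking is reached because the sequence ends properly. In modes (a) and (b) every maximal interleaving either leaves the linear diagonal and hits a good marking first, or stays on the diagonal, in which case position-wise matching of indices and letters holds by the PCP property and the success good marking is reached before any mismatch. For the backward direction, independence forces any winning strategy $\sysstrat$ to prescribe, for each system player, a single output behavior that must succeed simultaneously in all four modes. Modes (c) and (d) force both sequences to terminate with their end symbol; mode (a) together with the \MODT linearity good markings forces the two index sequences to have equal length and coincide entry-wise; mode (b) analogously forces the two concatenated letter sequences to coincide; together these yield a genuine PCP solution.

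The main obstacle is the backward direction: one must show that the \MODT good markings really do restrict the asynchronous behavior of the system players to interleavings that perform a position-by-position equality check, despite the players' freedom to reorder outputs between successive index positions. The case analysis underlying \refFig{linear_scheduling} is central here; a MOD-2 counter would collapse the diagonal so aggressively that no synchronous comparison position remains, while \MODT leaves exactly the three witnesses $(0\parallel 0)$, $(1\parallel 1)$, $(2\parallel 2)$ at which equality is actually tested. Formalizing that independence together with the \MODT diagonal forces truthful, position-aligned outputs is the technical heart of the argument.
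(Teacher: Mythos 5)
Your proposal follows essentially the same route as the paper's proof: a reduction from PCP with two independent system players that emit index and letter sequences tracked by \MODT counters, an environment player that fixes a comparison/suspicion mode, off-diagonal counter positions declared good to restrict attention to turn-taking interleavings, diagonal mismatches declared bad, and suspicion modes that force truthful emission of the end symbols. The only (harmless) slip is that ``the monitored player deadlocks without ever producing its end symbol'' is a property of a play rather than of a marking --- the paper instead realizes this check positively, as a good marking that is reached only when the suspected player actually occupies its terminal place --- but since your success marking already makes termination the only way to win in those modes, the argument goes through unchanged.
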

Bad markings can be encoded by system players repeatedly changing to environment players and back.
Players commit to transitions and then system players become environment players.
Environment players either follow the committed transition and fire a transition returning to the respective system and environment players or fire a transition with all other environment players after which no good markings are reachable to encode a bad marking.
We obtain:
\begin{theorem}\label{theo:undecGoodAndThree}
	For Petri games with good markings and at least three players, out of which one is an environment player and each of the other two changes between system and environment player, the question if the system players have a winning strategy is undecidable.
\end{theorem}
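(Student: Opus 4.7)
The plan is to reduce from \refTheo{undecGoodAndBad}. Given an instance $\pGame=(\plS,\plE,\tr,\fl,\init,(\goodmarkings,\badmarkings))$ with two system players and one environment player for which the existence of a winning strategy is already undecidable, I would construct a Petri game $\pGame'$ with one permanent environment player and two alternating players, using only good markings $\goodmarkings'$, such that the system wins in $\pGame$ iff the alternating players win in $\pGame'$.

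Each transition $t\in\tr$ of $\pGame$ is simulated by a two-phase \emph{commit-and-continue} gadget. First, the alternating players participating in $t$ fire a \emph{commit} transition whose postcondition consists of dedicated environment places, so that these players temporarily take on the environment role. In the committed configuration, the three environment players, including the permanent one, jointly choose between two transitions. A \emph{continue} transition restores the alternating players to system-side places encoding the target marking of $t$, completing the simulation. A \emph{punish} transition, synchronizing all three environment players, sends every token into a sink subnet of $\pGame'$ from which no marking in $\goodmarkings'$ is reachable. The punish transition is made available precisely at committed configurations whose underlying $\pGame$-marking lies in $\badmarkings$, which I realize by including in its precondition witness places generated by a local gadget that recognizes membership in the finite, structurally-described $\badmarkings$ of the undecidable instance. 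The set $\goodmarkings'$ is populated by designating, for each $M\in\goodmarkings$, the configuration reached after committing and continuing out of $M$ as a good marking of $\pGame'$.

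Correctness follows in both directions. A winning strategy for the system in $\pGame$ lifts to a strategy in $\pGame'$ that always picks continue at committed configurations; since reachable configurations never correspond to markings in $\badmarkings$, punish is never enabled, every maximal play simulates a winning $\pGame$-play, and hence reaches $\goodmarkings'$. Conversely, a winning strategy for the alternating players in $\pGame'$ must avoid every commit configuration at which punish would be enabled, as the environment would otherwise fire punish and make $\goodmarkings'$ unreachable; projecting such a strategy back to $\pGame$ yields a strategy that reaches $\goodmarkings$ while avoiding $\badmarkings$.

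The main obstacle will be ensuring that the temporary role switch of the alternating players does not grant them more information than causal memory in $\pGame$ permits and that justified refusal and determinism are preserved under both directions of the translation. I would address this by routing each alternating player's commit token to its own private environment place whose postcondition contains only the continue transition and that player's share of the punish transition, so that the player's local causal past in $\pGame'$ coincides with its local causal past in $\pGame$ augmented solely by bounded gadget information. With this discipline in place, strategies in $\pGame$ lift to and project from strategies in $\pGame'$ as described, and \refTheo{undecGoodAndBad} transfers to yield the undecidability stated in \refTheo{undecGoodAndThree}.
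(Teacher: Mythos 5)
Your proposal is correct and follows essentially the same route as the paper: the commit-and-continue gadget corresponds exactly to the paper's intermediate environment places $\pi_p$ with return transitions $\tau_t$, your punish transition to the per-bad-marking transitions $\tau_M$ into a sink place outside all good markings, and the correctness argument in both directions is the same. The only detail you omit is the paper's initial gadget ($\pi'_p$ and $\tau_\init$), needed so that a bad \emph{initial} marking can also be punished, but this is a minor point since your reduction already works for the undecidable instances produced by the first reduction.
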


\section{Conclusion}\label{sec:conclusion}

We have investigated global winning conditions for the synthesis of asynchronous distributed reactive systems with causal memory. 
The general decidability or undecidability of the synthesis problem for these systems is a long-standing open question~\cite{DBLP:conf/icalp/Muscholl15, DBLP:journals/iandc/FinkbeinerO17}. 
We encode the synthesis problem for these systems by Petri games.
For global winning conditions, we achieve a clear picture regarding decidability and undecidability.

From our decidability result and previous work~\cite{DBLP:conf/fsttcs/FinkbeinerG17}, we obtain for bad markings as global winning condition that the question of whether the system players have a winning strategy is decidable for Petri games where the number of system players or the number of environment players is at most one and the number of players of the converse type can be bounded by some arbitrary number.
For bad markings as global winning condition, this leaves the case of Petri games with two or more system players \emph{and} two or more environment players open.

From our undecidability results, we obtain for good markings as global winning condition that the question of whether the system players have a winning strategy is undecidable for Petri games with two or more system players and three or more environment players.
For good markings as global winning condition, this only leaves the corner case of Petri games with at most one system player and at most two environment players open. 

Thus, for the synthesis of asynchronous distributed reactive systems with causal memory, global safety winning conditions are decidable for a large class of such systems, whereas global liveness winning conditions are undecidable for almost all classes of such systems.
In the future, we plan to combine the decidability results for bad markings as global safety winning condition with local liveness specifications per player as in Flow-LTL~\cite{DBLP:conf/atva/FinkbeinerGHO19, DBLP:conf/cav/FinkbeinerGHO20}.



\bibliography{lipics-v2019-sample-article}

\begin{thebibliography}{10}

\bibitem{DBLP:conf/concur/BeutnerFH19}
Raven Beutner, Bernd Finkbeiner, and Jesko Hecking{-}Harbusch.
\newblock Translating asynchronous games for distributed synthesis.
\newblock In {\em 30th International Conference on Concurrency Theory,
  {CONCUR}}, volume 140 of {\em LIPIcs}. Schloss Dagstuhl - Leibniz-Zentrum
  f{\"{u}}r Informatik, 2019.
\newblock \href {https://doi.org/10.4230/LIPIcs.CONCUR.2019.26}
  {\path{doi:10.4230/LIPIcs.CONCUR.2019.26}}.

\bibitem{DBLP:conf/cav/BohyBFJR12}
Aaron Bohy, V{\'{e}}ronique Bruy{\`{e}}re, Emmanuel Filiot, Naiyong Jin, and
  Jean{-}Fran{\c{c}}ois Raskin.
\newblock Acacia+, a tool for {LTL} synthesis.
\newblock In {\em Computer Aided Verification - 24th International Conference,
  {CAV}}, volume 7358 of {\em Lecture Notes in Computer Science}. Springer,
  2012.
\newblock \href {https://doi.org/10.1007/978-3-642-31424-7\_45}
  {\path{doi:10.1007/978-3-642-31424-7\_45}}.

\bibitem{DBLP:conf/soda/ChatterjeeH12}
Krishnendu Chatterjee and Monika Henzinger.
\newblock An \emph{O}(\emph{n}\({}^{\mbox{2}}\)) time algorithm for alternating
  {B{\"{u}}chi} games.
\newblock In {\em Twenty-Third Annual {ACM-SIAM} Symposium on Discrete
  Algorithms, {SODA}}. {SIAM}, 2012.
\newblock \href {https://doi.org/10.1137/1.9781611973099.109}
  {\path{doi:10.1137/1.9781611973099.109}}.

\bibitem{DBLP:conf/tacas/Ehlers11}
R{\"{u}}diger Ehlers.
\newblock Unbeast: Symbolic bounded synthesis.
\newblock In {\em Tools and Algorithms for the Construction and Analysis of
  Systems - 17th International Conference, {TACAS}}, volume 6605 of {\em
  Lecture Notes in Computer Science}. Springer, 2011.
\newblock \href {https://doi.org/10.1007/978-3-642-19835-9\_25}
  {\path{doi:10.1007/978-3-642-19835-9\_25}}.

\bibitem{DBLP:journals/acta/Engelfriet91}
Joost Engelfriet.
\newblock Branching processes of {Petri} nets.
\newblock {\em Acta Informatica}, 28(6), 1991.
\newblock \href {https://doi.org/10.1007/BF01463946}
  {\path{doi:10.1007/BF01463946}}.

\bibitem{DBLP:series/eatcs/EsparzaH08}
Javier Esparza and Keijo Heljanko.
\newblock {\em Unfoldings - {A} Partial-Order Approach to Model Checking}.
\newblock Monographs in Theoretical Computer Science. An {EATCS} Series.
  Springer, 2008.
\newblock \href {https://doi.org/10.1007/978-3-540-77426-6}
  {\path{doi:10.1007/978-3-540-77426-6}}.

\bibitem{DBLP:conf/tacas/FaymonvilleFRT17}
Peter Faymonville, Bernd Finkbeiner, Markus~N. Rabe, and Leander Tentrup.
\newblock Encodings of bounded synthesis.
\newblock In {\em Tools and Algorithms for the Construction and Analysis of
  Systems - 23rd International Conference, {TACAS}}, volume 10205 of {\em
  Lecture Notes in Computer Science}, 2017.
\newblock \href {https://doi.org/10.1007/978-3-662-54577-5\_20}
  {\path{doi:10.1007/978-3-662-54577-5\_20}}.

\bibitem{DBLP:conf/birthday/Finkbeiner15}
Bernd Finkbeiner.
\newblock Bounded synthesis for {Petri} games.
\newblock In {\em Correct System Design - Symposium in Honor of
  Ernst-R{\"{u}}diger Olderog on the Occasion of His 60th Birthday}, volume
  9360 of {\em Lecture Notes in Computer Science}. Springer, 2015.
\newblock \href {https://doi.org/10.1007/978-3-319-23506-6\_15}
  {\path{doi:10.1007/978-3-319-23506-6\_15}}.

\bibitem{DBLP:journals/corr/abs-1711-10637}
Bernd Finkbeiner, Manuel Gieseking, Jesko Hecking{-}Harbusch, and
  Ernst{-}R{\"{u}}diger Olderog.
\newblock Symbolic vs. bounded synthesis for {Petri} games.
\newblock In {\em Sixth Workshop on Synthesis, SYNT@CAV}, volume 260 of {\em
  {EPTCS}}, 2017.
\newblock \href {https://doi.org/10.4204/EPTCS.260.5}
  {\path{doi:10.4204/EPTCS.260.5}}.

\bibitem{DBLP:conf/atva/FinkbeinerGHO19}
Bernd Finkbeiner, Manuel Gieseking, Jesko Hecking{-}Harbusch, and
  Ernst{-}R{\"{u}}diger Olderog.
\newblock Model checking data flows in concurrent network updates.
\newblock In {\em Automated Technology for Verification and Analysis - 17th
  International Symposium, {ATVA}}, volume 11781 of {\em Lecture Notes in
  Computer Science}. Springer, 2019.
\newblock \href {https://doi.org/10.1007/978-3-030-31784-3\_30}
  {\path{doi:10.1007/978-3-030-31784-3\_30}}.

\bibitem{DBLP:conf/cav/FinkbeinerGHO20}
Bernd Finkbeiner, Manuel Gieseking, Jesko Hecking{-}Harbusch, and
  Ernst{-}R{\"{u}}diger Olderog.
\newblock {AdamMC}: A model checker for {Petri} nets with transits against
  {Flow-LTL}.
\newblock In {\em Computer Aided Verification - 32nd International Conference,
  {CAV}}, volume 12225 of {\em Lecture Notes in Computer Science}. Springer,
  2020.
\newblock \href {https://doi.org/10.1007/978-3-030-53291-8\_5}
  {\path{doi:10.1007/978-3-030-53291-8\_5}}.

\bibitem{DBLP:conf/csl/FinkbeinerGHO22}
Bernd Finkbeiner, Manuel Gieseking, Jesko Hecking{-}Harbusch, and
  Ernst{-}R{\"{u}}diger Olderog.
\newblock Global winning conditions in synthesis of distributed systems with
  causal memory.
\newblock In {\em 30th {EACSL} Annual Conference on Computer Science Logic,
  {CSL} 2022}, LIPIcs. Schloss Dagstuhl - Leibniz-Zentrum f{\"{u}}r Informatik,
  2022.

\bibitem{DBLP:conf/cav/FinkbeinerGO15}
Bernd Finkbeiner, Manuel Gieseking, and Ernst{-}R{\"{u}}diger Olderog.
\newblock Adam: Causality-based synthesis of distributed systems.
\newblock In {\em Computer Aided Verification - 27th International Conference,
  {CAV}}, volume 9206 of {\em Lecture Notes in Computer Science}. Springer,
  2015.
\newblock \href {https://doi.org/10.1007/978-3-319-21690-4\_25}
  {\path{doi:10.1007/978-3-319-21690-4\_25}}.

\bibitem{DBLP:conf/fsttcs/FinkbeinerG17}
Bernd Finkbeiner and Paul G{\"{o}}lz.
\newblock Synthesis in distributed environments.
\newblock In {\em 37th {IARCS} Annual Conference on Foundations of Software
  Technology and Theoretical Computer Science, {FSTTCS}}, volume~93 of {\em
  LIPIcs}. Schloss Dagstuhl - Leibniz-Zentrum f{\"{u}}r Informatik, 2017.
\newblock \href {https://doi.org/10.4230/LIPIcs.FSTTCS.2017.28}
  {\path{doi:10.4230/LIPIcs.FSTTCS.2017.28}}.

\bibitem{DBLP:journals/iandc/FinkbeinerO17}
Bernd Finkbeiner and Ernst{-}R{\"{u}}diger Olderog.
\newblock Petri games: Synthesis of distributed systems with causal memory.
\newblock {\em Inf. Comput.}, 253, 2017.
\newblock \href {https://doi.org/10.1016/j.ic.2016.07.006}
  {\path{doi:10.1016/j.ic.2016.07.006}}.

\bibitem{DBLP:conf/fsttcs/GastinLZ04}
Paul Gastin, Benjamin Lerman, and Marc Zeitoun.
\newblock Distributed games with causal memory are decidable for
  series-parallel systems.
\newblock In {\em {FSTTCS}: Foundations of Software Technology and Theoretical
  Computer Science}, volume 3328 of {\em Lecture Notes in Computer Science}.
  Springer, 2004.
\newblock \href {https://doi.org/10.1007/978-3-540-30538-5\_23}
  {\path{doi:10.1007/978-3-540-30538-5\_23}}.

\bibitem{DBLP:conf/icalp/GenestGMW13}
Blaise Genest, Hugo Gimbert, Anca Muscholl, and Igor Walukiewicz.
\newblock Asynchronous games over tree architectures.
\newblock In {\em Automata, Languages, and Programming - 40th International
  Colloquium, {ICALP}}, volume 7966 of {\em Lecture Notes in Computer Science}.
  Springer, 2013.
\newblock \href {https://doi.org/10.1007/978-3-642-39212-2\_26}
  {\path{doi:10.1007/978-3-642-39212-2\_26}}.

\bibitem{DBLP:conf/tacas/GiesekingHY21}
Manuel Gieseking, Jesko Hecking{-}Harbusch, and Ann Yanich.
\newblock A web interface for {Petri} nets with transits and {Petri} games.
\newblock In {\em Tools and Algorithms for the Construction and Analysis of
  Systems - 27th International Conference, {TACAS}}, volume 12652 of {\em
  Lecture Notes in Computer Science}. Springer, 2021.
\newblock \href {https://doi.org/10.1007/978-3-030-72013-1\_22}
  {\path{doi:10.1007/978-3-030-72013-1\_22}}.

\bibitem{DBLP:conf/fsttcs/Gimbert17}
Hugo Gimbert.
\newblock On the control of asynchronous automata.
\newblock In {\em 37th {IARCS} Annual Conference on Foundations of Software
  Technology and Theoretical Computer Science, {FSTTCS}}, volume~93 of {\em
  LIPIcs}. Schloss Dagstuhl - Leibniz-Zentrum f{\"{u}}r Informatik, 2017.
\newblock \href {https://doi.org/10.4230/LIPIcs.FSTTCS.2017.30}
  {\path{doi:10.4230/LIPIcs.FSTTCS.2017.30}}.

\bibitem{DBLP:journals/iandc/GoltzR83}
Ursula Goltz and Wolfgang Reisig.
\newblock The non-sequential behavior of {Petri} nets.
\newblock {\em Inf. Control.}, 57(2/3):125--147, 1983.
\newblock \href {https://doi.org/10.1016/S0019-9958(83)80040-0}
  {\path{doi:10.1016/S0019-9958(83)80040-0}}.

\bibitem{DBLP:conf/atva/Hecking-Harbusch19}
Jesko Hecking{-}Harbusch and Niklas~O. Metzger.
\newblock Efficient trace encodings of bounded synthesis for asynchronous
  distributed systems.
\newblock In {\em Automated Technology for Verification and Analysis - 17th
  International Symposium, {ATVA}}, volume 11781 of {\em Lecture Notes in
  Computer Science}. Springer, 2019.
\newblock \href {https://doi.org/10.1007/978-3-030-31784-3\_22}
  {\path{doi:10.1007/978-3-030-31784-3\_22}}.

\bibitem{DBLP:journals/corr/Tentrup16}
Jesko Hecking{-}Harbusch and Leander Tentrup.
\newblock Solving {QBF} by abstraction.
\newblock In {\em Ninth International Symposium on Games, Automata, Logics, and
  Formal Verification, {GandALF}}, volume 277 of {\em {EPTCS}}, 2018.
\newblock \href {https://doi.org/10.4204/EPTCS.277.7}
  {\path{doi:10.4204/EPTCS.277.7}}.

\bibitem{DBLP:journals/corr/abs-1904-07736}
Swen Jacobs, Roderick Bloem, Maximilien Colange, Peter Faymonville, Bernd
  Finkbeiner, Ayrat Khalimov, Felix Klein, Michael Luttenberger, Philipp~J.
  Meyer, Thibaud Michaud, Mouhammad Sakr, Salomon Sickert, Leander Tentrup, and
  Adam Walker.
\newblock The 5th reactive synthesis competition {(SYNTCOMP} 2018): Benchmarks,
  participants {\&} results.
\newblock {\em CoRR}, abs/1904.07736, 2019.
\newblock \href {http://arxiv.org/abs/1904.07736} {\path{arXiv:1904.07736}}.

\bibitem{DBLP:conf/cav/JobstmannGWB07}
Barbara Jobstmann, Stefan~J. Galler, Martin Weiglhofer, and Roderick Bloem.
\newblock Anzu: {A} tool for property synthesis.
\newblock In {\em Computer Aided Verification, 19th International Conference,
  {CAV}}, volume 4590 of {\em Lecture Notes in Computer Science}. Springer,
  2007.
\newblock \href {https://doi.org/10.1007/978-3-540-73368-3\_29}
  {\path{doi:10.1007/978-3-540-73368-3\_29}}.

\bibitem{DBLP:journals/acta/KhomenkoKV03}
Victor Khomenko, Maciej Koutny, and Walter Vogler.
\newblock Canonical prefixes of {Petri} net unfoldings.
\newblock {\em Acta Informatica}, 40(2), 2003.
\newblock \href {https://doi.org/10.1007/s00236-003-0122-y}
  {\path{doi:10.1007/s00236-003-0122-y}}.

\bibitem{DBLP:journals/acta/LuttenbergerMS20}
Michael Luttenberger, Philipp~J. Meyer, and Salomon Sickert.
\newblock Practical synthesis of reactive systems from {LTL} specifications via
  parity games.
\newblock {\em Acta Informatica}, 57(1-2), 2020.
\newblock \href {https://doi.org/10.1007/s00236-019-00349-3}
  {\path{doi:10.1007/s00236-019-00349-3}}.

\bibitem{DBLP:conf/concur/MadhusudanT02}
P.~Madhusudan and P.~S. Thiagarajan.
\newblock A decidable class of asynchronous distributed controllers.
\newblock In {\em {CONCUR} - Concurrency Theory}, volume 2421 of {\em Lecture
  Notes in Computer Science}. Springer, 2002.
\newblock \href {https://doi.org/10.1007/3-540-45694-5\_11}
  {\path{doi:10.1007/3-540-45694-5\_11}}.

\bibitem{DBLP:conf/fsttcs/MadhusudanTY05}
P.~Madhusudan, P.~S. Thiagarajan, and Shaofa Yang.
\newblock The {MSO} theory of connectedly communicating processes.
\newblock In {\em {FSTTCS}: Foundations of Software Technology and Theoretical
  Computer Science}, volume 3821 of {\em Lecture Notes in Computer Science}.
  Springer, 2005.
\newblock \href {https://doi.org/10.1007/11590156\_16}
  {\path{doi:10.1007/11590156\_16}}.

\bibitem{DBLP:journals/tcs/MeseguerMS96}
Jos{\'{e}} Meseguer, Ugo Montanari, and Vladimiro Sassone.
\newblock Process versus unfolding semantics for place/transition {Petri} nets.
\newblock {\em Theor. Comput. Sci.}, 153(1{\&}2), 1996.
\newblock \href {https://doi.org/10.1016/0304-3975(95)00121-2}
  {\path{doi:10.1016/0304-3975(95)00121-2}}.

\bibitem{LTLsynt}
Thibaud Michaud and Maximilien Colange.
\newblock Reactive synthesis from {LTL} specification with {Spot}.
\newblock In {\em 7th Workshop on Synthesis, SYNT@CAV}, 2018.

\bibitem{DBLP:conf/icalp/Muscholl15}
Anca Muscholl.
\newblock Automated synthesis of distributed controllers.
\newblock In {\em Automata, Languages, and Programming - 42nd International
  Colloquium, {ICALP}}, volume 9135 of {\em Lecture Notes in Computer Science}.
  Springer, 2015.
\newblock \href {https://doi.org/10.1007/978-3-662-47666-6\_2}
  {\path{doi:10.1007/978-3-662-47666-6\_2}}.

\bibitem{DBLP:conf/fsttcs/MuschollW14}
Anca Muscholl and Igor Walukiewicz.
\newblock Distributed synthesis for acyclic architectures.
\newblock In {\em 34th International Conference on Foundation of Software
  Technology and Theoretical Computer Science, {FSTTCS}}, volume~29 of {\em
  LIPIcs}. Schloss Dagstuhl - Leibniz-Zentrum f{\"{u}}r Informatik, 2014.
\newblock \href {https://doi.org/10.4230/LIPIcs.FSTTCS.2014.639}
  {\path{doi:10.4230/LIPIcs.FSTTCS.2014.639}}.

\bibitem{DBLP:journals/tcs/NielsenPW81}
Mogens Nielsen, Gordon~D. Plotkin, and Glynn Winskel.
\newblock Petri nets, event structures and domains, {Part} {I}.
\newblock {\em Theor. Comput. Sci.}, 13, 1981.
\newblock \href {https://doi.org/10.1016/0304-3975(81)90112-2}
  {\path{doi:10.1016/0304-3975(81)90112-2}}.

\bibitem{DBLP:conf/focs/Pnueli77}
Amir Pnueli.
\newblock The temporal logic of programs.
\newblock In {\em 18th Annual Symposium on Foundations of Computer Science
  {FOCS}}, pages 46--57. {IEEE} Computer Society, 1977.
\newblock \href {https://doi.org/10.1109/SFCS.1977.32}
  {\path{doi:10.1109/SFCS.1977.32}}.

\bibitem{DBLP:conf/focs/PnueliR90}
Amir Pnueli and Roni Rosner.
\newblock Distributed reactive systems are hard to synthesize.
\newblock In {\em 31st Annual Symposium on Foundations of Computer Science
  {FOCS}}. {IEEE} Computer Society, 1990.
\newblock \href {https://doi.org/10.1109/FSCS.1990.89597}
  {\path{doi:10.1109/FSCS.1990.89597}}.

\bibitem{post1946variant}
Emil~L. Post.
\newblock A variant of a recursively unsolvable problem.
\newblock {\em Bull. Am. Math. Soc.}, 52(4):264--268, 1946.

\bibitem{DBLP:journals/jcss/Reif84}
John~H. Reif.
\newblock The complexity of two-player games of incomplete information.
\newblock {\em J. Comput. Syst. Sci.}, 29(2):274--301, 1984.
\newblock \href {https://doi.org/10.1016/0022-0000(84)90034-5}
  {\path{doi:10.1016/0022-0000(84)90034-5}}.

\bibitem{DBLP:books/sp/Reisig85a}
Wolfgang Reisig.
\newblock {\em Petri Nets: An Introduction}, volume~4 of {\em {EATCS}
  Monographs on Theoretical Computer Science}.
\newblock Springer, 1985.
\newblock \href {https://doi.org/10.1007/978-3-642-69968-9}
  {\path{doi:10.1007/978-3-642-69968-9}}.

\bibitem{DBLP:journals/ipl/Schewe14}
Sven Schewe.
\newblock Distributed synthesis is simply undecidable.
\newblock {\em Inf. Process. Lett.}, 114(4), 2014.
\newblock \href {https://doi.org/10.1016/j.ipl.2013.11.012}
  {\path{doi:10.1016/j.ipl.2013.11.012}}.

\bibitem{DBLP:journals/ita/Zielonka87}
Wieslaw Zielonka.
\newblock Notes on finite asynchronous automata.
\newblock {\em {ITA}}, 21(2), 1987.
\newblock \href {https://doi.org/10.1051/ita/1987210200991}
  {\path{doi:10.1051/ita/1987210200991}}.

\end{thebibliography}

\appendix
\allowdisplaybreaks

\section{Multisets}\label{sec:multisets}

We recall multisets as an extension of sets to allow repetitions of objects.
In a set, each object is either in the set or not, whereas, in a multiset, each object can additionally occur repeatedly.
In both sets and multisets, the order of the objects in the set or multiset does not matter.
Formally, a \emph{multiset} $M$ over a set $S$ is a function $M : S \rightarrow \mathbb{N}$.
We write $s \in M$ if $M(s) > 0$.
A \emph{set} is a $\{0,1\}$-valued multiset, and vice versa.
We use the following notations:
The \emph{empty multiset} $\emptyset$ is defined as $\emptyset(s) = 0$ for all $s \in S$.
For two multisets~$M$ and~$N$ over~$S$, \emph{multiset inclusion} $M \subseteq N$ is defined as $\forall s \in S : M(s) \leq N(s)$, \emph{multiset addition} $M + N$ is defined as $(M + N)(s) = M(s) + N(s)$ for all $s \in S$, \emph{multiset difference} $M - N$ is defined as $(M - N)(s) = \max(0, M(s) - N(s))$ for all $s \in S$, \emph{multiset intersection} $M \cap N$ is defined as $(M \cap N)(s) = \min(M(s), N(s)) $ for all $s \in S$, and \emph{multiset union} $M \cup N$ is defined as $(M \cup N)(s) = \max(M(s), N(s)) $ for all $s \in S$.

\section{Occurrence Nets, Branching Processes, and Unfoldings}\label{sec:unfolding}

We repeat the definitions of occurrence nets, branching processes, and unfoldings.
As the definitions result in safe Petri nets, we recall $k$-bounded and safe Petri nets.
For some $k \in \mathbb{N}$, a Petri net $\pNet$ is \emph{$k$-bounded} if $M(p) \leq k$ holds for all reachable markings $M \in \reach(\pNet)$ and all places $p \in \pl$.
The Petri net is \emph{bounded} if it is $k$-bounded for some given $k$; otherwise it is \emph{unbounded}.
The Petri net is \emph{safe} if it is $1$-bounded, i.e., places contain at most one token for all reachable markings in the Petri net.

An \emph{occurrence net} is a Petri net~$\pNet$, where the precondition and postcondition of all transitions are sets, the initial marking coincides with places without ingoing transitions ($\forall p \in \pl : p\in\init \Leftrightarrow |\pre{}{p}| = 0$), all other places have exactly one ingoing transition (${\forall p \in \pl \setminus \init : |\pre{}{p}| = 1}$), the inverse flow relation $\fl^{-1}$ is \emph{well-founded} (starting from any given node, no infinite path following the inverse flow relation exists), and no transition is in self-conflict.

An \emph{initial homomorphism} from $\pNet^1$ to $\pNet^2$ is a function $\lambda : \pl^1\cup \tr^1 \rightarrow \pl^2\cup \tr^2$ that respects node types ($\lambda(\pl^1) \subseteq \pl^2 \land \lambda(\tr^1) \subseteq \tr^2$),
is structure-preserving on transitions ($\forall t \in \tr^1 : \lambda[\pre{1}{t}] = \pre{2}{\lambda(t)} \land \lambda[\post{1}{t}] = \post{2}{\lambda(t)}$),
and agrees on the initial markings ($\lambda[\init^1] = \init^2$).

An occurrence net is a safe net.
If nodes $x \neq y$ of an occurrence nets are in conflict, they are mutually exclusive, i.e., there is a nondeterministic decision between $x$ and $y$.

A branching process \cite{DBLP:journals/acta/Engelfriet91, DBLP:journals/tcs/MeseguerMS96, DBLP:series/eatcs/EsparzaH08} describes parts of possible behaviors of a Petri net.
Formally, an \emph{(initial) branching process} of a Petri net $\pNet$ is a pair $\iota = (\pNet^\iota,\lambda^\iota)$ where $\pNet^\iota$ is an occurrence net and $\lambda^\iota :\pl^\iota\cup \tr^\iota\rightarrow\pl\cup \tr$ is an initial homomorphism  from $\pNet^\iota$ to $\pNet$ that is injective on transitions with the same precondition ($\forall t, t' \in \tr^\iota : (\pre{\iota}{t} = \pre{\iota}{t'} \land \lambda^\iota(t) = \lambda^\iota(t')) \Rightarrow t = t'$).
Intuitively, whenever a node can be reached on two distinct paths in a Petri net, it is split up in the branching process of the Petri net.
The initial homomorphism $\lambda^\iota$~can be thought of as label of the copies into nodes of $\pNet$.
The injectivity condition avoids additional unnecessary splits: Each transition must either be labeled differently or occur from different preconditions.

The \emph{unfolding} $\unf  =(\pNet^U, \lambda^U)$ of $\pNet$ is a maximal branching process: Whenever there is a set of pairwise concurrent places $C$ such that $\lambda^U[C] = \pre{\pNet}{t}$ for some transition $t \in \tr$, then there exists $t' \in \pNet^{U}$ with $\lambda^U(t')=t$ and $\pre{U}{t'} = C$.
It represents every possible behavior of $\pNet$.
Note that even finite Petri nets may have infinite unfoldings due to cycles.

Let $\iota_1 =(\pNet^1,\lambda^1)$ and $\iota_2 =(\pNet^2,\lambda^2)$ be two branching processes of $\pNet$.
A homomorphism from $\iota_1$ to $\iota_2$ is a homomorphism $h$ from $\pNet^1$ to $\pNet^2$ such that $\lambda^1=\lambda^2 \circ h$.
It is called \emph{initial} if $h$ is initial; it is an \emph{isomorphism} if $h$ is an isomorphism.
Two branching processes $\iota_1$ and $\iota_2$ are \emph{isomorphic} if there exists an initial isomorphism from $\iota_1$ to $\iota_2$.
A branching process $\iota_1$ \emph{approximates} a branching process $\iota_2$ if there exists an initial injective homomorphism from $\iota_1$ to $\iota_2$.
A branching process $\iota_1$ is a \emph{subprocess} of a branching process $\iota_2$ if $\iota_1$ approximates $\iota_2$ with the identity on $\pl^1 \cup \tr^1$ as the homomorphism.
Thus, $\pNet^1 \sqsubseteq \pNet^2$ and $\lambda^1 = \lambda^2 \upharpoonright (\pl^1\cup\tr^1)$.
If $\iota_1$ approximates $\iota_2$, then $\iota_1$ is isomorphic to a subprocess of $\iota_2$.
In \cite{DBLP:journals/acta/Engelfriet91}, it is shown that the unfolding $\iota_U = (\pNet^U, \lambda^U)$ of a net is unique up to isomorphism and that every initial branching process $\iota$ of~$\pNet$ approximates $\iota_U$.
Thus, $\iota$ is a subprocess of $\unf$ up to isomorphism.

\section{Decidability in Petri Games with Bad Markings}\label{sec:decidability}

We give the formal reduction from Petri games with a bounded number of system players, one environment player, and bad markings to Büchi games.
Let $\maxSys$ be the \emph{maximal number of system players} visible at the same time in the Petri game.

\subsection{Decision Tuples}

A \emph{decision tuple} for a player consists of an \emph{identifier}, a \emph{position}, a \emph{\typeTwo status}, a \emph{decision}, and a \emph{representation of the last mcut}.
It has type $\{1, \ldots, \maxSys\} \times \plS \times \{\False, \True, \End\} \times (\pom{\tr} \cup \{\top\}) \times \{1, \ldots, \maxSys\}$ for system places and type $\{0\} \times \plE \times \{\False\} \times \pom{\tr} \times \{0\}$ for environment places.

\begin{definition}[Decision tuples]%
	The set $\decisionsets_S$ of \emph{system decision tuples} is defined as  $\decisionsets_S = \{ (\id, p, b, T, K) \mid \id \in \{1, \ldots, \maxSys\} \land p \in \plS \land b \in \{\False, \True, \End\} \land (T \subseteq \post{}{p} \lor T = \top ) \land K \in \{1, \ldots, \maxSys\} \}$.
	The set $\decisionsets_E$ of \emph{environment decision tuples} is defined as $\decisionsets_E =\{ (0, p, \False, \post{}{p}, 0) \mid p \in \plE \}$.
	The set $\decisionsets_S$ of \emph{decision tuples} is defined as $\decisionsets = \decisionsets_S \cup \decisionsets_E$.
\end{definition}

We define the following functions to retrieve the respective elements of a decision tuple~$D$:
For $D=(\id, p, b, T, K)$, $\DSid$ obtains the first element representing the identifier, i.e., $\DSid(D)=\id$, $\DSpl$ obtains the second element representing the place, i.e., $\DSpl(D)=p$, $\DStypetwo$ the third element representing the \typeTwo status, i.e., $\DStypetwo(D)=b$, $\DSdec$ the fourth element representing the decision, i.e., $\DSdec(D)=T$, and $\DSlastmcut$ the fifth element representing the last mcut, i.e., $\DSlastmcut(D)=K$.

\subsection{Enabledness of Transitions from Decision Markings}

We use decision markings as subset of the set of decision tuples as representation of the marking of the Petri game.
Thus, we are only interested in decision markings that correspond to a reachable marking in the Petri game and where each identifier of players occurs at most once.
We define the underlying \emph{marking}~$\DSmarking(\bbD)$ of a decision marking~$\bbD$ as $\DSmarking(\bbD)(p) = |\{D \in \bbD \mid \DSpl(D) = p\}| $ for all places $p \in \pl$.

\begin{definition}[Decision markings]%
	The set $\decsetgame$ of \emph{decision markings corresponding to reachable markings in $\pGame$ and with unique identifiers of players} is defined as \[\decsetgame = \{ \bbD \subseteq \decisionsets \mid \exists M \in \reach(\pNet) : \DSmarking(\bbD) = M \land \forall i \in \{1, \ldots, \maxSys\} : |\{D \in \bbD \mid \DSid(D) = i\}| \leq 1 \}.\]
\end{definition}

We define the \emph{enabledness} of a transition $t$ from a decision marking~$\bbD$.
We first remove decision tuples from $\bbD$ that are not in the precondition of~$t$, that have $\top$ as their decision, that disallow $t$, or that have ended \typeTwo status.
Formally, we introduce the decision marking~$\bbD_{\pre{}{t}}$ to retain only decision tuples that represent a place in the precondition of $t$ and that allow $t$ as $\bbD_{\pre{}{t}} = \{ (\id, p, b, T, K) \in \decisionsets \mid (\id, p, b, T, K) \in \bbD \land t \in T \land p \in \pre{}{t} \land b \neq \End \}$.
Now, we can check the enabledness of a transition $t$ from a decision marking~$\bbD$ by $\pre{}{t} \subseteq \DSmarking(\bbD_{\pre{}{t}}) $.
We also define the decision marking~$\bbD_{t2}$ to retain only decision tuples with positive \typeTwo status as $\bbD_{t2}(\id, p, b, T, K) = \{ (\id, p, b, T, K) \in \decisionsets \mid (\id, p, b, T, K) \in \bbD \land b = \True \}$.
For the enabledness of a transition in the \typeTwo case, we introduce the decision marking~$\bbD_{\pre{}{t} \land t2}$ to retain only decision tuples with positive \typeTwo status that represent a place in the precondition of $t$ and that allow $t$ as $\bbD_{\pre{}{t} \land t2} = \bbD_{\pre{}{t}} \cap \bbD_{t2} $.

\subsection{Decision Markings Corresponding to Mcuts}

We call a decision marking~$\bbD$ \emph{corresponding to an mcut} when no $\top$ exists in $\bbD$, every transition with only system places in its precondition is either not enabled or not allowed by a participating system player in $\bbD$, and no positive \typeTwo status exists in $\bbD$.

\begin{definition}[Decision markings corresponding to an mcut]%
	A decision marking~$\bbD$ \emph{corresponds to an mcut} iff $ ( \forall D \in \bbD : \DSdec(D) \neq \top \land \DStypetwo(D) \neq \True ) \land  \forall t \in \tr : \pre{}{t} \not\subseteq \DSmarking(\bbD_{\pre{}{t}}) \lor \pre{}{t} \cap \plE \neq \emptyset $.
\end{definition}

This definition can be expressed simpler than in the original paper on Petri games~\cite{DBLP:journals/iandc/FinkbeinerO17} as the \typeTwo case is handled directly in the Büchi game.

\subsection{Backward Moves}\label{sec:backwardmoves}

We define \emph{backward moves} $\backwardrules$ as pairs of decision markings.
Backward moves can be based on transitions firing including at the start and at the end of the \typeTwo case.

\begin{definition}[Backward moves for transitions firing]%
	The set $\backwardrules_\tr$ of \emph{backward moves for transitions firing} is defined as
	\begin{align*}
	\backwardrules&_\tr= \{(\bbD, \bbD') \in \pom{\decisionsets_S} \times \pom{\decisionsets_S} \mid
	( \exists t \in \tr :
	\DSmarking(\bbD) = \pre{}{t} \land \DSmarking(\bbD') = \post{}{t} )~\land\\&
	( \forall D \in \bbD : t \in \DSdec(D)) \land
	( \forall D \in \bbD \cup \bbD' : \DSdec(D) \neq \top )~\land\\&
	 (( (\forall D \in \bbD  : \DStypetwo(D) = \False) \land (\forall D \in \bbD' : \DStypetwo(D) = \False \lor \DStypetwo(D) = \True) )~\lor\\&
	  ( (\forall D \in \bbD  : \DStypetwo(D) = \True) \land ((\forall D \in \bbD' : \DStypetwo(D) = \True) \lor (\forall D \in \bbD'  : \DStypetwo(D) = \End)) )) \}.
	\end{align*}
\end{definition}
Each backward move has to be based on a transition $t$.
All decision tuples of a backward move cannot have $\top$ as their decision.
The decision tuples before the transition further have to allow the transition.
All decision tuples on the left side of a backward move can have \typeTwo status $\False$ and all decision tuples on the right side have \typeTwo status $\False$ or $\True$.
Alternatively, all decision tuples on the left side of a backward move can have \typeTwo status $\True$ and all decision tuples on the right side have \typeTwo status $\True$ or $\End$.
Notice that \typeTwo status $\End$ cannot occur on the left side of a backward move.

\begin{definition}[Backward moves for the start and end of the \TypeTwo case]%
	The set $\backwardrules_{T2}$ of \emph{backward moves for the start and end of the \typeTwo case} is defined as
	\begin{align*}
	\backwardrules&_{T2}= \{(\bbD, \bbD') \in \pom{\decisionsets_S} \times \pom{\decisionsets_S} \mid\\&
	(
	(\forall D \in \bbD : \DStypetwo(D) = \False)\land
	(\forall D \in \bbD' : \DStypetwo(D) = \True)~\land\\&
	(\forall \id \in \{1, \ldots, \maxSys\}, p \in \plS, d \subseteq \post{}{t}, K \in \{1, \ldots, \maxSys\} :\\&
	(\id, p, d, \False, K) \in \bbD \Leftrightarrow (\id, p, d, \True, K) \in \bbD' )
	)~\lor\\&
	(
	(\forall D \in \bbD : \DStypetwo(D) = \True) \land
	(\forall D \in \bbD' : \DStypetwo(D) = \End)~\land\\&
	(\forall \id \in \{1, \ldots, \maxSys\}, p \in \plS, d \subseteq \post{}{t}, K \in \{1, \ldots, \maxSys\} :\\&
	\bbD((\id, p, d, \True, K)) = \bbD'((\id, p, d, \End, K)) )
	)
	 \}.
	\end{align*}
\end{definition}
Backward moves for the start and end of the \typeTwo case reverse the change of all decision tuples in a decision marking from negative to positive or from positive to ended \typeTwo status without changing the identifier, the positions, decisions of the decision tuples, or the last mcut.
This case becomes necessary because not all players of the \typeTwo case are in the precondition of the transition starting the \typeTwo case or the postcondition of the transitions ending the \typeTwo case.

\begin{definition}[Backward moves]%
	The set $\backwardrules$ of \emph{backward moves} is defined as $\backwardrules = \backwardrules_\tr \cup \backwardrules_{T2}$.
\end{definition}

\subsection{States in the Büchi Game}\label{sec:states}

\begin{definition}[States in the Büchi game]\label{def:BGstates}%
	The \emph{states $\TPstates$ in the Büchi game} are defined as \[\TPstates = \TPstates_\mathit{BN} \cup \decsetgame \times (\plS \rightarrow \{0, \ldots, k\}) \times (\backwardrules^*)^{\maxSys} \times \{1,\ldots,\maxSys \}\]
	with $\TPstates_\mathit{BN} = \{F_B, F_N\}$ ($B$ and $N$ stand for \emph{Büchi} and \emph{non-Büchi}).
\end{definition}

The sequences of backward moves $\backwardrules^*$ are of finite size because we do not allow useless repetition.
We will later see that they are limited to be of at most single exponential length in terms of the size of the Petri game.

We define the following functions to access the elements of a state $\TPoneState \in \TPstates \setminus \TPstates_\mathit{BN}$ in a Büchi game with $\TPoneState = (\bbD, M_{T2}, \BR_1, \ldots, \BR_{\maxSys}) $:
The function $\TPdecset$ obtains the first element representing the decision tuple, i.e., $\TPdecset(\TPoneState) = \bbD $.
The function $\TPtypeTwoMarking$ obtains the second element representing the marking that Player~0 claims will repeat itself in the \typeTwo case, i.e., $\TPtypeTwoMarking(\TPoneState) = M_{T2}$.
The function $\TPbackRulesGen_i$ obtains the next $\maxSys$ elements representing the sequences of backward moves, i.e., $\TPbackRulesGen_i(\TPoneState) = \BR_i $ for $i \in \{1, \ldots, \maxSys \} $.

\begin{definition}[States of Player~0 and states of Player~1]%
The \emph{states $\TPenvstates$ of Player~1} are defined as all states in $\TPstates$ where the decision marking corresponds to an mcut:
\[\TPenvstates = \{ \TPoneState \mid \TPdecset(\TPoneState) \text{ corresponds to an mcut} \}\]
The \emph{states $\TPsysstates$ of Player~0} are defined as all other states:
\[\TPsysstates = \TPstates \setminus \TPenvstates\]
\end{definition}

\begin{definition}[Initial state in the Büchi game]%
	The \emph{initial state $\TPinit$} in the Büchi game is defined as
	\[\TPinit = ( \bbD^S_\init \cup \{ (0, p_E, \False, \post{}{p_E}, 0) \mid p_E \in \init \cap \plE \}, \emptyset, []^{\maxSys})\]
	with $\bbD^S_\init \subseteq \pom{\decisionsets_S}$ such that
	\begin{align*}
		&(\DSmarking(\bbD^S_\init) =\init \setminus \plE) \land (\forall D \in \bbD^S_\init: \DStypetwo(D) = \False \land \DSdec = \top \land \DSlastmcut(D) = 1)~\land \\&
		(\forall i \in \{1, \ldots, \maxSys\} : |\bbD^S_\init \cap \{D \in \decisionsets_S  \mid \id(D) = i\}| \leq 1)
	\end{align*}
	and the system players in $\bbD^S_\init$ are ordered arbitrarily but fixed (e.g., lexicographically) according to their places and identifiers with the lowest possible sum are used.
\end{definition}
The \emph{initial state in the Büchi game} contains decision tuples for all places of the initial marking, an empty marking to repeat, and $\maxSys$ many empty sequences of backward moves.
The constraints on the the selection of unique identifiers of the players are only used to obtain a single initial state in the Büchi game.

\begin{definition}[Accepting states in the Büchi game]%
The \emph{accepting states $\TPfinal$ in the Büchi game} are defined as $F_B$ and all states corresponding to an mcut:
\[\TPfinal = \{F_B\} \cup \{ \TPoneState \in \TPstates \mid \TPdecset(\TPoneState) \text{ corresponds to an mcut} \} \]
\end{definition}

\subsection{Finite Winning and Losing Behavior in the Büchi Game}

We define finite winning and losing behavior in the Büchi game.
Finite losing behavior includes bad markings and nondeterministic decisions.
These two cases require to check the via backward moves reachable decision markings.
Therefore, we define a Petri net that has the via backward moves reachable decision markings as reachable markings.

\begin{definition}[Via backward moves reachable decision markings]%
Given a set of sequences of backward moves $\BR_1, \ldots, \BR_n$ and a decision marking~$\bbD_\init$, the $k$-bounded Petri net $\pNet_{\BR_1, \ldots, \BR_n}[\bbD_\init] = (\pl, \tr, \fl, \init)$ to calculate the \emph{via backward moves reachable decision markings} starting from $\bbD_\init$ is defined as follows.
\begin{itemize}
	\item We initialize $\pNet_{\BR_1, \ldots, \BR_n}[\bbD_\init]$ with $\pl = \bbD_\init$, $\tr = \emptyset$, $\fl = \emptyset$, and $\init = \bbD_\init$ and define the current marking $M = \bbD_\init$.
	\item We apply the following step recursively:
We check for the backward moves $(\bbD, \bbD')$ from the end of the sequences of backward moves $\BR_1, \ldots, \BR_n$ whether $\bbD'$ is contained in the current marking $M$ and whether all participating players in $\bbD'$ with identifier $\id$ have backward move $(\bbD, \bbD')$ at the end of $\BR_\id$.
If this is the case, then we add each decision tuple of $\bbD$ to $\pl$, $t_{(\bbD', \bbD)}$ to $\tr$, and $\{ (D', t_{(\bbD', \bbD)} ) \mid D' \in \bbD' \}$ and $\{ (t_{(\bbD', \bbD)},  D) \mid D \in \bbD \}$ to~$\fl$, we remove $(\bbD, \bbD')$ from the end of $\BR_\id$ for all participating players in $\bbD'$ with identifier $\id$, and we update the current marking $M$ by setting it to $(M - \bbD') + \bbD$.
We continue with the next backward move.
\end{itemize}
\end{definition}

Notice that we reverse the backward moves to apply standard forward algorithms to calculate the set of reachable markings.
Notice further that when more than one backward move is applicable from the current marking, then the second backward move is also applicable from the current marking obtained after adding the first backward move to the Petri net, because the backward moves are disjoint.
The algorithm terminates as the sequences of backward moves are finite and each iteration of the algorithm makes the sequences smaller.

\begin{definition}[Finite winning and losing behavior (non-\TypeTwo case)]%
\emph{Terminated states $\mathit{TERM}$}, \emph{deadlocked states $\mathit{DL}$}, \emph{states $\mathit{NDET}$ corresponding to nondeterministic decisions}, and \emph{states $\mathit{BAD}$ corresponding to a bad marking} are defined as follows:
\begin{align*}
	\TPstates' =& \{ \TPoneState \in \TPstates \mid \forall D \in \TPdecset(\TPoneState) : \DSdec(D) \neq \top \} \\
	\mathit{TERM} =& \{ \TPoneState \in \TPstates' \mid \forall t \in \tr : \pre{}{t} \not\subseteq \DSmarking(\TPdecset(\TPoneState)) \}\\
	\mathit{DL} =&  \{ \TPoneState \in \TPstates' \mid \forall t \in \tr : \pre{}{t} \not\subseteq \DSmarking(\TPdecset(\TPoneState)_{\pre{}{t}}) \}\\
	\mathit{NDET} =& \{ \TPoneState \in \TPstates' \mid \exists \bbD \in \reach(\pNet_{\TPbackRulesGen_1(\TPoneState), \ldots, \TPbackRulesGen_{\maxSys}(\TPoneState)}[\TPdecset(\TPoneState)]), D \in \bbD : \DSpl(D) \in \plS~\land\\& ((\exists t_1, t_2 \in \post{}{\DSpl(D)} : t_1 \neq t_2 \land \pre{}{t_1} \subseteq \DSmarking(\bbD_{\pre{}{t_1}})~\land\\& \pre{}{t_2} \subseteq \DSmarking(\bbD_{\pre{}{t_2}}) ) \lor (\exists t \in \post{}{\DSpl(D)} : \pre{}{t} \subset \DSmarking(\bbD_{\pre{}{t}}) )) \}\\
	\mathit{BAD} =& \{ \TPoneState \in \TPstates' \mid \exists \bbD \in \reach(\pNet_{\TPbackRulesGen_1(\TPoneState), \ldots, \TPbackRulesGen_{\maxSys}(\TPoneState)}[\TPdecset(\TPoneState)]) : \DSmarking(\bbD) \in \badmarkings \}
\end{align*}
\end{definition}
Only states where all $\top$ have been resolved are included.
A state is \emph{terminated} when no transition is enabled.
A state is \emph{deadlocked} when all transitions are not enabled or not allowed by enough decision tuples.
Terminated states correspond to winning behavior whereas deadlocked states that are not terminated correspond to losing behavior.

A state corresponds to a \emph{nondeterministic decision} if there is a via backward moves reachable decision marking from the state such that either two different transitions are enabled having the same system place in their precondition or if a single transition with a system place in its precondition is enabled with more than enough players allowing the transition.
In the second case, two different instances of the transition are enabled.

A state corresponds to a \emph{bad marking} if there is a via backward moves reachable decision marking from the state such that the underlying marking is in the set of bad markings.

\begin{definition}[Finite losing behavior (\TypeTwo case)]%
\emph{Deadlocked states $\mathit{DL}_{t2}$ in the \typeTwo case}, \emph{states $\mathit{SYNC}_{t2}$ corresponding to synchronization between decision tuples with positive and negative \typeTwo status}, and \emph{states corresponding to vanished decision tuples with positive \typeTwo status} are defined as follows:
	\begin{align*}
	\TPstates' =& \{ \TPoneState \in \TPstates \mid \forall D \in \TPdecset(\TPoneState) : \DSdec(D) \neq \top \} \\
	\mathit{DL}_{t2} =&  \{ \TPoneState \in \TPstates' \mid (\exists D \in \TPdecset(\TPoneState) : \DStypetwo(D) = \True) \land \forall t \in \tr :\\& \pre{}{t} \not\subseteq \DSmarking(\TPdecset(\TPoneState)_{\pre{}{t} \land t2}) \} \\
	\mathit{SYNC}_{t2} =& \{ \TPoneState \in \TPstates' \mid \exists \bbD \in \reach(\pNet_{\TPbackRulesGen_1(\TPoneState), \ldots, \TPbackRulesGen_{\maxSys}(\TPoneState)}[\TPdecset(\TPoneState)]), D \in \bbD :\\& \DStypetwo(D) = \True \land \exists t \in \post{}{\DSpl(D)} : \\& \pre{}{t} \not\subseteq \DSmarking(\bbD_{\pre{}{t} \land t2}) \land  \pre{}{t} \subseteq \DSmarking(\bbD_{\pre{}{t}})  \}\\
	\mathit{VAN}_{t2} =& \{ \TPoneState \in \TPstates' \mid \TPtypeTwoMarking(\TPoneState) \neq \emptyset \land \forall D \in \TPdecset(\TPoneState) : \DStypetwo(D) \neq \True \}
\end{align*}
\end{definition}
Only states where all $\top$ have been resolved are included.
A state is \emph{deadlocked in the \typeTwo case} when at least one decision tuple has positive \typeTwo status and all transitions are not enabled or not allowed from the decision marking with positive \typeTwo status.
Notice that this includes all \emph{in the \typeTwo case terminated} states.
We do not need to differentiate between terminated and deadlocked in the \typeTwo case because both cases are losing behavior.

A state corresponds to \emph{synchronization between decision tuples with negative and positive \typeTwo status} if there exists a via backward moves reachable decision marking from the state such that at least one decision tuple has positive \typeTwo status and it exists a transition from the postcondition of that place that is not enabled or not allowed by only decision tuples with positive \typeTwo status but is enabled and allowed when not regarding the distinction between decision tuples with negative and positive \typeTwo status.

A state corresponds to \emph{vanished decision tuples with positive \typeTwo status} when a nonempty marking to repeat exists but no more decision tuples with positive \typeTwo status exist.
This can occur when at least one transition with an empty postcondition exists that removes all decision tuples with positive \typeTwo status.

The definitions of states corresponding to nondeterministic decisions or to bad markings also apply to decision markings with positive \typeTwo status.

\subsection{Useless Repetitions in the Büchi Game}

We define useless repetitions in the Büchi game.
Therefore, we search for the repetition of loops in the graph of via backward moves reachable decision markings which did not increase the information of any participating player.
Such a useless repetition represents losing behavior to prevent the collection of sequences of backward moves of infinite length.

\begin{definition}[Useless repetition in the Büchi game]%
	For $j \in \{1,2,3,4\}$, we define $\BR^j_1, \ldots, \BR^j_{\maxSys}$ to be the corresponding backward moves to a decision marking $\bbD^j \in \reach(\pNet_{\TPbackRulesGen_1(\TPoneState), \ldots, \TPbackRulesGen_{\maxSys}(\TPoneState)}[\TPdecset(\TPoneState)])$ from the construction of $\pNet_{\TPbackRulesGen_1(\TPoneState), \ldots, \TPbackRulesGen_{\maxSys}(\TPoneState)}[\TPdecset(\TPoneState)]$.
	States $\mathit{UR}$ corresponding to a useless repetition in the Büchi game are defined as follows:
	\begin{align*}
		\textit{U}&\textit{R} = \{ \TPoneState \in \TPstates' \mid \exists \bbD^1, \bbD^2, \bbD^3, \bbD^4 \in \reach(\pNet_{\TPbackRulesGen_1(\TPoneState), \ldots, \TPbackRulesGen_{\maxSys}(\TPoneState)}[\TPdecset(\TPoneState)]) :\\& \bbD^1 = \bbD^2 = \bbD^3 = \bbD^4 \land \forall i \in \{1, \ldots, \maxSys\} : \BR^1_i \subset \BR^2_i \subseteq \BR^3_i \subset \BR^4_i~\land\\& \BR^2_i - \BR^1_i = \BR^4_i - \BR^3_i \}
	\end{align*}
\end{definition}
A useless repetition in the Büchi game occurs when the same loop happened twice without increasing the knowledge of the participating players.
The first loops occurs from $\bbD_1$ to $\bbD_2$ and the second one from $\bbD_3$ to $\bbD_4$ where $\bbD_2$ and $\bbD_3$ can be the same position in $ \reach(\pNet_{\TPbackRulesGen_1(\TPoneState), \ldots, \TPbackRulesGen_{\maxSys}(\TPoneState)}[\TPdecset(\TPoneState)])$.
In the definition, the knowledge of the players does not increase because the decision markings have to be the same which includes the last mcut.

\subsection{Edges in the Büchi Game}
\label{sec:edgesBG}

Winning behavior without a successor state (\textit{TERM}) leads to the unique accepting state $F_B$ with a self-loop.
Losing behavior leads to the unique non-accepting state $F_N$ with a self-loop.

\begin{definition}[Edges \textit{STOP} for finite winning and losing behavior]%
	Edges $\mathit{STOP}_B$ for \emph{finite winning behavior} without a successor state are defined as \[\mathit{STOP}_B = \{ (\TPoneState, F_B) \in \TPstates \times \{F_B\} \mid \TPoneState \in \mathit{TERM} \setminus (\mathit{BAD} \cup \mathit{UR} \cup \mathit{DL}_{t2} \cup \mathit{VAN}_{t2} )\}.\]
	Edges $\mathit{STOP}_N$ for \emph{finite losing behavior} are defined as
	\begin{align*}
	\mathit{STOP}_N = \{ (\TPoneState, F_N) \in \TPstates \times \{F_N\} \mid \TPoneState \in (&\mathit{DL} \setminus \mathit{TERM}) \cup \mathit{NDET} \cup \mathit{BAD} \cup \mathit{UR}~\cup\\& \mathit{DL}_{t2} \cup \mathit{SYNC}_{t2} \cup \mathit{VAN}_{t2} \} .
	\end{align*}
	Edges $\mathit{STOP}$ for \emph{finite winning and losing behavior} are defined as $\mathit{STOP} = \mathit{STOP}_B \cup \mathit{STOP}_N$.
\end{definition}

We introduce notation to calculate the successors of a decision marking:

\begin{definition}[\TypeTwo status change and $\top$ removal]%
	Given a decision marking~$\bbD$ without decision tuples with positive \typeTwo status, the function $\sucDEC(\bbD)$ identifies the set of all possible decision markings corresponding to $\bbD$ where the \typeTwo status can change and $\top$ is replaced by a set of allowed transitions for system places. It is defined as
	\begin{align*}
		\textit{s}&\textit{uc}_{\mathit{dec}}(\bbD) = \{ \bbD' \in \decsetgame \mid \bbD' = \{ (\id, p, b', T', K) \in \decisionsets \mid (\id, p, b, T, K) \in \bbD~\land\\&
		 (T \neq \top \Rightarrow T' = T) \land (T = \top \Rightarrow T' \subseteq \post{}{p})\land (p \in \plE \Rightarrow b' = b)~\land\\&
		 ( (\exists \id'', K'' \in \{1, \ldots, \maxSys\}, p'' \in \plS, T'' \subseteq \post{}{p} :\\& (\id'', p'', \End, T'', K'') \in \bbD) \Rightarrow b' = b )~\land\\&
		 ( (\forall \id'', K'' \in \{1, \ldots, \maxSys\}, p'' \in \plS, T'' \subseteq \post{}{p} :\\& (\id'', p'', \End, T'', K'') \notin \bbD) \Rightarrow b' \in \{\False, \True\} )\}~\land\\&
		 \forall \id \in \{1, \ldots, \maxSys\} : |\bbD \cap \{ D \in \decisionsets_S \mid \id(D) = \id \}| =
		 |\bbD' \cap \{ D \in \decisionsets_S \mid \id(D) = \id \}|
		 \}.
	\end{align*}
\end{definition}
We calculate all $\bbD'$ such that, for each element $(\id, p, b, T, K)$ in $\bbD$, there is exactly one element $(\id, p, b', T', K)$ in $\bbD$.
This is ensured by the last requirement because each identifier occurs at most once in $\bbD$.
For $T'$, $\top$ is replaced by a set of allowed transitions and remains the same otherwise.
For $p$ being an environment place, the \typeTwo status always remains $\False$.
For $p$ being a system place, the \typeTwo status can change from $\False$ to $\True$ unless there is a decision tuple in $\bbD$ with \typeTwo status $\End$.
In this case, the \typeTwo status remains the same.

All edges in the Büchi game go from a state of the form $\TPoneState = (\bbD, M_{T2}, \BR_1, \ldots, \BR_{\maxSys}) $ to a state $\TPoneState' = (\bbD', M_{T2}', \BR_1', \ldots, \BR_{\maxSys}') $ in the following.
We prevent states in $(\mathit{DL} \setminus \mathit{TERM}) \cup \mathit{NDET} \cup \mathit{BAD} \cup \mathit{UR} \cup \mathit{DL}_{t2} \cup \mathit{SYNC}_{t2} \cup \mathit{VAN}_{t2}$ to have further outgoing edges because they correspond to losing behavior.

\begin{definition}[Edges \textit{TOP}]%
Edges $\mathit{TOP}$ to let Player~0 make decisions and change \typeTwo status are defined as
\begin{align*}
\textit{T}&\textit{OP} = \{ (\TPoneState, \TPoneState')  \in \TPsysstates \times \TPstates \mid
\TPoneState = (\bbD, M_{T2}, \BR_1, \ldots, \BR_{\maxSys}) \land
\exists \bbD' \in \sucDEC(\bbD) :\\&
\TPoneState' = (\bbD', M_{T2}', \BR_1', \ldots, \BR'_{\maxSys}) \land
(\forall D \in \bbD : \DStypetwo(D) \neq \True) \land
M_{T2} = \emptyset~\land\\&
(\exists  D \in \bbD : \DSdec(D) =  \top) \land
M_{T2}' = \DSmarking(\bbD'_{t2}) \land
(\forall \id \in \{1, \ldots, \maxSys\} : \BR'_\id = \BR_\id)
\}.
\end{align*}
\end{definition}
In $\TPoneState$, no decision tuple with positive \typeTwo status exists in $\bbD$, the marking $M_{T2}$ to repeat in the \typeTwo case is empty, and a decision tuple in $\bbD$ has $\top$ as decision.
To obtain $\bbD'$ from $\bbD$, $\top$ is removed and system decision tuples can be designated as having positive \typeTwo status by choosing $\bbD'$ from $ \sucDEC(\bbD)$.
The underlying marking of $\bbD'$ restricted to decision tuples with positive \typeTwo status is stored in $M_{T2}'$.
It is the empty set when no decision tuple in $\bbD'$ has positive \typeTwo status.
All sequences of backward moves stay the same.

A transit relation $\tfl(t) \subseteq ( ((\pre{}{t} \cup \{\FLstart\}) \times (\post{}{t} \cup \{\FLend\})) \rightarrow  \mathbb{N} ) $~\cite{DBLP:conf/atva/FinkbeinerGHO19} (lifted to bounded Petri nets) relates places in the precondition with places in the postcondition of a transition.
It also indicates that a new token is created with $\FLstart$ and that a token is removed with $\FLend$.
In the following, we assume an arbitrary but fixed transit relation that preserves the type of players in the Petri game and represents the movement of players including their creation and removal.
This is needed to maintain the correct identifier for players.
In particular, the transit relation returns, for a decision marking $\bbD$, a player with identifier $\id$ at place $p$, and a transition $t$ to fire, the unique next place $p'$ for the player, written as $p \tfl(\bbD, \id, t) p'$.
We assume the existence of a function $\mathit{nextID}$ that, given a decision marking $\bbD$ and a transition $t$ fired from $\bbD$, returns a unique identifier for each new player $p'$ with $\FLstart \tfl(t) p'$.

\begin{definition}[Making decisions for successor decision markings]%
	For a decision marking $\bbD$ and a from there enabled and allowed transition $t$ with only system places in its precondition, the function $\sucS(\bbD, t)$ identifies the set of all possible decision markings corresponding to the postcondition of $t$ where no $\top$ exists.
	It is defined as
	\begin{align*}
		\textit{s}&\textit{uc}_S(\bbD, t) =  \{ \bbD' \in \decsetgame \mid \bbD' = \{(\id, p', \False, T', K') \in \decisionsets_S \mid
		(\id, p, \False, T, K) \in \bbD~\land\\&
		p \tfl(\bbD, \id, t) p' \land T' \subseteq \post{}{p'} \land
		K' = \max(\{ \DSlastmcut(D) \mid D \in \bbD_{\pre{}{t}} \})  \}~\cup\\&
		\{ (\id',p',\False, T', K') \in \decisionsets_S \mid
		\FLstart \tfl(t) p' \land
		T' \subseteq \post{}{p'}~\land\\&
		K' = \max(\{ \DSlastmcut(D) \mid D \in \bbD_{\pre{}{t}} \})\land
		\id' = \mathit{nextID}(\bbD, t,  \FLstart \tfl(t) p') \} \land
		\DSmarking(\bbD') = \post{}{t}~\land\\&
		\forall \id \in \{1, \ldots, \maxSys\} : |\bbD' \cap \{ D \in \decisionsets_S \mid \id(D) = \id \}| \leq 1 \}.
	\end{align*}
\end{definition}
The \typeTwo status is fixed to $\False$ as $\sucDEC$ is used afterward to give the possibility of changing the \typeTwo status from negative to positive.

\begin{definition}[Edges \textit{SYS}]%
Edges $\mathit{SYS}$ to let Player~0 fire a transition with no environment places in its precondition and afterward make decisions and change \typeTwo status are defined as
\begin{align*}
\textit{S}&\textit{YS} = \{ (\TPoneState, \TPoneState')  \in (\TPsysstates \setminus \mathit{NDET} \cup \mathit{BAD} \cup \mathit{UR}) \times \TPstates \mid
\TPoneState = (\bbD, M_{T2}, \BR_1, \ldots, \BR_{\maxSys})~\land\\&
\exists t \in \tr, \mathit{pc} \in \sucS(\bbD, t), \bbD' \in \sucDEC((\bbD \setminus \bbD_{\pre{}{t}}) \cup \mathit{pc} ) :  \\&
\TPoneState' = (\bbD', M_{T2}', \BR_1', \ldots, \BR'_{\maxSys}) \land
(\forall D \in \bbD : \DStypetwo(D) \neq \True) \land
M_{T2} = \emptyset~\land\\&
\bbD \text{ does \emph{not} correspond to an mcut} \land
\pre{}{t} \cap \plE = \emptyset \land
\pre{}{t} \subseteq \DSmarking(\bbD_{\pre{}{t}}) ~\land\\&
\bbD'' = \bbD' \setminus \{ (\id, p, \False, T, K ), (\id, p, \True, T, K )\mid (\id, p, b, T, K ) \in \mathit{pc}\}~\land\\&
M_{T2}' = \DSmarking(\bbD'_{t2}) \land
\forall \id \in \{1, \ldots, \maxSys\} :\\&
(( \exists p \in \plS, b \in \{\False, \True, \End\}, T \subseteq \post{}{p}, K \in \{1, \ldots, \maxSys\} :\\&
(\id, p, b, T, K) \in \mathit{pc} \cup (\bbD'' \setminus \bbD) ) \Rightarrow ((\bbD \setminus \bbD_{\pre{}{t}} \neq \bbD'' \Rightarrow \BR'_\id = \BR_\id~\cup\\& [(\bbD_{\pre{}{t}}, \mathit{pc}), (\bbD \setminus (\bbD_{\pre{}{t}} \cup (\bbD \cap \bbD'')), \bbD'' \setminus \bbD) ])~\land\\& (\bbD \setminus \bbD_{\pre{}{t}} = \bbD'' \Rightarrow \BR'_\id = \BR_\id \cup [(\bbD_{\pre{}{t}}, \mathit{pc})] ) ))~\land\\&
((\forall p \in \plS, b \in \{\False, \True, \End\}, T \subseteq \post{}{p}, K \in \{1, \ldots, \maxSys\} :\\&
(\id, p, b, T, K) \notin \mathit{pc} \cup \bbD'' \setminus (\bbD \cap \bbD'')) \Rightarrow \BR'_\id = \BR_\id )
\}.
\end{align*}
\end{definition}
In $\TPoneState$, no decision tuple with positive \typeTwo status exists in $\bbD$, the marking~$M_{T2}$ to repeat in the \typeTwo case is empty, and $\bbD$ does \emph{not} correspond to an mcut.
To obtain $\bbD'$ from $\bbD$, a transition $t$ with no environment places in its preconditions, that is enabled and allowed from $\bbD$, is chosen.
To simulate transition $t$ firing, the decision tuples $\mathit{pc}$ for the postcondition of $t$ are obtained by first choosing decisions via $\sucS$.
Afterward, decision tuples in $\bbD \setminus \bbD_{\pre{}{t}}$ and $\mathit{pc}$ can change their \typeTwo status from negative to positive via $\sucDEC$ to obtain $\bbD'$.
A corresponding backward move $(\bbD_{\pre{}{t}}, \mathit{pc})$ is added to $\BR_\id$ of all participating players with identifier $\id$ to obtain $\BR'_\id$.
If decision tuples in $\bbD$ have changed their \typeTwo status, a further backward move for this change is added.
Here, $\bbD''$ is used to remove $\mathit{pc}$ because the \typeTwo status of decision tuples in $\mathit{pc}$ might have changed.
All other backward moves are retained.
As in $\mathit{TOP}$, the underlying marking of $\bbD'$ restricted to decision tuples with positive \typeTwo status is stored in $M_{T2}'$.

We define the function $\sucMC$ that returns, for a decision marking $\bbD$ and a transition $t$, the decision marking with \typeTwo status $\False$ and decision~$\top$ corresponding to the system players participating in the transition.

\begin{definition}[Successor decision markings from an mcut]%
	For a decision marking $\bbD$ and a from there enabled and allowed transition $t$ with an environment place in its precondition, the function $\sucMC(\bbD, t)$ returns the decision marking corresponding to the system players of the postcondition of $t$ where $\False$ is the \typeTwo status and $\top$ is the decision. It is defined as
	\begin{align*}
		\textit{s}&\textit{uc}_\mathit{mcut}(\bbD, t) = \{ (\id, p', \False, \top, K') \in \decisionsets_S \mid (\id, p, \False, T, K) \in \bbD \land p \in \plS~\land\\& p \tfl(\bbD, \id, t) p' \land K' = \max(\{ \DSlastmcut(D) \mid D \in \bbD \setminus \bbD_{\pre{}{t}} \}) + 1\} ~\cup\\&
			\{ (\id',p',\False, \top, K') \in \decisionsets_S \mid
		\FLstart \tfl(t) p' \land
		K' = \max(\{ \DSlastmcut(D) \mid D \in \bbD_{\pre{}{t}} \}) + 1~\land\\&
		\id' = \mathit{nextID}(\bbD, t,  \FLstart \tfl(t) p') \}.
	\end{align*}
\end{definition}

The transit relation is followed and a new highest number for the last mcut is given to all participating players of the transition.
To stay in the range $\{1, \ldots, \maxSys\}$ for the last mcut, we need to reduce higher values when all decision tuples with a specific value are removed when a transition with an environment player in its precondition fires.
Therefore, we assume a function $\mathit{reduce}_\mathit{mcut}(\bbD)$ that only alters the last mcut in the decision marking $\bbD$ such that the order between decision tuples and their last mcut is maintained but the minimal number of last mcut numbers is used.
For example, when $\bbD$ contains decision tuples with last mcut $1$, $3$, $6$, and $7$, because all decision tuples with last mcut $2$, $4$, $5$, and $8$ participate in the transition with an environment place in its precondition, then all decision tuples with last mcut $1$ maintain $1$, all with $3$ get $2$, all with $6$ get $3$, and all with $7$ get $4$.

\begin{definition}[Edges \textit{MCUT}]%
	Edges $\mathit{MCUT}$ to let Player~1 fire a transition with an environment place in its precondition from an mcut are defined as
\begin{align*}
	\text{M}&\textit{CUT} = \{ (\TPoneState, \TPoneState')\in (\TPenvstates \setminus \mathit{NDET} \cup \mathit{BAD}) \times \TPstates \mid \exists t \in \tr :\\&
\TPoneState = (\bbD, M_{T2}, \BR_1, \ldots, \BR_{\maxSys})\land
\TPoneState' = (\bbD', M_{T2}', \BR_1', \ldots, \BR'_{\maxSys})~\land\\&
(\forall D \in \bbD : \DStypetwo(D) \neq \True) \land
\bbD \text{ corresponds to an mcut} \land
M_{T2}' = M_{T2} = \emptyset~\land\\&
\pre{}{t} \subseteq \DSmarking(\bbD_{\pre{}{t}}) \land
\bbD'' = \mathit{reduce}(\bbD \setminus \bbD_{\pre{}{t}}) \land
\bbD' = \bbD'' \cup \sucMC(\bbD'', t)~\cup\\& \{ (0, p, \False, \post{}{p}, 0) \mid p \in \post{}{t} \cap \plE \} \land
\forall \id \in \{ 1, \ldots, \maxSys \} :\\& \BR'_\id = \left\{\begin{array}{ll} \BR_\id &\text{if }\exists p \in \plS, b \in \{\False, \True, \End\}, T \subseteq \post{}{p},\\ & K \in \{1, \ldots, \maxSys\} : (\id, p, b, T, K) \in \bbD' \setminus \bbD'' \\ ~[] & \text{otherwise}\end{array}\right.
\}.
\end{align*}
\end{definition}
In $\TPoneState$, no decision tuple with positive \typeTwo status exists in $\bbD$, $\bbD$ corresponds to an mcut, and the marking $M_{T2}$ to repeat in the \typeTwo case remains the empty set.
To obtain $\bbD'$, a transition $t$ that is enabled and allowed from $\bbD$ is chosen.
This transition has an environment place $p$ in its precondition because $\bbD$ corresponds to an mcut.
The last mcut of decision tuples is \textit{reduce}d accounting for the possible free numbers by the removal of decision tuples due to $t$ firing to obtain $\bbD''$.
To simulate transition $t$ firing, decision tuples corresponding to the postcondition of $t$ are added to $\bbD''$ to obtain $\bbD'$.
All added decision tuples have negative \typeTwo status and $\top$ as decision for decision tuples corresponding to system players (ensured by $\sucMC$) and $\post{}{p}$ as decision for decision tuples corresponding to the environment player.
Backward moves for identifiers in $\bbD' \setminus \bbD''$ become empty.

We define three kinds of edges for the \typeTwo case: all three fire a transition only from decision tuples with positive \typeTwo status and the last two re-reach an already reached marking of the \typeTwo case.
For the first re-reaching, the marking is the marking to repeat and, for the second re-reaching, the marking is not.

For the \typeTwo case, we introduce the function $\sucTT$ that works exactly like the function $\sucS$ but it requires and produces decision tuples with positive \typeTwo status.

\begin{definition}[Edges \textit{NES}$_\mathit{fire}$]%
Edges $\mathit{NES}_\mathit{fire}$ let Player~0 fire a transition with no environment place in its precondition only from decision tuples with positive \typeTwo status. They let Player~0 make decisions while maintaining positive \typeTwo status and reach a new marking in the \typeTwo case. Edges $\mathit{NES}_\mathit{fire}$ are defined as
\begin{align*}
	\textit{N}&\textit{ES}_\mathit{fire} = \{ (\TPoneState, \TPoneState') \in (\TPsysstates \setminus \mathit{NDET} \cup \mathit{BAD} \cup \mathit{SYNC}_{t2} \cup \mathit{VAN}_{t2} ) \times \TPsysstates \mid\\&
\TPoneState = (\bbD, M_{T2}, \BR_1, \ldots, \BR_{\maxSys}) \land
\exists t \in \tr, \mathit{pc} \in \sucTT(\bbD, t):\\&
\TPoneState' = (\bbD', M_{T2}', \BR_1', \ldots, \BR'_{\maxSys}) \land
(\exists D \in \bbD :  \DStypetwo(D) = \True)~\land\\&
\pre{}{t} \cap \plE = \emptyset \land \pre{}{t} \subseteq  \DSmarking(\bbD_{\pre{}{t} \land t2}) \land
M_{T2}' = M_{T2}~\land\\&
\bbD' = (\bbD \setminus \bbD_{\pre{}{t} \land t2}) \cup \mathit{pc} \land
(\forall \bbD^M \in \reach(\pNet_{\BR_1, \ldots, \BR_{\maxSys}}[\bbD_{t2}]) : \DSmarking(\bbD'_{t2}) \neq \DSmarking(\bbD^M_{t2})  )~\land\\&
\forall \id \in \{1, \ldots, \maxSys\} : ( ( (\exists p \in \plS, T \subseteq \post{}{p}, K \in \{1, \ldots, \maxSys\} :\\& (\id, p, \False, T, K) \in \mathit{pc}) \Rightarrow \BR'_\id = \BR_\id \cup [(\bbD_{\pre{}{t} \land t2}, \mathit{pc})] )~\land\\& ( (\forall p \in \plS, T \subseteq \post{}{p}, K \in \{1, \ldots, \maxSys\} :\\& (\id, p, \False, T, K) \notin \mathit{pc}) \Rightarrow \BR'_\id = \BR_\id) )
\}.
\end{align*}
\end{definition}
In $\TPoneState$, a decision tuple with positive \typeTwo status exists in $\bbD$ and a transition $t$ is chosen such that the firing of this transition has to reach a decision marking with a new underlying marking for the \typeTwo case.
The transition $t$ cannot have an environment place in its precondition and is enabled and allowed from $\bbD$ restricted to decision tuples with positive \typeTwo status.
To simulate $t$ firing, decision tuples $\bbD_{\pre{}{t}\land t2}$ corresponding to the precondition of $t$ are removed from $\bbD$ and then decision tuples corresponding to the postcondition of $t$ are added.
All added decision tuples have positive \typeTwo status and a made decision via $\sucTT$.
It is ensured via the reachability of decision markings~$\bbD^M_{t2}$ and their underlying marking $\DSmarking(\bbD^M_{t2})$ from the backward moves that the marking in the \typeTwo case is not repeated.
The corresponding backward move $( \bbD_{\pre{}{t} \land t2} , \mathit{pc} )$ is added to the backward move of all participating players.
The remaining elements of the pair of states stay the same.

\begin{definition}[Edges \textit{NES}$_\mathit{finish}$]%
Edges $\mathit{NES}_\mathit{finish}$ let Player~0 fire a transition with no environment place in its precondition only from decision tuples with positive \typeTwo status. They re-reach the marking in which the \typeTwo case started and thereby end the \typeTwo case. Edges $\mathit{NES}_\mathit{finish}$ are defined as
\begin{align*}
	\textit{N}&\textit{ES}_\mathit{finish} = \{ (\TPoneState, \TPoneState') \in (\TPsysstates \setminus\mathit{NDET} \cup \mathit{BAD} \cup \mathit{UR} \cup \mathit{SYNC}_{t2} \cup \mathit{VAN}_{t2} ) \times \TPstates \mid\\&
\TPoneState = (\bbD, M_{T2}, \BR_1, \ldots, \BR_{\maxSys})~\land\\&
\exists t \in \tr, \mathit{pc} \in \sucTT(\bbD, t) :
\TPoneState' = (\bbD', M_{T2}', \BR_1', \ldots, \BR'_{\maxSys})~\land\\&
(\exists D \in \bbD :  \DStypetwo(D) = \True) \land
\pre{}{t} \cap \plE = \emptyset \land \pre{}{t} \subseteq  \DSmarking(\bbD_{\pre{}{t} \land t2})~\land\\&
\bbD'' = (\bbD \setminus \bbD_{\pre{}{t} \land t2}) \cup \mathit{pc} \land
\DSmarking(\bbD''_{t2} ) = M_{T2}\land
M_{T2}' = \emptyset~\land\\&
(\forall p \in M_{T2} : \exists \bbD^M \in \reach(\pNet_{\BR_1, \ldots, \BR_{\maxSys}}[\bbD_{t2}]) : p \notin \DSmarking(\bbD^M_{t2}))~\land\\&
\bbD' = \{ (\id, p, \End, T, K) \in \decisionsets_S \mid (\id, p, \True, T, K) \in \bbD' \}~\cup\\&
\{ (\id, p, \False, T, K) \in \decisionsets_S \mid (\id, p, \False, T, K) \in \bbD'\}~\land\\&
\bbD_\BR' = \{ (\id, p, \End, T, K) \in \decisionsets_S \mid (\id, p, \True, T, K) \in \mathit{pc} \}~\land\\&
\forall \id \in \{1, \ldots, \maxSys\} : (( \exists p \in \plS, T \subseteq \post{}{p}, K \in \{1, \ldots, \maxSys\} :\\& (\id, p, \True, T, K) \in ((\bbD' \setminus \bbD'' ) \cup \bbD'_\BR)) \Rightarrow (( (\bbD'' \setminus \mathit{pc}) \neq (\bbD' \setminus \bbD'_\BR) \Rightarrow\\& \BR'_\id = \BR_\id \cup \{ ( \bbD_{\pre{}{t} \land t2} , \bbD_\BR' ), (\bbD'' \setminus (\bbD' \cup \mathit{pc}) , \bbD' \setminus ( \bbD' \cup \bbD_\BR')) \} )~\land\\&
( (\bbD'' \setminus \mathit{pc}) = (\bbD' \setminus \bbD'_\BR) \Rightarrow \BR'_\id = \BR_\id \cup \{ ( \bbD_{\pre{}{t} \land t2} , \bbD_\BR' ) \} )))~\land\\&
( \forall p \in \plS, T \subseteq \post{}{p}, K \in \{1, \ldots, \maxSys\} :\\& (\id, p, \True, T, K) \notin ((\bbD' \setminus (\bbD'' \cap \bbD') ) \cup \bbD'_\BR ) \Rightarrow \BR'_\id = \BR_\id)
\}.
\end{align*}
\end{definition}
In $\TPoneState$, a decision tuple with positive \typeTwo status exists in $\bbD$ and a transition $t$ is chosen such that the firing of this transition reaches a decision marking with the same underlying marking $M_{T2}$ as when the \typeTwo case started.
The transition $t$ cannot have an environment place in its precondition and is enabled and allowed from $\bbD$ restricted to decision tuples with positive \typeTwo status.
Simulating $t$ firing would reach the decision marking~$\bbD''$ by removing $\bbD_{\pre{}{t}\land t2}$ from $\bbD$ and adding $\mathit{pc}$.
The underlying marking of $\bbD''$ restricted to decision sets with positive \typeTwo status has to be reached for the second time, i.e., it has to be $M_{T2}$.
Every player $p$ in $M_{T2}$ has to be at least once not included in an underlying marking of a decision marking~$\bbD^M_{t2}$ reachable by the backward moves, i.e., every player has to move at least once before completing the \typeTwo case.
To obtain $\bbD'$, all decision tuples with positive \typeTwo status in $\bbD''$ are set to ended \typeTwo status.
The marking $M_{T2}'$ is set to the empty set.

By $\bbD'_\BR$, we identify $\mathit{pc}$ with ended \typeTwo status for decision tuples with positive \typeTwo status.
When the change from $\bbD''$ to $\bbD'$ changed decision tuples not corresponding to the postcondition of $t$, i.e., not in $\mathit{pc}$ and $\bbD'_\BR$, then the backward moves $( \bbD_{\pre{}{t} \land t2} , \bbD_\BR' )$ and $(\bbD'' \setminus (\bbD' \cup \mathit{pc}) , \bbD' \setminus (\bbD'' \cup \bbD_\BR'))$ are added to $\BR_\id$ to obtain $\BR'_\id$ for participating players of the transition with identifier $\id$.
Otherwise, the second backward move relates the empty decision marking with the empty decision marking and therefore only the backward move $( \bbD_{\pre{}{t} \land t2} , \bbD_\BR' )$ is added to $\BR_\id$ to obtain $\BR'_\id$.
The other sequences of backward moves stay the same.
No new \typeTwo cases are necessary as one successful disclosure for this branch suffices.

\begin{definition}[Edges \textit{NES}$_\mathit{bad}$]%
	Edges $\mathit{NES}_\mathit{bad}$ corresponding to a bad finish of the \typeTwo case because the wrong marking is repeated or not all players moved are defined as
\begin{align*}
	\textit{N}&\textit{ES}_\mathit{bad} = \{ (\TPoneState, F_N) \in (\TPsysstates \setminus \mathit{NDET} \cup \mathit{BAD} \cup \mathit{SYNC}_{t2} \cup \mathit{VAN}_{t2} ) \times \{F_N\} \mid\\&
\TPoneState = (\bbD, M_{T2}, \BR_1, \ldots, \BR_{\maxSys})\land
\exists t \in \tr, \mathit{pc} \in \sucTT(\bbD, t) :\\&
(\exists D \in \bbD :  \DStypetwo(D) = \True ) \land
 \pre{}{t} \cap \plE = \emptyset \land \pre{}{t} \subseteq  \DSmarking(\bbD_{\pre{}{t} \land t2})~\land\\&
\bbD'' = (\bbD \setminus \bbD_{\pre{}{t} \land t2}) \cup \mathit{pc} \land
( \exists \bbD^M \in \reach(\pNet_{\BR_1, \ldots, \BR_{\maxSys}}[\bbD_{t2}]) : \DSmarking(\bbD''_{t2}) = \DSmarking(\bbD^M_{t2})  )~\land\\&
( \DSmarking(\bbD''_{t2} ) \neq M_{T2} \lor \exists p \in M_{T2} : \forall \bbD^M \in \reach(\pNet_{\BR_1, \ldots, \BR_{\maxSys}}[\bbD_{t2}]) : p \in \DSmarking(\bbD^M_{t2}) )
\}.
\end{align*}
\end{definition}
In $\TPoneState$, a transition $t$ with only system places in its precondition has to exist that is enabled and allowed from the decision tuples with positive \typeTwo status in $\bbD$.
The firing of the marking would reach the decision marking~$\bbD''$ by removing $\bbD_{\pre{}{t}\land t2}$ corresponding to the decision tuples for the precondition of $t$ and adding $\mathit{pc}$ corresponding to the decision tuples for the postcondition of $t$.
The underlying marking of the decision tuples with positive \typeTwo status in $\bbD''$ is reached for the second time, i.e., it is equal to the underlying marking of $\bbD^M_{t2}$ reachable via the backward moves from $\bbD_{t2}$ ($\bbD$ restricted to decision tuples with positive \typeTwo status).
The negative behavior occurs because the underlying marking is not equal to $M_{T2}$ or some place $p$ in $M_{T2}$ is part of the entire loop, i.e., part of all $\bbD^M_{t2}$.
This makes $\mathit{NES}_\mathit{finish}$ and $\mathit{NES}_\mathit{bad}$ mutual exclusive when a marking is repeated in the \typeTwo case.

\begin{definition}[Edges \textit{NES}]%
	The edges $\mathit{NES}$ for the \typeTwo case in the Büchi game are defined as $\mathit{NES}_\mathit{fire} \cup \mathit{NES}_\mathit{finish} \cup \mathit{NES}_\mathit{bad}$.
\end{definition}

\begin{definition}[Edges in the Büchi game]%
	The edges $\TPedges$ in the Büchi game are defined as $\TPedges = \mathit{TOP} \cup \mathit{SYS} \cup \mathit{MCUT} \cup \mathit{NES} \cup \mathit{STOP}$.
\end{definition}

\subsection{Correctness}\label{sec:BMcorrectness}

We prove assumptions correct that are used in the translation from bounded Petri games with one environment player and bad markings to Büchi games.

\begin{lemma}[Sufficiently many identifiers]
	There is always a free identifier when the number of system players increases from a transition firing.
\end{lemma}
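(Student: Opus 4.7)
The plan is to argue by a pigeonhole-style counting on the reachable marking after the transition fires, using the defining property of $\maxSys$ as the supremum of system-player tokens visible in any reachable marking of~$\pNet$.

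First I would fix an edge in $\mathit{SYS}$, $\mathit{MCUT}$, or $\mathit{NES}_\mathit{fire}/\mathit{NES}_\mathit{finish}$ that simulates a transition $t$ firing from a state $\TPoneState$ with decision marking~$\bbD$, producing the successor decision marking $\bbD' = (\bbD \setminus \bbD_{\pre{}{t}}) \cup \mathit{pc}$, where $\mathit{pc}$ is chosen from $\sucS(\bbD,t)$, $\sucMC(\bbD,t)$, or $\sucTT(\bbD,t)$. Only decision tuples corresponding to new places $p'$ with $\FLstart \tfl(t) p'$ require a fresh identifier through $\mathit{nextID}$; decision tuples obtained from the transit relation $p \tfl(\bbD,\id,t) p'$ inherit the identifier $\id$ of the player they continue, and therefore reuse an identifier that was already in~$\bbD$. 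Hence the only potential failure of freshness arises for the tuples in $\mathit{pc}$ that correspond to newly created players.

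Second, I would invoke the invariant on $\decsetgame$ that each identifier in $\{1,\ldots,\maxSys\}$ occurs at most once in any decision marking, together with the fact (preserved inductively along every run of the Büchi game) that $\DSmarking(\bbD)$ is a reachable marking of $\pNet$, so $\bbD'$ corresponds to the successor marking $\DSmarking(\bbD) - \pre{}{t} + \post{}{t} \in \reach(\pNet)$. By the definition of $\maxSys$ as a bound on the number of system-place tokens over all reachable markings, the system-player part of $\bbD'$ has cardinality at most $\maxSys$. Split $\bbD'$ into the retained tuples $\bbD \setminus \bbD_{\pre{}{t}}$, whose identifiers are already fixed and pairwise distinct, and the new tuples $\mathit{pc}$; subtracting the number of retained system-player tuples from $\maxSys$ bounds the number of identifiers we still need, while subtracting it from $|\{1,\ldots,\maxSys\}|$ bounds the number of identifiers still available. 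These two quantities are equal, so by a pigeonhole argument $\mathit{nextID}$ can assign distinct fresh identifiers from the complement of the identifiers already appearing in $\bbD \setminus \bbD_{\pre{}{t}}$.

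The only slightly delicate point, which I expect to be the main obstacle, is the case of $\mathit{MCUT}$-edges where $\mathit{reduce}_\mathit{mcut}$ renumbers the \emph{last mcut} values but not the identifiers themselves; I would explicitly note that renumbering of the last-mcut component does not interfere with the identifier counting, since identifiers are independent coordinates of decision tuples. With this observation, the counting argument above applies uniformly to all edge families that introduce new system players, establishing the claim.
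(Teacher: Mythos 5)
Your proposal is correct and follows essentially the same argument as the paper: since the successor marking is reachable and therefore contains at most $\maxSys$ system players, the identifiers already in use (by retained and continued players) leave at least as many free identifiers in $\{1,\ldots,\maxSys\}$ as there are newly created players. The paper's proof is just a terser version of this counting, and your extra remarks (only $\FLstart$-created players need $\mathit{nextID}$, and $\mathit{reduce}_\mathit{mcut}$ only touches the last-mcut coordinate) are consistent with the construction.
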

\begin{proof}
	There are at most $\maxSys$ system players in the Petri game.
	When the number of system players can increase by $x$, then there are at most $\maxSys - x$ system players in the Petri game.
	As there are $\maxSys$ unique identifiers for system players, there are $x$ free identifiers.
\end{proof}

\begin{lemma}[Sufficiently many last mcuts]
	There is always a free last mcut when a transition with an environment place and a system place in its precondition fires.
\end{lemma}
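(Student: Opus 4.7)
The plan is to argue by a counting argument on the available last-mcut labels, exploiting the fact that $\mathit{reduce}_\mathit{mcut}$ compacts existing labels into the initial segment $\{1,\ldots,k\}$. Recall from the definition of edges in $\mathit{MCUT}$ that, when a transition $t$ with an environment place in its precondition fires, the successor is computed as $\bbD' = \bbD'' \cup \sucMC(\bbD'',t)$ where $\bbD'' = \mathit{reduce}_\mathit{mcut}(\bbD \setminus \bbD_{\pre{}{t}})$, and the new last-mcut value produced by $\sucMC$ is $K' = \max(\{\DSlastmcut(D) \mid D \in \bbD'' \setminus \bbD''_{\pre{}{t}}\}) + 1$. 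It therefore suffices to show $K' \leq \maxSys$.

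First, I would observe that only system decision tuples carry a positive last-mcut value, since every environment tuple has last mcut $0$ by construction. Hence the maximum in the definition of $K'$ is attained (if positive) on a system tuple in $\bbD''$. Next, I would use the hypothesis of the lemma: since $t$ has a system place in its precondition, at least one system decision tuple of $\bbD$ lies in $\bbD_{\pre{}{t}}$. Because the Petri game has at most $\maxSys$ system players visible simultaneously, $|\bbD \cap \decisionsets_S| \leq \maxSys$, and after removing the consumed system tuples we have at most $\maxSys - 1$ system tuples in $\bbD \setminus \bbD_{\pre{}{t}}$.

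The next step is to invoke the specification of $\mathit{reduce}_\mathit{mcut}$: it preserves the order of last-mcut labels while compressing them to use the minimal contiguous positive numbers. Thus, if $k$ denotes the number of distinct positive last-mcut values occurring in $\bbD \setminus \bbD_{\pre{}{t}}$, then in $\bbD''$ these values form exactly the set $\{1,\ldots,k\}$, and $k$ is bounded by the number of system tuples in $\bbD \setminus \bbD_{\pre{}{t}}$, hence $k \leq \maxSys - 1$. Combining this with the previous paragraph yields $\max(\{\DSlastmcut(D) \mid D \in \bbD''\}) \leq \maxSys - 1$, so $K' \leq \maxSys$ lies in the permitted range $\{1,\ldots,\maxSys\}$, and this is a fresh label because all labels strictly below $K'$ are already occupied while $K'$ itself is not.

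The only subtle point, and the one I would take most care with, is making sure the counting is done on system players alone (environment players do not contribute to positive last-mcut values but also do not compete for labels in $\{1,\ldots,\maxSys\}$), together with the fact that the hypothesis "the fired transition has a system place in its precondition" is essential: without it, a transition touching only environment places could, in principle, leave $\maxSys$ system players with pairwise distinct last mcuts untouched, which would force $K' = \maxSys + 1$. Since that case is explicitly excluded, the bound goes through.
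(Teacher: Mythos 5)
Your proof is correct and follows essentially the same route as the paper's: at most $\maxSys$ system players carry a last-mcut label, the hypothesis guarantees at least one is consumed by the fired transition, and $\mathit{reduce}_\mathit{mcut}$ compacts the remaining at most $\maxSys-1$ distinct labels into an initial segment so that the fresh label $K'$ stays within $\{1,\ldots,\maxSys\}$. You merely make the counting explicit where the paper states it informally.
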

\begin{proof}
	System players remember their last mcut and there are at most $\maxSys$ of them in the Petri game.
	Therefore, at most $\maxSys$ last mcuts can exist.
	Whenever a system player participates in a transition with the environment player, its last mcut becomes free, other last mcuts are reordered to keep their order, and a new last mcut can be added.
\end{proof}

We prove that winning strategies for the Büchi game can be translated to winning strategies for the system players in the encoded Petri game.
First, we define the translation.
Second, we define an equivalence relation between decision markings in the strategy for Player~0 in the Büchi game and cuts in the strategy for the system players in the Petri game.
Third, we use this equivalence relation to prove the translation correct.

\begin{definition}[Translation from Büchi game strategies to Petri game strategies]\label{def:BGtoPG}%
	In the following, we translate winning strategies for Player~0 in the Büchi game into winning strategies for the system players in the Petri game.
	By $f$, we identify the winning strategy for Player~0 in the Büchi game $\BuchiGame$ which simulates the Petri game $\pGameBadMarkings$.
	From the initial state $\TPinit$, the strategy $f$ generates a tree~$T_f$ which is possibly infinite.
	In~$T_f$, the nodes are labeled with states from $\TPstates$ and the root is labeled with $\TPinit$.
	A node~$N$ labeled with a $\TPsysstates$-state $\TPoneState$ has a unique successor node labeled with the state $f(w)$ with $w$ being the sequence of labels from the root to $N$.
	A node~$N$ labeled with a $\TPenvstates$-state $\TPoneState$ has for each successor state $(\TPoneState, \TPoneState') \in E$ a successor node in $T_f$ labeled with $\TPoneState'$.

	We traverse the nodes of $T_f$ in a breadth-first order and inductively construct a strategy $\sysstrat = ( \pNet^\sysstrat , \lambda^\sysstrat )$ for $\pGame$.
	We associate to each node labeled with a state $\TPoneState$ a cut $C$ in the strategy~$\sysstrat$ under construction such that $\lambda[C] = \DSmarking(\TPdecset(\TPoneState))$ holds, i.e., the cut and the decision marking of the state represent the same marking.
	To~$\TPinit$, we associate a cut $C_0$ labeled with the initial marking of $\pNet$, i.e., $\lambda[C_0] = \init$.
	Then, $\lambda[C_0] = \DSmarking(\TPdecset(\TPinit))$ holds.

	Suppose now the breadth-first traversal has reached a node in $T_f$ labeled with a $\TPsysstates$-state~$\TPoneState$ to which the cut $C$ is associated.
	Then, there is a unique successor node in $T_f$ labeled with~$\TPoneState'$.
	If the edge from $\TPoneState$ to $\TPoneState'$ resolves $\top$, nothing is added to the strategy, and $C$ is associated with~$\TPoneState'$ as well.
	If the edge from $\TPoneState$ to $\TPoneState'$ corresponds to a transition~$t$ with $\lambda(C_\mathit{pre}) = \pre{}{t}$ for $C_\mathit{pre} \subseteq C$ and $\TPoneState$ has less or equally many decision tuples with positive \typeTwo status than $\TPoneState'$, i.e., $\sum_{D \in \decisionsets \land \DStypetwo(D) = \True} \TPdecset(\TPoneState)(D) \leq \sum_{D \in \decisionsets \land \DStypetwo(D) = \True} \TPdecset(\mathcal{V'})(D) $, then we extend the strategy with a new transition $t'$ and new places $\mathit{NP}$ such that
	$\lambda[\mathit{NP}] = \post{}{t}$,
	$\lambda(t') = t $,
	$\pre{}{t'} = C_\mathit{pre}$, and
	$\post{}{t'} = \mathit{NP}$.
	We associate to $\TPoneState'$ the cut $C' = \mathit{NP} + (C - C_\mathit{pre})$ containing the new places $\mathit{NP}$ and the places in $C$ that did not participate in $t$.

	If the edge from $\TPoneState$ to $\TPoneState'$ corresponds to a transition $t$ with $\lambda(C_\mathit{pre}) = \pre{}{t}$ for $C_\mathit{pre} \subseteq C$ and $\TPoneState$ has a lower number of decision tuples with ended \typeTwo status than~$\TPoneState'$ as the marking to repeat in the \typeTwo case is reached for the second time, i.e.,
	$ \sum_{D \in \decisionsets \land \DStypetwo(D) = \End} \TPdecset(\TPoneState)(D) < \sum_{D \in \decisionsets \land \DStypetwo(D) = \End} \TPdecset(\mathcal{V'})(D) $%
	, a unique predecessor state~$\TPoneState^{1\mathit{st}}$ exists such that $\DSmarking(\TPdecset(\TPoneState)) - \pre{}{t} + \post{}{t} = \DSmarking(\TPdecset(\TPoneState^{1\mathit{st}})) $.
	The state $\TPoneState^{1\mathit{st}}$ represents the first occurrence of the marking $\DSmarking(\TPdecset(\TPoneState^{1\mathit{st}}))$ that is reached for a second time when firing $t$ from $\TPoneState$.
	By construction, only transitions from decision tuples with positive \typeTwo status fired between $\TPoneState^{1\mathit{st}}$ and $\TPoneState$.
	By $C^{1\mathit{st}}$, we identify the cut associated with $\TPoneState^{1\mathit{st}}$.
	We extend the strategy with a new transition $t'$ and new places $\mathit{NP}$ such that
	$\lambda[\mathit{NP}] = \post{}{t} $,
	$\lambda(t') = t$,
	$\pre{}{t'} = C_\mathit{pre}$, and
	$\post{}{t'} = \mathit{NP}$.
	From $C' = \mathit{NP} + (C - C_\mathit{pre})$ containing the new places $\mathit{NP}$ and the places in $C$ that did not participate in $t$, we repeat the strategy from $C^{1\mathit{st}}$ to $C'$ infinitely often with new places and new transitions.
	The breadth-first search needs to continue with the situation before the \typeTwo case.
	Therefore, we associate to $\TPoneState'$ the cut $C^{1\mathit{st}}$.

	Suppose now the breadth-first traversal has reached a node in $T_f$ labeled with a $\TPenvstates$-state~$\TPoneState$ to which the cut $C$ is associated.
	Then, for each successor node in $T_f$ labeled with $\TPoneState'$, a corresponding environment transition $t$ exists with $\lambda(C_\mathit{pre}) = \pre{}{t}$ for $C_\mathit{pre} \subseteq C$ and we extend the strategy with a new transition~$t'$ and new places $\mathit{NP}$ such that
	$\lambda[\mathit{NP}] = \post{}{t}$,
	$\lambda(t') = t $,
	$\pre{}{t'} = C_\mathit{pre} $, and
	$\post{}{t'} = \mathit{NP}$.
	We associate to $\TPoneState'$ the cut $C' = \mathit{NP} + (C - C_\mathit{pre})$ containing the new places $\mathit{NP}$ and the places in $C$ that did not participate in $t$.
\end{definition}

To prove that the constructed strategy from \refDef{BGtoPG} for the system players in the Petri game  is winning, we prove that the considered decision markings of every winning strategy for Player~0 in the Büchi game are equivalent to the reachable cuts in the corresponding strategy for the system players in the Petri game.
A decision marking is \emph{considered} if it is directly reachable or indirectly reachable via backward moves in a state in the Büchi game.
By definition, reaching $F_B$ or $F_N$ does not consider a decision marking.
The \emph{equivalence relation} between a decision marking and a cut checks that a place $p$ with decision $d$ is in the decision marking iff a place $p'$ in the cut exists such that $p'$ represents $p$ in the original Petri game, $\post{}{p'}$ represents $d$, and the causal past of both $p$ and $p'$ is equivalent.

\begin{definition}[Equivalence relation between decision markings and cuts]\label{def:equivDTcuts}%
	The equivalence relation $\equiv$ between decision markings and cuts is defined as
	\begin{align*}
		\bbD \equiv C &\iff ( ( \forall (p, b, T) \in \bbD : \exists^{=1} c \in C : p = \lambda(c) \land T = \lambda(\post{}{c})~\land\\& \past^\tr((p, b, T)) = \past^\tr(c) ) \land ( \forall c \in C : \exists^{=1} (p, b, T) \in \bbD :\\& p = \lambda(c) \land T = \lambda(\post{}{c}) \land \past^\tr((p, b, T)) = \past^\tr(c) ) )
	\end{align*}
where $\past^\tr(c)$ is the causal past represented by a sequence of transitions and $\past^\tr((p, b, T))$ is the causal past represented by a sequence of transition subtracted by the used backward moves.
\end{definition}
The function $\past^\tr(c)$ returns the sequence of transitions that leads from the initial marking $\init$ in $\pNet^\sysstrat$ to~$C$ following the order of the breadth-first search in \refDef{BGtoPG}.
We apply $\lambda$ to each transition to obtain the original transitions.
The function $\past^\tr((p, b, T))$ returns the sequence of transitions minus the used backward moves that match the edges that lead to the state $\TPoneState$ from which $(p, b, T)$ was reached via backward moves.
The equivalence of sequences is up to reordering of transitions with disjoint preconditions which can be necessary for $\past^\tr((p, b, T))$ as the order of $\past^\tr(c)$ is fixed by the breadth-first search.

\begingroup
\def\thelemma{\ref{lem:BGtoPG}}
\begin{lemma}[From Büchi game to Petri game strategies]
	If Player~0 has a winning strategy in the Büchi game, then there is a winning strategy for the system players in the Petri game.
\end{lemma}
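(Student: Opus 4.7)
The plan is to apply the construction of \refDef{BGtoPG} verbatim to the tree $T_f$ generated by the winning strategy $f$, yielding a candidate strategy $\sysstrat = (\pNet^\sysstrat, \lambda^\sysstrat)$ for the Petri game. The core of the argument will be a single invariant: for every node of $T_f$ labelled with a state $\TPoneState$ and associated cut $C$ in $\sysstrat$, the equivalence $\TPdecset(\TPoneState) \equiv C$ from \refDef{equivDTcuts} holds. This equivalence is what lets us transfer information about decisions, causal past, and bad-marking avoidance from the Büchi game to the Petri game.

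To establish the invariant, I would proceed by induction along the breadth-first traversal. The base case is immediate since $\TPinit$ is associated with the cut $C_0$ with $\lambda^\sysstrat[C_0]=\init$, and both sides carry the same (empty) causal past and the same initial decisions. For the inductive step, I would do a case analysis on the five edge classes $\mathit{TOP}$, $\mathit{SYS}$, $\mathit{MCUT}$, $\mathit{NES}_\mathit{fire}$, and $\mathit{NES}_\mathit{finish}$. In each case, the construction either only refines decisions (leaving $C$ unchanged while $\TPdecset$ replaces $\top$ by the same subset of $\post{}{p}$ that will label the outgoing arcs of $C$), or adds exactly one transition $t'$ whose pre- and postconditions mirror the change from $\TPdecset(\TPoneState)$ to $\TPdecset(\TPoneState')$; the causal past of the new places is then obtained from the causal past of the consumed places by appending $t$, matching precisely how backward moves are extended in the successor state. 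Edges in $\mathit{STOP}_N$ cannot occur inside $T_f$ because $f$ is winning, while $\mathit{STOP}_B$ edges only fire when the current cut is final in $\pNet$.

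With the invariant in place, I would derive the winning properties. Determinism of $\sysstrat$ follows because no reachable state in $T_f$ lies in $\mathit{NDET}$, so at each system place at most one outgoing transition is enabled. Justified refusal follows from the equivalence clause $T = \lambda^\sysstrat[\post{\sysstrat}{c}]$, which makes the decisions stored in the Büchi game coincide with the transitions actually present in $\sysstrat$ at the corresponding place, so any transition absent from $\sysstrat$ is blocked by a decision tuple of some system place. Deadlock-avoidance follows because any reachable marking $M$ in $\sysstrat$ with $\lambda^\sysstrat[M]$ not final in $\pNet$ would, via the invariant, correspond to a state $\TPoneState \in \mathit{DL}\setminus\mathit{TERM}$, forcing the outgoing edge to $F_N$ and contradicting that $f$ is winning. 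Bad-marking avoidance is the crux: if some reachable marking $M$ in $\sysstrat$ corresponded to a bad marking, one reconstructs a sequence of backward moves that, applied to the decision marking of some state $\TPoneState$ visited by $f$, produces a decision marking whose underlying marking is $M$; then $\TPoneState \in \mathit{BAD}$, again forcing an edge to $F_N$, a contradiction.

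The main obstacle will be the \TypeTwo case. There the construction replicates the loop from $C^{1\mathit{st}}$ to $C'$ infinitely often, and I would have to verify two things: first, that this replication yields a legal branching process of $\pNet$, which is handled by using fresh copies of places and transitions in every iteration while preserving the homomorphism $\lambda^\sysstrat$; second, that no bad marking is reachable anywhere inside the repeated iterations. The latter relies on the fact that decision tuples with ended \TypeTwo status are retained after $\mathit{NES}_\mathit{finish}$, so backward moves remain applicable across the loop; any bad marking reachable within an iteration would therefore be witnessed by a state in $\mathit{BAD}$ visited by $f$, again a contradiction. Infinite branches of $T_f$ that never reach an mcut or a successful \TypeTwo finish are ruled out jointly by the Büchi acceptance condition on $f$ and the exclusion of useless repetitions via $\mathit{UR}$, which together guarantee that every infinite play along $T_f$ passes through mcut states infinitely often and thus corresponds to a well-defined infinite play in $\sysstrat$.
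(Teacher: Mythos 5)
Your overall route is the same as the paper's: apply \refDef{BGtoPG}, establish the equivalence of \refDef{equivDTcuts} by induction over the edge cases, and then read off determinism, justified refusal, deadlock-avoidance, and bad-marking avoidance, with the \typeTwo case treated separately. The gap is in your central invariant. Relating each node of $T_f$ to its associated cut only covers the cuts of $\sysstrat$ that arise from the \emph{one} interleaving of concurrent system-only transitions that Player~0 happens to schedule in the Büchi game. The branching process $\sysstrat$ has many more reachable cuts --- all other interleavings of those concurrent transitions --- and these are exactly the cuts in which a bad marking or a nondeterministic decision can hide; this is the entire reason the construction carries backward moves. Your treatment of this is the single clause ``one reconstructs a sequence of backward moves \ldots'', which is precisely the statement that needs proof: you must show that \emph{every} reachable cut $C$ of $\sysstrat$ is equivalent to a decision marking that is \emph{considered} in some state visited by $f$, i.e., obtainable from that state's decision marking by applying backward moves that are still recorded there. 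The paper's proof devotes its entire second direction to this: it constructs, for an arbitrary reachable cut $C$, a word applicable to $f$ together with the system-only transitions that must be rolled back, and then separately argues that backward moves need only be applied from a single state (the ``at most one transition can be left in $T$'' argument), which matters because backward moves are erased at mcuts and could otherwise become inapplicable. Without this direction your argument only shows that the canonical firing order avoids bad markings, not that the strategy does.

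The same issue infects your deadlock-avoidance step: you map ``any reachable marking $M$ in $\sysstrat$'' to a state in $\mathit{DL}\setminus\mathit{TERM}$ ``via the invariant'', but the invariant as you state it runs from nodes of $T_f$ to cuts, not from arbitrary reachable markings of $\sysstrat$ back to states of the Büchi game; that converse direction is again the considered-decision-marking machinery you have not established. The remaining ingredients (the \typeTwo replication yielding a legal branching process, $\mathit{UR}$ together with Büchi acceptance excluding infinite system-only play outside the \typeTwo case) match the paper and are fine at this level of detail.
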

\addtocounter{lemma}{-1}
\endgroup
\begin{proof}
	We translate a winning strategy for Player~0 in the Büchi game into a winning strategy for the system players in the Petri game as defined in \refDef{BGtoPG}.
	To prove that the constructed strategy is winning, we prove that the considered decision markings of every winning strategy for Player~0 in the Büchi game are equivalent as defined in \refDef{equivDTcuts} to the reachable cuts in the corresponding strategy for the system players in the Petri game.
	Then, losing behavior in the strategy for Player~0 in the Büchi game occurs if and only if losing behavior in the strategy for the system players in the Petri game occurs.
	We consider both directions for a winning strategy $f$ for Player~0 in the Büchi game and the corresponding strategy~$\sysstrat$ for the system players in the Petri game:
	\begin{itemize}
		\item We show that if a decision marking $\bbD$ is considered in a by $f$ reachable state of the Büchi game, then a reachable cut $C$ exists in $\sysstrat$ such that $ \bbD \equiv C $ holds.
			If~$\bbD$ is not considered via backward moves, then the word $w$ exists to reach the state of $\bbD$.
			The word $w^-$ is defined as $w$ without symbols $\top$.
			Firing transitions in the order of $w^-$ in the strategy leads to a cut $C$.
			By the strategy translation from \refDef{BGtoPG}, $w^-$ is applicable to $\sysstrat$ and, by the construction of the Büchi game, the precondition and postcondition of the used transitions are the same in both strategies.
			Therefore, $\bbD \equiv C$ holds.
			If $\bbD$ is reached via backward moves, then the word $w$ to reach the state of $\bbD$ and the word $w^\mathit{back}$ of applied backward moves exist.
			We define the subtraction of $w^\mathit{back}$ from $w^-$ from the end of both words.
			Firing transitions in the order of $w^- - w^\mathit{back}$ in the strategy leads to a cut $C$.
			By the strategy translation from \refDef{BGtoPG}, the backward moves can be subtracted from the word, $w^- - w^\mathit{back}$ is applicable to $\sysstrat$, and, by the construction of the Büchi game, the precondition and postcondition of the used transitions are the same in both strategies.
			Therefore, $\bbD \equiv C$ holds.
		\item We show that if a reachable cut $C$ exists in $\sysstrat$, then a decision marking is considered in a by $f$ reachable state of the Büchi game.
			The cut $C$ is reached via the sequence of transitions $T$.
			We only retain transitions with an environment place in their precondition in $T$.
			The choices of system players are already represented by $\sysstrat$.
			We start from the initial marking of $\sysstrat$ and obtain a sequence of transitions $w$ that is applicable to~$f$ and reaches~$C$ by recursively applying the following steps:
			If a \typeTwo case starts from the current cut (i.e., a minimal set of system players in the cut can play infinitely together without synchronizing with the environment player) and the underlying marking of the cut is the first marking to repeat in this \typeTwo case, then we add the transitions until the repetition of the marking occurs to $w$ and update the current cut accordingly.
			By the strategy translation from \refDef{BGtoPG}, at most on \typeTwo case can occur.
			Otherwise, we try to fire a transition with only system places in its precondition.
			If this is not possible, we try to fire a transition with an environment places in its precondition.
			We pick the environment transition corresponding to the first element of $T$ and afterward remove it from $T$.
			In both cases, the current cut becomes the cut reached after firing the transition.

			When none of these cases apply and there is a transition left in $T$, we need to add transitions with only system places in their precondition which are rolled back by backward moves to reach an equivalent decision marking to the cut~$C$.
			In this case, we fire a maximal sequence of transitions with only system places in their precondition such that only transitions with an environment place in the precondition are enabled.
			The enabled transitions include the remaining transition in~$T$.
			These fired transitions and all following transitions with only system places in their precondition are collected to be used as backward moves.
			We apply the recursive steps from before again.
			The first iteration removes the last transition from $T$ and afterward a finite number of transitions with only system places in their precondition are added.
			This procedure obtains $w'$.
			We add steps to $w'$ to make explicit decisions removing $\top$ according to the postcondition of the respective players and indicate the beginning of the \typeTwo case by changing the corresponding decision sets from negative to positive \typeTwo status to obtain $w$.
			By the construction of the Büchi game, $w$ is applicable to the Büchi game.
			Either $w$ directly leads to a decision marking~$\bbD$ such that $\bbD \equiv C$ holds or some of the collected backward moves and of the backward moves in the \typeTwo case are applied to the state in the Büchi game leading to a decision marking~$\bbD'$ such that $\bbD' \equiv C$ holds.
	\end{itemize}

	It remains to show that it suffices to only apply backward moves from one state, i.e., at most one transition can be left in $T$.
	By way of contradiction, assume that there are two states when backward moves are applied.
	Either the first earlier application is also possible at the second later application because the corresponding system players have not moved or performing the first backward moves makes it impossible to continue to the second backward move because the environment player is part of a transition in between and the used backward move is negated.
	
	It also remains to show that the constructed strategy for the Petri game is, in fact, a strategy, i.e., satisfies justified refusal and is deterministic and deadlock-avoiding.
	The strategy is deterministic and deadlock-avoiding because \emph{all} reachable markings are tested to exactly fulfill these requirements by construction.
	The usage of backward moves ensures that all reachable markings are checked.
	Justified refusal is fulfilled because system players can only disallow transitions from the postcondition in the original Petri game.
	Meanwhile, transitions with only an environment place in their precondition are always added.
	Therefore, every possible transition not in the Petri game has at least one system place in its precondition that universally forbids this transition in the original Petri game.   
	Because every reachable marking in the strategy is checked to have only deterministic decisions, we know that it is impossible to overlook transitions that could violate justified refusal when building the strategy from the Büchi game.
	Remember that the Büchi game fires transitions with the environment player as late as possible at mcuts, restricting the order in which transitions are added.

	As we showed that a strategy for Player~0 in a Büchi game and the corresponding strategy for the system players in a Petri game visit equivalent decision markings and cuts, we can conclude that for a winning strategy for Player~0 in a Büchi game no losing situation in the strategy for the system players in the Petri game can occur.
	Therefore, the strategy for the system players in the Petri game is winning.
\end{proof}

Before we can prove the reverse direction, we have to show that it suffices to only consider \emph{minimal} winning strategies for the system players in Petri games when translating them into strategies for Player~0 in the Büchi game.
This is necessary because for not minimal but winning strategies for the system players in Petri games, the backward moves in the corresponding Büchi game might falsely classify the corresponding strategy for Player~0 in the Büchi game as losing.
We prove that this cannot occur for minimal winning strategies and that all winning strategies can be made minimal while maintaining their winningness.

We define \emph{minimal strategies for the system players in Petri games} based on \emph{useless repetitions}~\cite{DBLP:conf/fsttcs/Gimbert17} which are defined for control games~\cite{DBLP:conf/icalp/GenestGMW13} played on asynchronous automata~\cite{DBLP:journals/ita/Zielonka87}.

Before defining our useless repetitions formally, we recall the prefix~\cite{DBLP:journals/acta/KhomenkoKV03} of a Petri net or a Petri game and define the suffix of a prefix.
\begin{definition}[Prefix~\cite{DBLP:journals/acta/KhomenkoKV03}]\label{def:prefix}%
	A branching process $\iota_\mathit{pre} = (\pNet^\iota_\mathit{pre}, \lambda^\iota_\mathit{pre})$ is a \emph{prefix} of a branching process $\iota = (\pNet^\iota, \lambda^\iota)$, denoted $\iota_\mathit{pre} \sqsubseteq_\mathit{pre} \iota$, if $\pNet^\iota_\mathit{pre}$ is a subnet of $\pNet^\iota$ (i.e., $\pl^\iota_\mathit{pre} \subseteq \pl^\iota$, $\tr^\iota_\mathit{pre} \subseteq \tr^\iota$, $\init^\iota_\mathit{pre} = \init^\iota$, and $\fl^\iota_\mathit{pre} = \fl^\iota \upharpoonright \pl^\iota_\mathit{pre} \times \tr^\iota_\mathit{pre} \cup \tr^\iota_\mathit{pre} \times \pl^\iota_\mathit{pre}$) such that if $t \in \tr^\iota_\mathit{pre}$ and $(p, t) \in \fl^\iota$ or $(t, p) \in \fl^\iota$, then $p \in \pl^\iota_\mathit{pre}$, if $p \in \pl^\iota_\mathit{pre}$ and $(t, p) \in \fl^\iota$, then $t \in \tr^\iota_\mathit{pre}$, and $\lambda^\iota_\mathit{pre}$ is the restriction of $\lambda^\iota$ to $\pl^\iota_\mathit{pre} \cup \tr^\iota_\mathit{pre}$.
\end{definition}

\begin{definition}[Suffix]\label{def:suffix}%
	A branching process $\iota_\mathit{suf} = (\pNet^\iota_\mathit{suf}, \lambda^\iota_\mathit{suf})$ is a \emph{suffix} of a branching process $\iota = (\pNet^\iota, \lambda^\iota)$
	if there exists a subset $M$ of a reachable marking in $\pNet^\iota$, $\pNet^\iota_\mathit{suf} = \pNet^\iota[M]$ without unreachable places, transitions, and flows, and $\lambda^\iota_\mathit{suf}$ is the restriction of $\lambda^\iota$ to $\pl^\iota_\mathit{suf} \cup \tr^\iota_\mathit{suf}$.
\end{definition}

We require that $\lambda^\sysstrat$, $\lambda^\iota_\mathit{pre}$, and $\lambda^\iota_\mathit{suf}$ are restricted corresponding to the Petri net they belong to.
The \emph{future} in $\pNet$ of a node $x$ in $\pNet$ is defined as the set $\mathit{fut}(x) = \{ y \in \pl \cup \tr \mid x \leq y \}$.

\begin{definition}[Useless repetition]\label{def:uselessrepetition}%
	A finite prefix $\iota_\mathit{pre} = (\pNet^\iota_\mathit{pre}, \lambda^\iota_\mathit{pre})$ is a \emph{useless repetition} of a winning strategy $\sysstrat = (\pNet^\sysstrat, \lambda^\sysstrat)$ iff there exists a suffix $\iota_\mathit{suf} = (\pNet^\iota_\mathit{suf}, \lambda^\iota_\mathit{suf})$ of the prefix $\iota_\mathit{pre}$ such that
	\begin{enumerate}[(1)]
		\item\label{uselessRepONE} the suffix contains only system places and the definition of the prefix has either removed all or no outgoing transitions of the places in the suffix (i.e., $\forall p \in \pl^\iota_\mathit{suf} : \forall t \in \tr : (p, t) \in \fl \Rightarrow ((p, t) \in \fl^\iota_\mathit{pre} \lor \post{\iota_\mathit{pre}}{p} = \emptyset ) $),
		\item\label{uselessRepTWO} the initial marking and the final marking of the suffix represent the same marking in the original Petri game (i.e., $ \exists^{=1} M \in \reach(\pNet^\iota_\mathit{suf}) : (\forall t \in \tr^\iota_\mathit{suf} : \pre{\iota_\mathit{suf}}{t} \not\subseteq M) \land \lambda^\iota_\mathit{suf}[\init^\iota_\mathit{suf}] = \lambda^\iota_\mathit{suf}[M] $),
		\item\label{uselessRepTHREE} the decisions of the system players in the initial marking of the suffix differ from the decisions of the system players in the final marking of the suffix (i.e., $ \exists^{=1} M \in \reach(\pNet^\iota_\mathit{suf}) : (\forall t \in \tr^\iota_\mathit{suf} : \pre{\iota_\mathit{suf}}{t} \not\subseteq M) \land \exists p^\iota_\mathit{suf} \in \init^\iota_\mathit{suf}, p \in M : \lambda(p^\iota_\mathit{suf}) = \lambda(p) \land \exists t^\iota_\mathit{suf} \in \post{\sysstrat}{p^\iota_\mathit{suf}} : \pre{\iota_\mathit{suf}}{t^\iota_\mathit{suf}} \subseteq \init^\iota_\mathit{suf} \land \forall t \in \post{\sysstrat}{p} : \lambda(t^\iota_\mathit{suf}) = \lambda(t) \Rightarrow \pre{\sysstrat}{t} \not\subseteq M $), and
		\item\label{uselessRepFOUR} the suffix only contains synchronizations between the players when the synchronization also occurred before the suffix but after the last synchronization with the environment player (i.e., $\forall t \in \tr^\iota_\mathit{suf} : \pre{\iota_\mathit{suf}}{t} > 1 \Rightarrow \forall : p_1, p_2 \in \init^\iota_\mathit{suf} \cap \past^{\iota_\mathit{pre}}(t) : \exists t^\iota_\mathit{pre} \in \tr^\iota_\mathit{pre} : \mathit{fut}^{\pNet^\iota_\mathit{pre}}(t^\iota_\mathit{pre}) \cap \plE^{\iota_\mathit{pre}} = \emptyset \Rightarrow t^\iota_\mathit{pre} \in \past^{\iota_\mathit{pre}}(p_1) \land t^\iota_\mathit{pre} \in \past^{\iota_\mathit{pre}}(p_2) $).
	\end{enumerate}
\end{definition}
We utilize useless repetitions to simplify the behavior of system players in the \typeTwo case and between successors of mcuts and the next mcut (cf.\ Condition~(\ref{uselessRepONE})).
Therefore, our application of useless repetitions is independent of the environment.
Condition~(\ref{uselessRepTWO}) carries over from~\cite{DBLP:conf/fsttcs/Gimbert17} to our usage.
We add Condition~(\ref{uselessRepTHREE}) to handle infinite strategies in the case of Petri games.
It ensures that the suffix is not infinitely often repeated immediately after the suffix.
This would lead to infinitely many applications of useless repetitions at one position.
We prevent this by ensuring that the transition after the suffix is \emph{not} the same as at the beginning of the suffix.
We add Condition~(\ref{uselessRepFOUR}) to require that a useless repetition only contains exchange of causal past via synchronization between the system players if this exchange already occurred before the useless repetition but after the respective last successor of an mcut of the system players.
Notice that a suffix of a useless repetition has a unique final marking and the removal of outgoing transitions can only happen at the unique final marking of the suffix.

Our usage of useless repetitions is far more restricted than for the original definition~\cite{DBLP:conf/fsttcs/Gimbert17}.
There it is used to decide the existence of a winning strategy in \emph{decomposable games} by bounding the size of strategies without useless repetitions.
The original definition of useless repetitions further requires that the future is the same when skipping and not skipping the suffix.
For our approach, this is fulfilled by construction.

\begin{definition}[Minimal strategies in Petri games]%
	A \emph{minimal strategies for the system players in Petri games} has no useless repetitions.
\end{definition}

We skip useless repetitions in winning strategies by shortcuts.

\begin{definition}[Shortcut]\label{def:shortcut}%
	A strategy $\iota_\mathit{short} = (\pNet^\iota_\mathit{short}, \lambda^\iota_\mathit{short})$ is a \emph{shortcut} via useless repetition $\iota_\mathit{pre} = (\pNet^\iota_\mathit{pre}, \lambda^\iota_\mathit{pre})$ with suffix $\iota_\mathit{suf} = (\pNet^\iota_\mathit{suf}, \lambda^\iota_\mathit{suf})$ of a strategy $\iota = (\pNet^\iota, \lambda^\iota)$ iff
	$\pl^\iota_\mathit{short} = \pl^\iota \setminus (\pl^\iota_\mathit{suf} \setminus \init^\iota_\mathit{suf})$,
	$\tr^\iota_\mathit{short} = \tr^\iota \setminus \tr^\iota_\mathit{suf}$,
	$\fl^\iota_\mathit{short}$ satisfies $\fl^\iota_\mathit{suf} \cap \fl^\iota_\mathit{short} = \emptyset \land \exists^{=1} M \in \reach(\pNet^\iota_\mathit{suf}) : (\forall t \in \tr^\iota_\mathit{pre} : \pre{\iota_\mathit{pre}}{t} \not\subseteq M) \land  \forall p \in M : \forall (p, t) \in \fl^\iota : (p, t) \notin \fl^\iota_\mathit{short} \land \exists^{=1} p^\iota_\mathit{suf} \in \init^\iota_\mathit{suf} : \lambda(p^\iota_\mathit{suf}) =  \lambda(p) \land (p^\iota_\mathit{suf}, t) \in \fl^\iota_\mathit{short}$,
	$\init^\iota_\mathit{short} = \init^\iota$, and
	$\lambda^\iota_\mathit{short}$ is the restriction of $\lambda^\iota$ to $\pl^\iota_\mathit{short} \cup \tr^\iota_\mathit{short}$.
\end{definition}
A shortcut is obtained by removing the suffix of a useless repetition from a given strategy. 
We identify the reachable final marking of the suffix by~$M$.
The behavior following this marking in the original strategy is pulled forwarded to the initial marking of the suffix.
All places, transitions, and flows of the suffix between its initial marking and the marking~$M$ are removed.

In \cite{DBLP:conf/fsttcs/Gimbert17}, removing useless repetitions by shortcuts is applied only to finite strategies because the winning condition is termination in final states whereas we also apply it to infinite strategies.
This is no problem because it is only applied to the finite parts between successors of mcuts and the next mcut and to system players in the \typeTwo case.
Therefore, we might require infinitely many shortcuts but each of them is well-defined.
We prove that every winning strategy can be reduced to a minimal winning strategy by removing all useless repetitions.

\begin{lemma}[Minimal winning strategies]
	Removing a useless repetition via a shortcut from a winning strategy in a Petri game results again in a winning strategy in the Petri game.
\end{lemma}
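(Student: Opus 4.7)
The plan is to show that the shortcut $\iota_\mathit{short}$ remains a well-defined strategy and that every play of $\iota_\mathit{short}$ corresponds, up to labels in the underlying Petri net, to a play of $\sysstrat$ that bypasses the suffix. Since $\sysstrat$ is winning, this label-preserving correspondence transfers winningness to $\iota_\mathit{short}$.

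First, I would check that $\iota_\mathit{short}$ is a branching process of $\pNet$. The occurrence-net properties follow from those of $\iota$: removing the internal places $\pl^\iota_\mathit{suf}\setminus\init^\iota_\mathit{suf}$ and transitions $\tr^\iota_\mathit{suf}$ of the suffix and redirecting outgoing flows of the unique final suffix marking $M$ to the matching places of $\init^\iota_\mathit{suf}$ preserves that each non-initial place still has exactly one incoming transition, that the inverse flow relation remains well-founded, and, by condition (3) of useless repetitions, that no new self-conflict is introduced at the rewired places. The homomorphism $\lambda^\iota_\mathit{short}$ is well defined because $\lambda^\iota[\init^\iota_\mathit{suf}]=\lambda^\iota[M]$ by condition (2), so any transition that previously had $p\in M$ in its precondition now has the unique $p^\iota_\mathit{suf}\in\init^\iota_\mathit{suf}$ with $\lambda(p^\iota_\mathit{suf})=\lambda(p)$ in its precondition instead, preserving preconditions and postconditions under $\lambda^\iota_\mathit{short}$.

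Next, I would verify justified refusal, determinism, and deadlock-avoidance. The key observation is that the sets of allowed transitions at places of $\init^\iota_\mathit{suf}$ in $\iota_\mathit{short}$ are exactly those used by $\sysstrat$ at the corresponding places of $M$. Since $\sysstrat$ satisfies justified refusal at $M$, every transition of $\pNet$ that is blocked from $\init^\iota_\mathit{suf}$ in $\iota_\mathit{short}$ is already blocked in $\sysstrat$ at $M$ by some system place, and that same relabeled place lies in $\init^\iota_\mathit{suf}$; condition (4) ensures that no causal information gained strictly inside the suffix is needed to justify these decisions, since any synchronization occurring in the suffix was already present among the involved players before the suffix started. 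Determinism at $\init^\iota_\mathit{suf}$ likewise reduces to determinism of $\sysstrat$ at $M$. Finally, I would set up a label-preserving simulation: every firing sequence in $\iota_\mathit{short}$ can be mimicked in $\sysstrat$ by inserting, just before each rewired transition fires, the complete firing sequence of the suffix from $\init^\iota_\mathit{suf}$ to $M$. By condition (2), this insertion does not change the projected marking in $\pNet$, hence every reachable marking of $\iota_\mathit{short}$ projects under $\lambda^\iota_\mathit{short}$ to a reachable marking of $\sysstrat$ under $\lambda^\sysstrat$. Deadlock-avoidance transfers because final reachable markings of $\iota_\mathit{short}$ lift to final reachable markings of $\sysstrat$, and the winning condition for bad markings transfers because no marking in $\badmarkings$ is reachable in $\sysstrat$ and hence none is reachable in $\iota_\mathit{short}$.

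The main obstacle will be the justified-refusal step, since it requires reasoning about causal memory: the decisions pulled forward from $M$ were originally taken with causal knowledge that included the suffix, whereas in the shortcut the players decide at $\init^\iota_\mathit{suf}$ without having observed the suffix. Conditions (1) and (4) are essential here, as together they guarantee that the suffix exchanges no information with the environment player and no essentially new information between system players beyond synchronizations already established before the suffix, so that the advanced decisions remain consistent with the causal past available at $\init^\iota_\mathit{suf}$.
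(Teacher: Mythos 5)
Your proposal is correct and follows essentially the same route as the paper's own (much terser) proof: the shortcut removes only non-synchronizing transitions with system-only preconditions whose causal information was already exchanged before the suffix, so determinism and justified refusal carry over, and since the shortcut reaches no new markings (every reachable marking of the shortcut is, up to relabeling, reachable in the original strategy), bad markings remain avoided and deadlock-avoidance transfers. The only small slip is your appeal to condition~(3) of useless repetitions to rule out new self-conflicts; that condition actually serves to prevent infinitely many shortcut applications at a single position, while the absence of new conflicts at the rewired places follows from conditions~(1), (2), and~(4) as you use them elsewhere.
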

\begin{proof}
	The construction of the shortcut only removes transitions with only system places in their precondition, that are not synchronizing with other players and have exchanged their history before the removed part, and then continues with the existing future.
	Therefore, the strategy is still deterministic and satisfies justified refusal.
	Fewer markings (no new markings) are reached and therefore the shortcut also avoids bad markings.
\end{proof}

Removing all useless repetitions and updating the winning strategy accordingly in each step gives a minimal winning strategy.

Before we can prove the reverse direction, we also have to show that unnecessary \typeTwo cases in strategies for the system players in Petri games can be removed.
The Büchi game allows at most one \typeTwo case per branch by preventing the change from negative to positive \typeTwo status for decision tuples from the point onwards that a decision tuple with ended \typeTwo status exists in the decision marking.
We use the following definition to identify unnecessary \typeTwo cases in strategies for the system players in Petri games.

\begin{definition}[System place for the \TypeTwo case]%
	Given a strategy $\sysstrat$ for the system players in a Petri game and a cut $C$ in the Petri game, the function $\mathit{nes}_\sysstrat(C)$ calculates all minimal sets of system places from $C$ that satisfy the \typeTwo case in the Büchi game and is defined as
	\begin{align*}
		\mathit{nes}_\sysstrat(C)&= \{ C_{t2} \subseteq C \mid
	(\forall p \in C_{t2} : (\forall p_E \in \plE^\sysstrat : p_E \not\leq p \Rightarrow \forall q \in \mathit{fut}^\sysstrat(p) :\\& p_E \not\leq q ) \land \mathit{fut}^\sysstrat(p) \text{ is infinite} ) \land
	(\exists t_1, \ldots, t_n \in \tr^\sysstrat : \exists C_1, \ldots, C_n \subseteq \pl^\sysstrat :\\& C_{t2} \firable{t_1} C_1 \firable{t_2} \ldots \firable{t_n} C_n \land \DSmarking(C_{t2}) = \DSmarking(C_n) \land \forall i \in {1, \ldots, n-1} :\\& \DSmarking(C_{t2}) \neq \DSmarking(C_i)) \land
	( \forall C_{t2}' \subseteq C_{t2} : C_{t2} \setminus C_{t2}' \notin \mathit{nes}_\sysstrat(C))~\land\\&
	( \forall p \in C_{t2} : \exists i \in \{1,\ldots,n\} : p \notin C_i )
	\} .
	\end{align*}
\end{definition}
We use $n$ to search for repetitions of markings in the underlying Petri net.
Therefore, $n$ can be upper bounded by the exponential size of the reachability graph of the underlying Petri net.
The cut~$C_{t2}$ corresponding to the \typeTwo case cannot synchronize with the environment player and has to have an infinite future.
Further, there has to be a sequence of transitions such that the marking of $C_{t2}$ is repeated after firing the sequence but not during firing it.
Additionally, $C_{t2}$ has to be minimal in the sense that no places can be removed from $C_{t2}$ while $C_{t2}$ remains a cut corresponding to the \typeTwo case.
At last, each place has to be left at least once between $C$ and $C_{t2}$ at a position $i$.

We remove unnecessary \typeTwo cases in strategies for the system players in Petri games:
\begin{lemma}[One \TypeTwo case per branch]\label{lem:oneTTperBranch}
	For each winning strategy $\sysstrat$ for the system players in a bounded Petri game with one environment player and bad markings, there exists a winning strategy $\sysstrat'$ for the system players with, for each reachable cut, at most one minimal set of system players playing infinitely together without synchronizing with the environment player.
	The strategy $\sysstrat'$ has \emph{at most one \typeTwo case per branch} in the reachability graph.
\end{lemma}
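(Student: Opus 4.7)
The plan is to construct $\sysstrat'$ from a given winning strategy $\sysstrat$ by iteratively eliminating redundant \typeTwo configurations. Consider the reachability graph of $\sysstrat$ and traverse it in a breadth-first manner. At every reachable cut $C$, compute $\mathit{nes}_\sysstrat(C)$. If $|\mathit{nes}_\sysstrat(C)| \leq 1$ at every cut on every branch and no branch exhibits a second \typeTwo case after one has already completed, then $\sysstrat' = \sysstrat$. Otherwise, we pick the first violation on some branch and transform $\sysstrat$ locally, yielding a strategy that is still winning and in which this violation has disappeared.

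The local transformation works as follows. Suppose $C$ is the earliest cut on a branch with $C_{t2}^1, C_{t2}^2 \in \mathit{nes}_\sysstrat(C)$ distinct (or $C$ is the first cut on a branch where a new minimal \typeTwo set appears after a previous \typeTwo case has already been witnessed earlier on the same branch). Since the players in $C_{t2}^1$ and $C_{t2}^2$ never synchronize with the environment in their entire future, the choices of their allowed transitions in $\sysstrat$ can be modified without affecting the causal information available to the environment player or to any other system player along this branch. Fix one of them, say $C_{t2}^1$, as the designated \typeTwo set. For the players in $C_{t2}^2 \setminus C_{t2}^1$, we rewrite $\sysstrat$ below $C$ so that the cyclic sequence of transitions witnessing their membership in $\mathit{nes}_\sysstrat(C)$ is unfolded exactly once and then the branches following this unfolding are made isomorphic copies of the continuations of $\sysstrat$ attached to the corresponding places. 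In effect, the looping behavior of the second \typeTwo group is replaced by a single execution of its cycle, after which the players continue as dictated by $\sysstrat$ from the repetition point. Performing this operation also for any later branch that starts a new \typeTwo case after an earlier one yields a strategy with at most one \typeTwo case per branch and, at each reachable cut, at most one minimal \typeTwo set.

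To see that the result is again a winning strategy, note that every reachable cut in the transformed strategy corresponds, up to the labelling $\lambda^\sysstrat$, to a cut reachable in $\sysstrat$ (the construction unfolds existing loops a finite number of times and reattaches existing continuations), so no new markings are introduced and all previously avoided bad markings remain avoided. Justified refusal and determinism survive because we change only the \emph{firing order} and \emph{finitely many repetitions} of transitions with no environment place in their precondition, while keeping the sets of allowed transitions at each system place untouched; the same system places accept or prohibit the same labels as in $\sysstrat$. Deadlock-avoidance is preserved because the unfolded continuations either lead to termination matching $\sysstrat$ or continue into the designated \typeTwo loop of $C_{t2}^1$. Repeating this procedure (possibly transfinitely along the branching structure, but each application strictly reduces a well-founded measure counting secondary \typeTwo sets at the shallowest offending cut) yields the desired $\sysstrat'$.

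The main obstacle is bookkeeping: one must make precise that collapsing a secondary \typeTwo cycle into a single execution does not accidentally enable a transition that $\sysstrat$ forbade, and that the transformation composes correctly when several secondary \typeTwo cases appear on disjoint branches, which is why we formalise the reattachment through isomorphic copies of existing subprocesses of $\sysstrat$ rather than by ad hoc surgery on $\pNet^\sysstrat$.
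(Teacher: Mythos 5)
There is a genuine gap in the central step: you never pin down what the players of the secondary \typeTwo{}set actually do after your surgery. As written, you unfold their witnessing cycle once and then reattach ``isomorphic copies of the continuations of $\sysstrat$ attached to the corresponding places'' --- but the continuation of $\sysstrat$ at the repetition point of a \typeTwo{}loop is precisely another iteration of that loop, so the reattachment reproduces the infinite, environment-free behavior you were trying to eliminate, and $\mathit{nes}_{\sysstrat'}(C)$ would still contain the secondary set. The only way to actually remove it is to make those players stop firing, but that contradicts your own preservation argument that ``the same system places accept or prohibit the same labels as in $\sysstrat$'': stopping them means emptying the postcondition of some system places in the branching process, i.e.\ changing their decisions. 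Justified refusal tolerates this (a system place may refuse everything), but your proof explicitly relies on the decisions being untouched, so the argument as given does not go through.

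The paper's proof takes the blunter route that your reattachment machinery is trying to avoid: in a breadth-first copy of $\sysstrat$ it simply does \emph{not add} any outgoing transitions for players whose enabling cut lies inside a secondary set of $\mathit{nes}_\sysstrat(C)$ (either because a disjoint \typeTwo{}case already occurred in the past, or because another set at the same cut was already granted an outgoing transition). Those players terminate immediately; no unfolding of their cycle is needed. Correctness then follows from two observations you would also need: the winning condition is a safety condition, so reaching strictly fewer markings preserves winning; and deadlock-avoidance survives because the one designated \typeTwo{}set keeps playing infinitely on that branch, so no reachable marking of the modified strategy is final. If you replace your ``unfold once and reattach'' step by ``cut off the secondary players' futures'' and replace the claim that decisions are untouched by an appeal to the fact that system places may refuse all transitions under justified refusal, your argument essentially becomes the paper's.
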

\begin{proof}
	We initialize $\sysstrat'$ with the initial marking of $\sysstrat$.
	We fire enabled transitions from the initial marking in $\sysstrat$ in breadth-first search manner to obtain all reachable cuts.
	For each reached cut $C$ (including the initial marking), we calculate $\mathit{nes}_\sysstrat(C)$.
	We copy fired transitions $t$ and their flow and postcondition to $\sysstrat'$ unless either the precondition of $t$ is not part of $\sysstrat'$ or $\mathit{nes}_\sysstrat(C)$ contains at least one set, the precondition of $t$ is a subset or equal to one set in $\mathit{nes}_\sysstrat(C)$, and one of the two following conditions holds:
	\begin{enumerate}
		\item The fired transition $t$ is not added if there exists a disjoint \typeTwo case in the past of $C$.
		\item The fired transition $t$ is not added if there exists no disjoint \typeTwo case in the past of $C$ but an outgoing transition from another set in $\mathit{nes}_\sysstrat(C)$ is already added to $\sysstrat'$.
	\end{enumerate}

	This construction allows for each reachable cut at most one minimal set of system players playing infinitely together without synchronizing with the environment player from the point onward that the terminated system players have the same position of the environment player in their causal past, i.e., at the latest after the first repetition of a marking for this set of system players playing infinitely together without synchronizing with the environment player.
	The winningness of $\sysstrat$ translates to $\sysstrat'$ because the termination of the system players is deadlock-avoiding as another set of system players plays infinitely.
	Furthermore, fewer markings are reached which is winning for the winning condition of bad markings.
\end{proof}

In the following, we assume that minimal strategies for the system players in a Petri game have at most one \typeTwo case per branch in the reachability graph.
We can now prove the reverse direction from strategies for the players in Petri games to strategies for Player~0 in the corresponding Büchi game.
We define the translation from strategies for the players in Petri games to strategies for Player~0 in the corresponding Büchi game and prove it correct.

\begin{definition}[Translation from Petri game strategies to Büchi game strategies]\label{def:PGtoBG}%
	In the following, we translate minimal winning strategies for the system players in the Petri game into winning strategies for Player~0 in the Büchi game.
	Given a winning strategy $\sysstrat = (\pNet^\sysstrat, \lambda^\sysstrat)$ of $\pGame$ and a prefix $w \in \TPstates^* \cdot \TPsysstates$ of a play of $\TPgame$ simulating $\pGame$, we compute the choice $f(w)$ of a strategy $f$ for $\TPgame$.
	Let~$w^-$ result from $w$ by deleting all states containing $\top$.
	Starting with the initial marking $\init^\sysstrat$ of $\pNet^\sysstrat$ and firing the sequence of transitions corresponding to $w^-$ in $\sysstrat$, we arrive at a cut $C$ of $\sysstrat$, which by definition of $\TPsysstates$ does not correspond to an mcut.
	We assume a function $\mathit{cut}(\id, p, C)$ that returns for an identifier $\id \in \{1, \ldots, \maxSys\}$ and a place~$p$ in the Petri game the corresponding place $p_\mathit{cut}$ in $C$ of $\sysstrat$.

	There are three cases: the first one resolves $\top$, the second one fires a transition with only system places in its precondition not in the \typeTwo case, and the third one does so in the \typeTwo case.
	For the three cases, we apply the definitions of $\mathit{TOP}$, $\mathit{SYS}$, and $\mathit{NES}$ for a given fired transition and a  by $\mathit{nes}_\sysstrat(C)$ given (possibly empty) set of multisets over system places for the \typeTwo case.
	Due to \refLemma{oneTTperBranch}, $\mathit{nes}_\sysstrat(C)$ contains at most one set and if it contains a set no \typeTwo case occurred in the past of the current cut $C$.

	\smallskip
	\noindent
	\emph{Case 1} ($\mathit{TOP}$). If $\mathit{last}(w) \in \TPsysstates$ contains $\top$ in the decision marking, we make decisions and start the \typeTwo case if $\mathit{nes}_\sysstrat(C)$ is not empty.
		The state $\mathit{last}(w)$ does not contain any decision tuples with positive \typeTwo status in the decision marking.
		Formally, we are at a state $ \mathit{last}(w) = (\bbD, M_{T2}, \BR_1, \ldots, \BR_{\maxSys}) $ with corresponding cut $C$.
		We build the successor state $ (\bbD', M_{T2}', \BR_1', \ldots, \BR_{\maxSys}') $ that our strategy picks.
		We pick the unique $C_{t2} \in \mathit{nes}_\sysstrat(C)$ if $\mathit{nes}_\sysstrat(C)$ is not empty.
		Otherwise, $C_{t2} = \emptyset$.
		We define $\bbD' = \{ (\id, p, b, \lambda(\post{\sysstrat}{p_\mathit{cut}}), K) \in \decisionsets \mid (\id, p, b, T, K) \in \bbD \land \mathit{cut}(p, \id, C) = p_\mathit{cut} \land p_\mathit{cut} \in C \setminus C_{t2} \} \cup
		\{ (\id, p, \True, \lambda(\post{\sysstrat}{p_\mathit{cut}}), K) \in \decisionsets \mid (\id, p, \False, T, K) \in \bbD \land \mathit{cut}(p, \id, C) = p_\mathit{cut} \land p_\mathit{cut} \in C_{t2} \} $.
		If $C_{t2}$ is not the empty set, then $M_{T2}' = \DSmarking(\bbD_{t2}')$ and the remaining elements stay the same.
		If $C_{t2}$ is the empty set, then all elements except for $\bbD'$ stay the same.

	\smallskip
	\noindent
	\emph{Case 2} ($\mathit{SYS}$). If $\mathit{last} \in \TPsysstates$ contains no $\top$ and no decision tuple in the decision marking has positive \typeTwo status, an enabled transition fires to the cut~$C'$ and the \typeTwo case is started if $\mathit{nes}_\sysstrat(C)$ is not empty.
		As~$C$ is not an mcut, some transition with only system places in its precondition is enabled in $C$.
		We choose one of the enabled transitions and call it $t$.
		Let $t$ lead to a successor cut $C'$.
		Formally, we are at a state $ \mathit{last}(w) = (\bbD, M_{T2}, \BR_1, \ldots, \BR_{\maxSys}) $ with corresponding cut $C$.
		We build the successor state $ (\bbD', M_{T2}', \BR_1', \ldots, \BR_{\maxSys}') $ that our strategy picks.
		We pick the unique $C_{t2} \in \mathit{nes}_\sysstrat(C')$ if $\mathit{nes}_\sysstrat(C')$ is not empty.
		Otherwise, $C_{t2} = \emptyset$.
		We define $\bbD' = \{ (\id, p, b', T, K) \in \decisionsets \mid (\id, p, b, T, K) \in \bbD \land (p \notin \pre{}{t} \lor t \notin T) \land \mathit{cut}(p', \id) = p_\mathit{cut} \land (p_\mathit{cut} \in (C \setminus C_{t2}) \Rightarrow b' = b) \land (p_\mathit{cut} \in C_{t2} \Rightarrow b' = \True ) \} \cup
		\{ (\id, p', b', \lambda(\post{\sysstrat}{p_\mathit{cut}}), K') \in \decisionsets \mid (\id, p, b, T, K) \in \bbD \land p\tfl(\bbD, \id, t)p' \land p \in \pre{}{t} \land t \in T \land  \mathit{cut}(p', \id) = p_\mathit{cut} \land (p_\mathit{cut} \in (C \setminus C_{t2}) \Rightarrow b' = \False) \land (p_\mathit{cut} \in C_{t2} \Rightarrow b' = \True ) \land K' = \max(\{\DSlastmcut(D) \mid D \in \bbD_{\pre{}{t}} \}) \} \cup
		\{ (\id', p', b', \lambda(\post{\sysstrat}{p_\mathit{cut}}), K') \in \decisionsets \mid \FLstart \tfl(t)p' \land \id' = \mathit{nextID}(\bbD, t,  \FLstart \tfl(t) p') \land  \mathit{cut}(p', \id) = p_\mathit{cut} \land (p_\mathit{cut} \in (C \setminus C_{t2}) \Rightarrow b' = \False) \land (p_\mathit{cut} \in C_{t2} \Rightarrow b' = \True ) \land K' = \max(\{\DSlastmcut(D) \mid D \in \bbD_{\pre{}{t}} \} $.
		The backward moves of all participating players are updated accordingly, i.e., we add $ (\bbD_{\pre{}{t}} , \bbD'_{\post{}{t}} ) $ and potentially the backward move for all changes of \typeTwo status of decision tuples that did not participate in~$t$.
		If $C_{t2}$ is not the empty set, then $M_{T2}' = \DSmarking(\bbD_{t2}')$ and the remaining elements stay the same.
		If $C_{t2}$ is the empty set, then all remaining elements stay the same.

	\smallskip
	\noindent
	\emph{Case 3} ($\mathit{NES}$). If $\mathit{last} \in \TPsysstates$ contains a decision tuple with positive \typeTwo status in the decision marking, an enabled transition fires and we finish the \typeTwo case if the indicated marking to repeat in the \typeTwo case is reached.
	The state $\mathit{last}(w)$ does not contain any $\top$ in the decision marking.
	As at least one decision tuple in the decision marking from $\mathit{last}$ corresponding to cut $C$ has positive \typeTwo status, some transition with only system places in its precondition is enabled in~$C$ and has a future of infinite length and without synchronization with the environment player.
	We choose one of the these transitions and call it~$t$.
	Let $t$ lead to a successor cut $C'$.
	Formally, we are at a state $ \mathit{last}(w) = (\bbD, M_{T2}, \BR_1, \ldots, \BR_{\maxSys}) $ with corresponding cut $C$.
	We build the successor state $ (\bbD', M_{T2}', \BR_1', \ldots, \BR_{\maxSys}') $ that our strategy picks.
	Two cases can occur: either $M_{T2}$ is reached for the second time or not.
	\begin{itemize}
		\item If $M_{T2}$ is \emph{not} reached for the second time, then we define $\bbD' = ( \bbD \setminus \bbD_{\pre{}{t} \land t2} ) \cup \{ (\id, p', \True, \lambda(\post{\sysstrat}{p_\mathit{cut}}), K') \in \decisionsets \mid (\id, p, b, T, K) \in \bbD \land p\tfl(\bbD, \id, t)p' \land p \in \pre{}{t} \land t \in T \land \mathit{cut}(p', \id) = p_\mathit{cut} \land K' = \max(\{\DSlastmcut(D) \mid D \in \bbD_{\pre{}{t}} \})\} \cup\\
			\{ (\id', p', \True, \lambda(\post{\sysstrat}{p_\mathit{cut}}), K') \in \decisionsets \mid \FLstart \tfl(t)p' \land K' = \max(\{\DSlastmcut(D) \mid D \in \bbD_{\pre{}{t}} \}) \land \mathit{cut}(p', \id) = p_\mathit{cut} \land \id' = \mathit{nextID}(\bbD, t,  \FLstart \tfl(t) p') \} $, i.e., transition $t$ fires, decision tuples for places in the postcondition of $t$ retain positive \typeTwo status, and these decision tuples make their decision according to the structure of the strategy.
			In this case, we update the backward moves, i.e., we add $ (\bbD_{\pre{}{t}} , \bbD'_{\post{}{t}} ) $ to $\BR_\id$ to obtain $\BR_\id'$ for participating player with identifier $\id$.
			The remaining elements stay the same.
		\item If $M_{T2}$ is reached for the second time, we perform the same step as for not reaching $M_{T2}$ for the second time and update all decision tuples with positive \typeTwo status to ended \typeTwo status, i.e., for $\bbD''$ obtained from the previous step,
			$\bbD' = \{ (\id, p, b', T, K) \in \decisionsets \mid (\id, p, b, T, K) \in \bbD'' \land (b = \False \Rightarrow b' = \False) \land (b = \True \Rightarrow b' = \End) \}$.
			We set $M'_{T2}$ to the empty set and update the backward moves as in the definition of $\mathit{NES}_\mathit{finish}$.
			The remaining elements stay the same.
	\end{itemize}
\end{definition}

\begingroup
\def\thelemma{\ref{lem:PGtoBG}}
\begin{lemma}[From Petri game to Büchi game strategies]
	If the system players have a winning strategy in the Petri game, then there is a winning strategy for Player~0 in the Büchi game.
\end{lemma}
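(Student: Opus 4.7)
The plan is to preprocess the winning Petri-game strategy $\sysstrat$ so that it is amenable to the translation in \refDef{PGtoBG}, define Player~0's strategy $f$ by that translation, and then certify $f$ winning by establishing a tight correspondence between conforming Büchi-game plays and the cuts reachable in $\sysstrat$. The preprocessing is essential: first replace $\sysstrat$ by an equivalent minimal winning strategy with no useless repetitions and, by \refLemma{oneTTperBranch}, with at most one \typeTwo case per branch. This is needed because the Büchi game punishes useless repetitions via $\mathit{UR}$ and forbids reopening a \typeTwo case on a branch that has already seen \typeTwo status $\End$.

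The technical heart of the argument is an invariant proved by induction on prefixes $w$ of plays conforming to $f$: the decision marking $\TPdecset(\mathit{last}(w))$ satisfies $\TPdecset(\mathit{last}(w)) \equiv C_w$ in the sense of \refDef{equivDTcuts}, where $C_w$ is the cut reached in $\sysstrat$ by firing the transition sequence obtained from $w$ after deleting the $\top$-resolving steps. The inductive case split follows the four kinds of outgoing edges: $\mathit{TOP}$ only enriches the decision tuples with choices read off from $\post{\sysstrat}{p_\mathit{cut}}$ and starts a \typeTwo case whenever $\mathit{nes}_\sysstrat(C)$ is nonempty; $\mathit{SYS}$ and $\mathit{NES}$ mirror the firing of a system-only transition enabled in $C$; and $\mathit{MCUT}$ corresponds to the environment's choice at an mcut, for which Player~1 may be given any enabled environment transition and the translation simply records the result. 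In every case the updates of $\BR_\id$, of the last-mcut counters, and of the \typeTwo flags are engineered to coincide with the causal past $\past^\tr$ on the Petri-game side.

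From the invariant I would derive that $f$ avoids every edge in $\mathit{STOP}_N$. Because $\equiv$ extends to decision markings obtained by any sequence of applicable backward moves (the backward moves exactly invert the forward construction), any decision marking reachable through the Büchi game's backward search corresponds to a genuine cut in $\sysstrat$; determinism, deadlock-avoidance, and the bad-markings condition of $\sysstrat$ therefore exclude $\mathit{NDET}$, $\mathit{DL}\setminus\mathit{TERM}$, and $\mathit{BAD}$. Minimality excludes $\mathit{UR}$. The very definition of $\mathit{nes}_\sysstrat(C)$ certifies that the marking stored in $M_{T2}$ is revisited without environment synchronization and that every participating player moves at least once, so $\mathit{DL}_{t2}$, $\mathit{SYNC}_{t2}$, $\mathit{VAN}_{t2}$, and $\mathit{NES}_\mathit{bad}$ are all ruled out. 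For the Büchi condition, any terminating branch produces an edge into the accepting $F_B$; any nonterminating branch reaches an mcut-state infinitely often, since between two successive mcuts of $\sysstrat$ only finitely many system-only transitions occur (by minimality) together with at most one \typeTwo episode that closes within finitely many steps once $M_{T2}$ is revisited, and mcut-states are accepting.

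The main obstacle I anticipate is the backward-move half of Step~3: one must show that the Büchi game's backward search never fabricates a spurious decision marking that has no counterpart in $\sysstrat$. This amounts to proving that identifiers, last-mcut values, and \typeTwo flags are recorded with enough granularity so that each backward move inverts a unique forward step of the construction, up to the harmless reordering of concurrent system-only transitions. Once this faithfulness of backward moves under $\equiv$ is established, the rest of the verification is essentially a bookkeeping exercise following the four cases of \refDef{PGtoBG}.
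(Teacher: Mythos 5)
Your plan follows essentially the same route as the paper's proof: preprocess the winning Petri-game strategy to a minimal one with at most one \typeTwo{}case per branch (via \refLemma{oneTTperBranch}), translate it by \refDef{PGtoBG}, and then establish the equivalence of \refDef{equivDTcuts} between reachable cuts and considered decision markings to rule out all $\mathit{STOP}_N$ edges and to argue that $F_B$ or mcut-states are visited infinitely often. The obstacle you flag about the faithfulness of backward moves is exactly the point the paper also singles out (showing that backward moves need only be applied from a single state and invert genuine forward steps up to reordering of concurrent system-only transitions), so the proposal is correct and aligned with the paper's argument.
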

\addtocounter{lemma}{-1}
\endgroup
\begin{proof}
	By making strategies for the system players in the Petri game minimal, this proof is analogous to the proof of \refLemma{BGtoPG}.
	We translate a minimal winning strategy for the system players in the Petri game into a winning strategy for Player~0 in the Büchi game as defined in \refDef{PGtoBG}.
	To prove that the constructed strategy is winning, we prove that the reachable cuts of every minimal winning strategy for the system players in the Petri game are equivalent as defined in \refDef{equivDTcuts} to the considered decision markings in the corresponding strategy for Player~0 in the Büchi game.
	Then, losing behavior in the strategy for the system players in the Petri game occurs if and only if losing behavior in the strategy for Player~0 in the Büchi game occurs.
	We consider both directions for a minimal winning strategy~$\sysstrat$ for the system players in the Petri game and the corresponding strategy $f$ for Player~0 in the Büchi game:
	\begin{itemize}
		\item We show that if a reachable cut $C$ exists in $\sysstrat$, then a decision marking is considered in a by $f$ reachable state of the Büchi game.
			The cut $C$ is reached via the sequence of transitions $T$.
			We only retain transitions with an environment place in their precondition in $T$.
			The choices of system players are already represented by $\sysstrat$.
			We start from the initial marking of $\sysstrat$ and obtain a sequence of transitions $w$ that is applicable to~$f$ and reaches~$C$ by recursively applying the following steps:
			If a \typeTwo case starts from the current cut (i.e., a minimal set of system players in the cut can play infinitely together without synchronizing with the environment player) and the underlying marking of the cut is the first marking to repeat in this \typeTwo case, then we add the transitions until the repetition of the marking occurs to $w$ and update the current cut accordingly.
			By \refLemma{oneTTperBranch}, at most on \typeTwo case can occur.
			Otherwise, we try to fire a transition with only system places in its precondition.
			If this is not possible, we try to fire a transition with an environment places in its precondition.
			We pick the environment transition corresponding to the first element of $T$ and afterward remove it from $T$.
			In both cases, the current cut becomes the cut reached after firing the transition.

			When none of these cases apply and there is a transition left in $T$, we need to add transitions with only system places in their precondition which are rolled back by backward moves to reach an equivalent decision marking to the cut~$C$.
			In this case, we fire a maximal sequence of transitions with only system places in their precondition such that only transitions with an environment place in the precondition are enabled.
			The enabled transitions include the remaining transition in~$T$.
			These fired transitions and all following transitions with only system places in their precondition are collected to be used as backward moves.
			We apply the recursive steps from before again.
			The first iteration removes the last transition from $T$ and afterward a finite number of transitions with only system places in their precondition are added.
			This procedure obtains $w'$.
			We add steps to $w'$ to make explicit decisions removing $\top$ according to the postcondition of the respective players and indicate the beginning of the \typeTwo case by changing the corresponding decision sets from negative to positive \typeTwo status to obtain $w$.
			By the construction of the Büchi game, $w$ is applicable to the Büchi game.
			Either $w$ directly leads to a decision marking~$\bbD$ such that $\bbD \equiv C$ holds or some of the collected backward moves and of the backward moves in the \typeTwo case are applied to the state in the Büchi game leading to a decision marking~$\bbD'$ such that $\bbD' \equiv C$ holds.
		\item We show that if a decision marking $\bbD$ is considered in a by $f$ reachable state of the Büchi game, then a reachable cut $C$ exists in $\sysstrat$ such that $ \bbD \equiv C $ holds.
			If~$\bbD$ is not considered via backward moves, then the word $w$ exists to reach the state of $\bbD$.
			The word $w^-$ is defined as $w$ without symbols $\top$.
			Firing transitions in the order of $w^-$ in the strategy leads to a cut $C$.
			By the strategy translation from \refDef{PGtoBG}, $w^-$ is applicable to $\sysstrat$ and, by the construction of the Büchi game, the precondition and postcondition of the used transitions are the same in both strategies.
			Therefore, $\bbD \equiv C$ holds.
			If $\bbD$ is reached via backward moves, then the word $w$ to reach the state of $\bbD$ and the word $w^\mathit{back}$ of applied backward moves exist.
			We define the subtraction of $w^\mathit{back}$ from $w^-$ from the end of both words.
			Firing transitions in the order of $w^- - w^\mathit{back}$ in the strategy leads to a cut $C$.
			By the strategy translation from \refDef{PGtoBG}, the backward moves can be subtracted from the word, $w^- - w^\mathit{back}$ is applicable to $\sysstrat$, and, by the construction of the Büchi game, the precondition and postcondition of the used transitions are the same in both strategies.
			Therefore, $\bbD \equiv C$ holds.
	\end{itemize}

	The same argument as in the proof of \refLemma{BGtoPG} applies to show that it suffices to only apply backward moves from one state, i.e., at most one transition can be left in $T$.

	The constructed strategy for the Büchi game is a strategy as it chooses one successor for every state of Player~0. 
	Also, by construction, all successors are added to states of Player~1. 

	As we showed that a strategy for the system players in a Petri game and the corresponding strategy for Player~0 in a Büchi game visit equivalent cuts and decision markings, we can conclude that for a minimal winning strategy for the system players in the Petri game an accepting state in the strategy for Player~0 in a Büchi game is visited infinitely often.
	This is because $F_B$ is visited infinitely often for finite plays and for infinite plays at most one \typeTwo case occurs needing only finitely many states and a state representing an mcut is visited infinitely often as the environment player is part of fired transitions infinitely often.
	Therefore, the strategy for the system players in the Petri game is winning.
\end{proof}

\begingroup
\def\thetheorem{\ref{theo:gameSolving}}
\begin{theorem}[Game solving]
	For Petri games with a bounded number of system players, one environment player, and bad markings, the question of whether the system players have a winning strategy is decidable in \emph{2-EXPTIME}.
	If a winning strategy for the system players exists, it can be constructed in double exponential time.
\end{theorem}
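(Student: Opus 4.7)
The plan is to verify the two complexity claims by bounding the size of the Büchi game $\TPgame$ built in Sections~\ref{sec:decPG} and~\ref{sec:decidability}, applying the known polynomial-time procedure for Büchi games~\cite{DBLP:conf/soda/ChatterjeeH12}, and then analysing the size of the strategy obtained through the translation of Lemma~\ref{lem:BGtoPG}. Correctness of the reduction is already settled by Lemmas~\ref{lem:BGtoPG} and~\ref{lem:PGtoBG}, so only the resource analysis remains.

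First I would bound $|\TPstates|$ as fixed by Definition~\ref{def:BGstates}. A single decision tuple in $\decisionsets$ picks an identifier from $\{0,\dots,\maxSys\}$, a place from $\pl$, a \typeTwo status from $\{\False,\True,\End\}$, a decision in $\pom{\tr}\cup\{\top\}$, and a last-mcut index from $\{0,\dots,\maxSys\}$, so $|\decisionsets|$ is at most exponential in the size of $\pNet$. Because every decision marking has cardinality at most $\maxSys+1$, the set $\decsetgame$ is also of exponential size, and the marking $M_{T2}$ is a multiset over $\plS$ of $k$-bounded entries, hence exponentially many possibilities. The dominating contribution is the $\maxSys$ sequences of backward moves, so the real work is bounding their length.

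Second I would bound the length of each $\BR_i$ using the useless-repetition set $\mathit{UR}$ defined in Section~\ref{sec:backwardmoves}. By construction, any state whose backward-move Petri net $\pNet_{\TPbackRulesGen_1(\TPoneState),\ldots,\TPbackRulesGen_{\maxSys}(\TPoneState)}[\TPdecset(\TPoneState)]$ contains four decision markings satisfying the loop pattern of $\mathit{UR}$ is routed to $F_N$ by $\mathit{STOP}_N$. Hence between two edges from $\mathit{MCUT}$ (which clear $\BR_i$ for every participating system player) the length of $\BR_i$ is bounded by the number of distinct reachable decision markings in that backward-move net, which is at most exponential in the size of $\pNet$. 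Since $|\backwardrules|$ is also at most exponential, each $\BR_i$ contributes a factor singly exponential in the size of $\pNet$, and the full state space of $\TPgame$ is bounded by a doubly exponential function in the size of $\pNet$.

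Third, $|\TPedges|$ is polynomial in $|\TPstates|$ because every edge is determined by a single transition of $\pNet$ together with a successor-defining choice that is already recorded in the target state. Solving $\TPgame$ in time polynomial in $|\TPstates|+|\TPedges|$ thus yields the 2-EXPTIME bound for deciding the existence of a winning strategy. For the synthesis part, a winning strategy for Player~0 can be taken positional on $\TPstates$, giving a representation of size doubly exponential in $\pNet$; feeding this positional strategy into the translation of Definition~\ref{def:BGtoPG} produces a finitely representable Petri-game strategy whose size is controlled by the number of reachable states in $f$, matching the claimed construction bound. The main obstacle, and the step that deserves the most care, is showing that the length bound on $\BR_i$ really is only singly exponential: without the $\mathit{UR}$ penalty and the \typeTwo mechanism of Section~\ref{sec:typetwo}, system players could perform arbitrarily long chains of purely internal transitions between mcuts and blow up $\BR_i$ unboundedly, so the argument must exploit both that useless repetitions are ruled losing and that every infinite internal computation is forced through a single \typeTwo disclosure whose marking repetition caps its length.
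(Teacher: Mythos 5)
Your proposal follows essentially the same route as the paper: bound the Büchi-game state space (observing that the backward-move sequences dominate, with their length kept singly exponential because useless repetitions are routed to $F_N$ and infinite internal behaviour is forced through the \typeTwo disclosure), apply polynomial-time Büchi-game solving for the 2-EXPTIME bound, and use memoryless determinacy to extract a doubly-exponential finite strategy via the translation of Lemma~\ref{lem:BGtoPG}. One small wording slip: an exponentially long sequence over an exponentially large set of backward moves yields a \emph{doubly} exponential number of possible values for each $\BR_i$ (not a singly exponential factor), which is exactly what makes the overall state space doubly exponential, as your final conclusion correctly states.
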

\addtocounter{theorem}{-1}
\endgroup
\begin{proof}
	We establish the complexity by estimating the size of the set $\TPstates$ of states in the Büchi game.
	We recall from \refDef{BGstates} that the states $\TPstates$ in the Büchi game are defined as $\TPstates = \TPstates_\mathit{BN} \cup \decsetgame \times (\plS \rightarrow \{0, \ldots, k\}) \times (\backwardrules^*)^{\maxSys} \times \{1,\ldots,\maxSys \}$.
	The underlying Petri net is $k$-bounded by some $k \geq 1$.
	We use binary encodings to identify places.
	This results in $\log_2$ occurring in the size estimate.
	Here, we assume that $\log_2$ returns numbers greater or equal to one, i.e., $\log_2(x) = \max(1, log^{\textit{usual}}_2(x)) $ where $\log^{\textit{usual}}_2(x)$ is the usual logarithm with base two.
	We represent a decision tuple $(\id, p, b, T, K)$ by $\lceil\log_2(|\maxSys|)\rceil$ Boolean variables to identify the identifier $\id$, $\lceil\log_2(|\pl|)\rceil$ Boolean variables to identify the place $p$, two Boolean variables for the three-valued flag $b$, a Boolean variable indicating the presence of $\top$, and, for each transition $t \in \tr$, a Boolean variable indicating the presence of $t$ in the decision~$T$, and $\lceil\log_2(|\maxSys|)\rceil$ Boolean variables to identify the last mcut $K$.
	There are at most $\maxSys + 1$ players.
	The additional player is the environment player.
	The maximal number of system players $\maxSys$ includes the possibility of $k$ players in one place.
	This gives us $(\maxSys + 1) \cdot (\lceil\log_2(|\pl|)\rceil + 3 + |\tr| + 2 * \lceil\log_2(|\maxSys|)\rceil)$ Boolean variables.
	Considering all valuations of these variables, the size of the set $\decsetgame$ of decision markings can be bounded by $A = 2^{(\maxSys + 1) \cdot (\lceil\log_2(|\pl|)\rceil + 3 + |\tr| + 2 * \lceil\log_2(|\maxSys|)\rceil)}$.

	The marking to repeat in the \typeTwo case can be represented with $\maxSys \cdot \lceil\log_2(|\plS|)\rceil$ Boolean variables.
	Thus, the size of the set of markings to repeat in the \typeTwo case can be bounded by $B= 2^{\maxSys \cdot \lceil\log_2(|\plS|)\rceil}$.

	Backward moves are pairs of decision markings.
	We need $2 \cdot \maxSys \cdot (\lceil\log_2(|\plS|)\rceil + 3 + |\tr|)$ Boolean variables to represent a backward move.
	The number of backward moves for each of the $\maxSys$ sequences can be bound by the triangular number of the maximum number of system players $\maxSys \cdot (\maxSys + 1) / 2 $ times the number of markings~$2^{|\pl|}$.
	The maximal length of a loop can be bounded by the number of markings~$2^{|\pl|}$ and in each loop at least one system player learns about at least one new last mcut.
	After $(\maxSys \cdot (\maxSys + 1) / 2) \cdot |\tr|$ loops, all system players have to be maximally informed and further loops lead to a useless repetition, i.e., the collection of backward moves can be stopped.
	Therefore, the number of backward moves in the sets of backward moves can be bounded by $C =  2^{(\maxSys \cdot (\maxSys + 1) / 2) \cdot 2^{|\pl|} \cdot 2 \cdot \maxSys \cdot (\lceil\log_2(|\plS|)\rceil + 3 + |\tr|)}$.

	The size of the set $\TPstates$ of states in the Büchi game can be bounded by $A \cdot B \cdot C$.
	This  equals a double exponential number of states in the size of the Petri game dominated by the factor $C$.
	As Büchi games can be solved in polynomial time~\cite{DBLP:conf/soda/ChatterjeeH12}, the total time required to construct and solve the Büchi game is double exponential in the size of the Petri game.

	Büchi games are memoryless determined.
	The winning strategy can therefore be represented by a finite graph~$G_f$ whose size is bounded by the size of the Büchi game.
	We construct a finite deadlock-avoiding winning strategy for the system players in the Petri game following the construction from \refDef{BGtoPG}, using $G_f$ instead of the infinite strategy tree $T_f$.
	In the \typeTwo case, we represent the strategy for the system players finitely.
	When the marking to repeat in the \typeTwo case is reached for the second time, we do not add an infinite unrolling of the strategy for the system players in the \typeTwo case.
	Instead, we change the flow between transitions and places where the infinite unrolling would be added to the first occurrence of the place in the \typeTwo case.
	We remove all thereby unreachable places and represent the \typeTwo case with a finite strategy.
	When $G_f$ reaches a state for the second time, i.e., a loop occurs with the environment player, we change the flow from transitions to places, where the strategy would continue as above, such that they reach the cut corresponding to the first occurrence of the state in the Büchi game.
	We remove all thereby unreachable places and represent a loop with the environment player with a finite strategy.
\end{proof}

\begin{figure}[t]
	\centering
	\begin{subfigure}[t]{0.49\textwidth}
	\centering
	\begin{tikzpicture}[label distance=-0.5mm]
		\node [sysplace] (s1) [tokens=1, label=left:{\tiny$s_1$}] {};
		\node [sysplace] (s2) [tokens=1, right of=s1, label=left:{\tiny$s_2$}] {};
		\node [envplace] (e1) [tokens=1, right of=s2, label=right:{\tiny$e_1$}] {};
		\node [sysplace] (s3) [tokens=1, right of=e1, label=right:{\tiny$s_3$}] {};
		\node [sysplace] (s4) [below of=s1, below of=s1, label=left:{\tiny$s_4$}] {};
		\node [sysplace] (s5) [right of=s4, label=left:{\tiny$s_5$}] {};
		\node [envplace] (e2) [right of=s5, label=right:{\tiny$e_2$}] {};
		\node [sysplace] (s6) [right of=e2, label=right:{\tiny$s_6$}] {};
		\node [sysplace] (s7) [below of=s4, below of=s4, label=left:{\tiny$s_7$}] {};
		\node [sysplace] (s8) [right of=s7, label=left:{\tiny$s_8$}] {};
		\node [envplace] (e3) [right of=s8, label=right:{\tiny$e_3$}] {};
		\node [sysplace] (s9) [right of=e3, label=right:{\tiny$s_9$}] {};
		\node [sysplace] (s10) [right of=s9, label=right:{\tiny$s_{10}$}] {};
		\node [transition] 	(t1)  [below of=s1, label=left:{\tiny$t_1$}] {}
			edge [pre]  (s1)
			edge [post] (s4);
		\node [transition] 	(t2)  [below of=s2, label=left:{\tiny$t_2$}] {}
			edge [pre]  (s2)
			edge [post] (s5);
		\node [transition] 	(t3)  [below of=e1, xshift=0.5cm, label=right:{\tiny$t_3$}] {}
			edge [pre]  (e1)
			edge [pre]  (s3)
			edge [post] (e2)
			edge [post] (s6);
		\node [transition] 	(t4)  [below of=s5, label=left:{\tiny$t_4$}] {}
			edge [pre]  (s4)
			edge [pre]  (s5)
			edge [pre]  (e2)
			edge [post] (s7)
			edge [post] (s8)
			edge [post] (e3);
		\node [transition] 	(t5)  [below of=s6, label=right:{\tiny$t_5$}] {}
			edge [pre]  (s6)
			edge [post] (s9);
		\node [transition] 	(t6)  [below of=s6, right of=s6, label=right:{\tiny$t_6$}] {}
			edge [pre]  (s6)
			edge [post] (s10);
	\end{tikzpicture}
	\subcaption{As $\{s_1, s_5, e_2, s_9\}$ and $\{s_2, s_4, e_2, s_{10}\}$ are bad markings, no deadlock-avoiding strategy for the system players can avoid the bad markings.}
	\label{fig:badmarking}
	\end{subfigure}%
	~
	\begin{subfigure}[t]{0.49\textwidth}
	\centering
	\begin{tikzpicture}[label distance=-0.5mm]
		\node [sysplace] (s1) [tokens=1, label=left:{\tiny$s_1$}] {};
		\node [envplace] (e1) [right of=s1, label=left:{\tiny$e_2$}] {};
		\node [envplace] (e2) [below of=e1, below of=e1, label=right:{\tiny$e_3$}] {};
		\node [envplace] (e3) [tokens=1, right of=e2, label=right:{\tiny$e_1$}] {};
		\node [envplace] (e4) [below of=e3, below of=e3, xshift=-0.5cm, label=right:{\tiny$e_4$}] {};
		\node [envplace] (e5) [left of=e2, left of=e2, xshift=-0.5cm, label=left:{\tiny$e_5$}] {};
		\node [sysplace] (s2) [tokens=1, below of=e2, left of=e2, xshift=-0.5cm, label=above:{\tiny$s_2$}] {};
		\node [sysplace] (s3) [left of=s2, label=left:{\tiny$s_1'$}] {};
		\node [sysplace] (s4) [below of=s2, label=above:{\tiny$s_1''$}] {};
		\node [transition] 	(t1)  [below of = s1, xshift=0.5cm, label=left:{\tiny$t_3$}] {}
			edge [pre]  (s1)
			edge [pre]  (e1)
			edge [post] (e2)
			edge [post] (s3);
		\node [transition] 	(t2)  [right of = t1, label=left:{\tiny$t_1$}] {}
			edge [pre]  (e3)
			edge [post] (e1);
		\node [transition] 	(t3)  [below of = e2, label=left:{\tiny$t_5$}] {}
			edge [pre]  (e2)
			edge [post] (e4);
		\node [transition] 	(t4)  [below of = e3, label=left:{\tiny$t_2$}] {}
			edge [pre]  (e3)
			edge [post] (e4);
		\node [transition] 	(t5)  [left of = e2, label=below:{\tiny$t_4$}] {}
			edge [pre]  (e2)
			edge [post] (e5);
		\node [transition] 	(t6)  [below of = s3, label=left:{\tiny$t_6$}] {}
			edge [pre]  (s3)
			edge [post] (s4);
		\node [transition] 	(t7)  [right of = t6, right of = t6, label=above:{\tiny$t_7$}] {}
			edge [pre]  (e4)
			edge [post] (s4);
		\draw[thick, shorten >=1pt,<->,shorten <=1pt] (s2) -- (t6);
		\draw[thick, shorten >=1pt,<->,shorten <=1pt] (s2) -- (t7);
	\end{tikzpicture}
	\subcaption{Despite not having bad markings, the respective Büchi game observes via backward moves that no winning strategy for the system players exists.}
	\label{fig:ndet}
	\end{subfigure}
	\caption{Two Petri games illustrate the necessity backward moves.}
	\label{fig:PGexamples}
\end{figure}
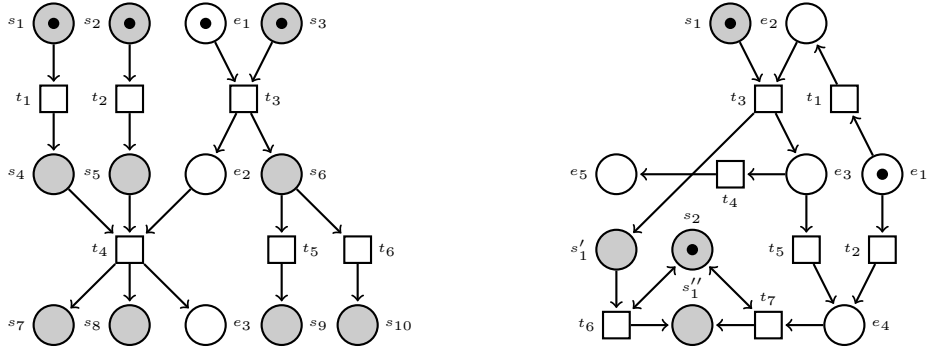

\subsection{Necessity of Backward Moves}

We illustrate the necessity of backward moves in our reduction with two examples.
With \refFig{PGexamples}, we show how the Büchi games encoding two example Petri games are won by Player~1 as both Petri games do not have a winning strategy for the system players.
In both cases, the use of backward moves is essential.

In \refFig{badmarking}, both $t_1$ and~$t_2$ fire before $t_3$ fires in the Büchi game as $t_1$ and $t_2$ have only system places in their precondition.
Only after $t_3$ fires, Player~0 can decide between $t_5$ and~$t_6$.
To reach one of the bad markings $\{s_1, s_5, e_2, s_9\}$ and $\{s_2, s_4, e_2, s_{10}\}$, it is required that only one of the transitions $t_1$ and $t_2$ fired when deciding between~$t_5$ and $t_6$.
Backward moves allow from either reachable marking $\{s_4,s_5,e_2,s_9\}$ or $\{s_4,s_5,e_2,s_{10}\}$ to apply the backward move for $t_1$ or $t_2$ to prove that the bad markings cannot be avoided.

In \refFig{ndet}, the two markings $\{s_1',s_2,e_5\}$ (by $t_1$, $t_3$, and $t_4$ firing) and $\{s_1,s_2,e_4\}$ (by $t_2$ firing) are reachable or the system players do not avoid deadlocks.
Thus, Player~0 has to allow for the system player in $s_2$ both $t_6$ and $t_7$ to avoid deadlocks.
This decision is nondeterministic for the marking $\{s_1',s_2,e_4\}$ (by $t_1$, $t_3$, and $t_5$ firing).
When making decisions for the environment player as late as possible at mcuts, transition~$t_6$ fires before $t_5$ and marking $\{s_1',s_2,e_4\}$ is not reached.
From $\{s_1'',s_2, e_4\}$, a backward move for $t_6$ is applicable resulting in a check of marking $\{s_1',s_2,e_4\}$ where the nondeterministic decision of system player $s_2$ is found.
So, no winning strategy for the system players exists for this Petri game.

\subsection{System Scheduling vs.\ Environment Scheduling}
\label{sec:SysVsEnvScheduling}

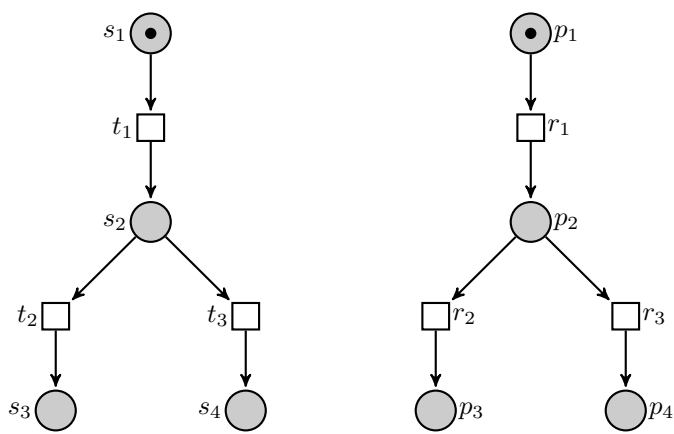
\begin{figure}[!t]
	\centering
	\begin{tikzpicture}[node distance=1.25cm, label distance=-1mm, scale=1, every node/.style={transform shape}, >=stealth']
		\node [sysplace] (s1) [tokens=1, label=left:{$s_1$}] {};
		\node [sysplace] (s2) [below of=s1, below of=s1, label=left:{$s_2$}] {};
		\node [sysplace] (s3) [below of=s2, below of=s2, left of=s2, label=left:{$s_3$}] {};
		\node [sysplace] (s4) [below of=s2, below of=s2, right of=s2, label=left:{$s_4$}] {};
		\node [transition] 	(t1)  [below of=s1, label=left:{$t_1$}] {}
			edge [pre]  (s1)
			edge [post] (s2);
		\node [transition] 	(t2)  [below of=s2, left of=s2, label=left:{$t_2$}] {}
			edge [pre]  (s2)
			edge [post] (s3);
		\node [transition] 	(t3)  [below of=s1, right of=s2, label=left:{$t_3$}] {}
			edge [pre]  (s2)
			edge [post] (s4);
		\node [sysplace] (p1) [right of=s1, right of=s1, right of=s1, right of=s1, tokens=1, label=right:{$p_1$}] {};
		\node [sysplace] (p2) [below of=p1, below of=p1, label=right:{$p_2$}] {};
		\node [sysplace] (p3) [below of=p2, below of=p2, left of=p2, label=right:{$p_3$}] {};
		\node [sysplace] (p4) [below of=p2, below of=p2, right of=p2, label=right:{$p_4$}] {};
		\node [transition] 	(r1)  [below of=p1, label=right:{$r_1$}] {}
			edge [pre]  (p1)
			edge [post] (p2);
		\node [transition] 	(r2)  [below of=p2, left of=p2, label=right:{$r_2$}] {}
			edge [pre]  (p2)
			edge [post] (p3);
		\node [transition] 	(r3)  [below of=p1, right of=p2, label=right:{$r_3$}] {}
			edge [pre]  (p2)
			edge [post] (p4);
	\end{tikzpicture}
	\caption{A simple extract of a Petri game is depicted. It consists of two system players that can each make a decision between two transitions after firing one transition. All markings of the whole Petri game containing one of the following three pairs of places $\{s_1, p_3\}$, $\{s_3, p_1\}$, and $\{s_4, p_4\}$ are bad markings.}
	\label{fig:envscheduling}
\end{figure}

	In the presented construction, Player~0 in the Büchi game decides both the decisions of the system players in the Petri game and the order in which concurrent transitions with only system places in their precondition are fired between states after an mcut and states corresponding to the next mcut. 
	We call the fact that Player~0 determines the order of these concurrent transitions the \emph{system scheduling}.
	The natural question arises what happens if Player~1 determines this order while Player~0 still decides the decisions of the system player and whether we can thereby leave out backward moves from our construction.
	We call this idea the \emph{environment scheduling}.
	
	We give a counterexample to the environment scheduling without backward moves being a suitable replacement of the system scheduling and backward moves.
	This somewhat surprising result is caused by allowing Player~0 to make different decisions for the system players in the Petri game depending on which scheduling of concurrent transitions is chosen by Player~1 without the possibility to roll fired transitions back via backward moves.
	We consider the extract of the Petri game in \refFig{envscheduling}.
	It consists of two system players and has the three bad markings $\{s_1, p_3\}$, $\{s_3, p_1\}$, and $\{s_4, p_4\}$ in this extract.
	The third bad marking $\{s_4, p_4\}$ requires that the two system players not both allow transition~$t_3$ and transition~$r_3$, which would lead to this bad marking.
	When the two system players not both allow transition~$t_3$ and transition~$r_3$, the two system players cannot always avoid the one of the other two bad markings.
	When allowing transition~$r_2$, firing transitions~$r_1$ and~$r_2$ before the system player in place~$s_1$ moves leads to the bad marking $\{s_1, p_3\}$.
	Analogously, when allowing transition~$t_2$, firing transitions~$t_1$ and~$t_2$ before the system player in place~$p_1$ moves leads to the bad marking~$\{s_3, p_1\}$.
	Therefore, the system players cannot win a Petri game where the extract of the Petri game in \refFig{envscheduling} is reachable.
	
	In \refFig{envschedulingtree}, we give a visualization of the, by Player~0 chosen, decisions of the system players in answer to the scheduling decisions by Player~1 for the environment scheduling. 
	The vertices of the tree symbolize the decisions of the system players from \refFig{envscheduling}.
	Initially, the one system player in place~$s_1$ allows transition~$t_1$ and the other system player in place~$p_1$ allows transition~$r_1$.
	After the environment scheduling decided between firing transition~$t_1$ or transition~$r_1$ as the only two enabled and allowed transitions, the moved system player makes its next decision.
	As soon as one of the players moves, it becomes acceptable for this player to allow either transition~$t_3$ or transition~$r_3$.
	When the system player from places starting with~$s$ is moved first, then it decides for transition~$t_3$ (cf.\ upper branch of \refFig{envschedulingtree}).
	When it is moved later on, then it decides for transition~$t_2$ (cf.\ lower branch of \refFig{envschedulingtree}).
	An analog statement holds for the system player from places starting with~$r$ and transitions~$r_3$ and~$t_2$.
	
	One can clearly see that a bad marking is never reached in \refFig{envschedulingtree}, which shows that the environment scheduling without backward moves determines the wrong winner for the extract of a Petri game shown in \refFig{envscheduling}.
	Only when using backward moves irrespective of whether the system scheduling or the environment scheduling is used, we can determine that no winning strategy can exist for the extract of a Petri game shown in \refFig{envscheduling}.

\begin{figure}[t]
	\centering
	\begin{tikzpicture}[shorten >=1pt, node distance=1.5cm, scale=1, every node/.style={transform shape},
	state/.style={
		rectangle, minimum size=9mm, rounded corners=1mm, draw=black, align=center
	}]
		\node[state](q_0)                {$s_1:\{t_1\}$\\ $p_1:\{r_1\}$}; 
		\node[state](q_1) [right of=q_0, above of=q_0, yshift=+0.75cm] {$s_2:\{t_3\}$\\ $p_1:\{r_1\}$};
		\node[state](q_2) [right of=q_0, below of=q_0, yshift=-0.75cm] {$s_1:\{t_1\}$\\ $p_2:\{r_3\}$};
		\node[state](q_3) [right of=q_1, right of=q_1, above of=q_1] {$s_4:\emptyset$\\ $p_1:\{r_1\}$};
		\node[state](q_4) [right of=q_3, right of=q_3] {$s_4:\emptyset$\\ $p_2:\{r_2\}$};
		\node[state](q_5) [right of=q_4, right of=q_4] {$s_4:\emptyset$\\ $p_3:\emptyset$};
		\node[state](q_6) [right of=q_1, right of=q_1] {$s_2:\{t_3\}$\\ $p_2:\{r_2\}$};
		\node[state](q_7) [right of=q_6, right of=q_6] {$s_4:\emptyset$\\ $p_2:\{r_2\}$};
		\node[state](q_8) [right of=q_7, right of=q_7] {$s_4:\emptyset$\\ $p_3:\emptyset$};
		\node[state](q_9) [right of=q_6, right of=q_6, below of=q_6] {$s_2:\{t_3\}$\\ $p_3:\emptyset$};
		\node[state](q_10) [right of=q_9, right of=q_9] {$s_4:\emptyset$\\ $p_3:\emptyset$};
		\node[state](Q_3) [right of=q_2, right of=q_2, above of=q_2] {$s_1:\{t_1\}$\\ $p_4:\emptyset$};
		\node[state](Q_4) [right of=Q_3, right of=Q_3] {$s_2:\{t_2\}$\\ $p_4:\emptyset$};
		\node[state](Q_5) [right of=Q_4, right of=Q_4] {$s_3:\emptyset$\\ $p_4:\emptyset$};
		\node[state](Q_6) [right of=q_2, right of=q_2] {$s_2:\{t_2\}$\\ $p_2:\{r_3\}$};
		\node[state](Q_7) [right of=Q_6, right of=Q_6] {$s_3:\emptyset$\\ $p_2:\{r_3\}$};
		\node[state](Q_8) [right of=Q_7, right of=Q_7] {$s_3:\emptyset$\\ $p_4:\emptyset$};
		\node[state](Q_9) [right of=Q_6, right of=Q_6, below of=Q_6] {$s_2:\{t_2\}$\\ $p_4:\emptyset$};
		\node[state](Q_10) [right of=Q_9, right of=Q_9] {$s_3:\emptyset$\\ $p_4:\emptyset$};
		\path[->, >=stealth'] (q_0) edge                node [above] {$t_1$} (q_1)
                  		edge 				node [below] {$r_1$} (q_2)
                  (q_1) edge                node [above] {$t_3$} (q_3)
                  		edge                node [above] {$r_1$} (q_6)
                  (q_3) edge                node [above] {$r_1$} (q_4)
                  (q_4) edge                node [above] {$r_2$} (q_5)
                  (q_6) edge                node [above] {$t_3$} (q_7)
                  		edge 				node [above] {$r_2$} (q_9)
                  (q_7) edge                node [above] {$r_2$} (q_8)
                  (q_9) edge                node [above] {$t_3$} (q_10)
                  (q_2) edge                node [above] {$r_3$} (Q_3)
                  		edge                node [above] {$t_1$} (Q_6)
                  (Q_3) edge                node [below] {$t_1$} (Q_4)
                  (Q_4) edge                node [below] {$t_2$} (Q_5)
                  (Q_6) edge                node [below] {$t_2$} (Q_7)
                  		edge 				node [below] {$r_3$} (Q_9)
                  (Q_7) edge                node [below] {$r_3$} (Q_8)
                  (Q_9) edge                node [below] {$t_2$} (Q_10)
                  ;
    	\draw[<-,>=stealth'] (q_0) -- node {} ++(-1.25cm,0); 
	\end{tikzpicture}
	\caption{A tree is depicted that has the decisions of the system players from \refFig{envscheduling} as vertices in answer to the fired transitions chosen by the environment scheduling.}
	\label{fig:envschedulingtree}
\end{figure}
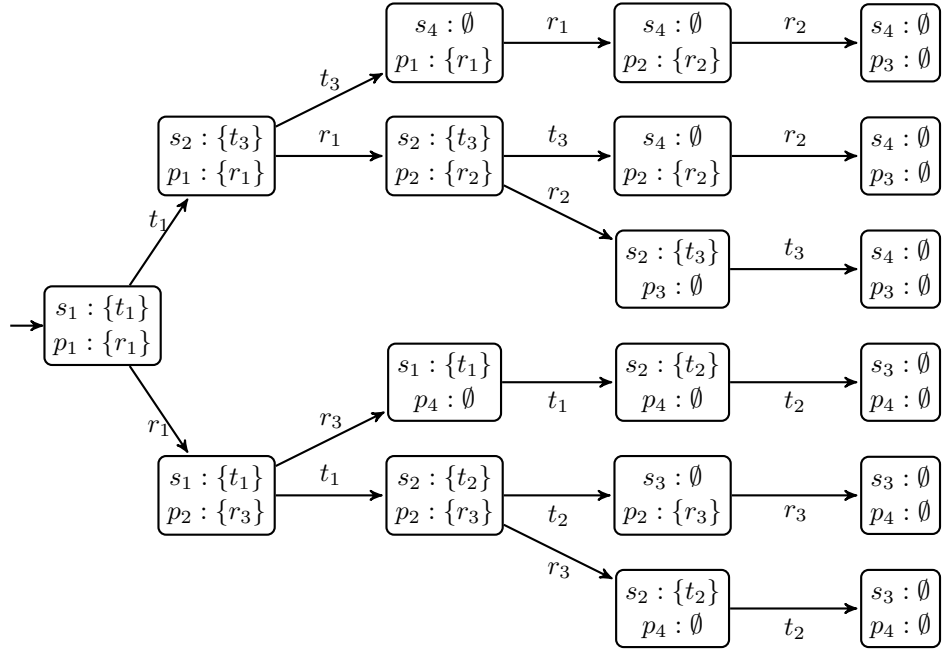

\section{Undecidability in Petri Games with Good Markings}
\label{sec:undec}

In this section, we give the formal reduction from the Post correspondence problem to Petri games with good and bad markings and from Petri games with good and bad markings to Petri games with good markings.
First, we recall the undecidability result for the synthesis of distributed systems in the synchronous setting of Pnueli and Rosner~\cite{DBLP:conf/focs/PnueliR90, DBLP:journals/ipl/Schewe14}.
Second, we present the first reduction by formally defining the two system players, the environment player, the good markings, and the bad markings.
Note that we define a Petri game based on a safe Petri net which allows us to use sets of places instead of multisets over places for simpler notation.
Third, we prove the first reduction correct.
Fourth, we present the second reduction by formally defining the translation of Petri games with both good and bad markings to Petri games with good markings.
Fifth, we prove the second reduction correct.

We recap the definition of the Post correspondence problem~\cite{post1946variant}.
The \emph{Post correspondence problem} (PCP) is to determine, for a finite alphabet $\Sigma$ and two finite lists $r_0, r_1, \ldots, r_n$ and $v_0, v_1, \ldots, v_n$ of non-empty words over $\Sigma$, if there exists a non-empty sequence $i_1, i_2, \ldots, i_l \in \{0,1,\ldots,n\}$ such that $r_{i_1}r_{i_2}\ldots r_{i_l} = v_{i_1}v_{i_2}\ldots v_{i_l}$.
This problem is undecidable.

Throughout the section, we use unique variables for parts of our reduction.
The number of indices is $n$ and $i$ is used to iterate over the possible indices.
To differentiate between the two system player, we use $k\in\{1,2\}$.
The word of an index is identified with $w_i$ for both system players, with $r_i$ specifically for the first system player and with $v_i$ specifically for the second system player.
We use $j$ to iterate over words letter-by-letter.
Labels of transitions and letters are identified by $l$.
We use $x$ to either identify an index counter or its value for both system players, $y$ specifically for an index counter or its value for the first system player, and $z$ specifically for an index counter or its value for the second system player.
The same applies to the letter counter with $a$, $b$, and $c$.

\subsection{Undecidability in the Synchronous Setting}

Synthesis of distributed systems in the synchronous setting of Pnueli and Rosner is undecidable~\cite{DBLP:conf/focs/PnueliR90}.
In the proof, the specification forces two synchronous system players $P_0$ and~$P_1$ with incomplete information on the environment $\mathit{Env}$ (cf.\ \refFig{distributed-architecture-pnueli-rosner}) to simulate a Turing machine by outputting sequences of tape configurations and to eventually reach a halting tape configuration.
The environment can change the distance between the simulation at the two systems players and can check the synchronously output tape configurations for correctness.
Each system player has no information about the distance to the simulation of the other one.
Thus, simulating the Turing machine at both players is the only possible winning strategy and synthesis of synchronous distributed systems can decide whether this strategy reaches a halting tape configuration.
The specification of the two players can be encoded in LTL~\cite{DBLP:journals/ipl/Schewe14}.

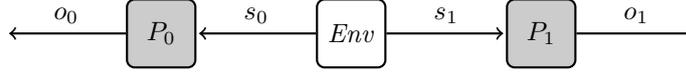
\begin{figure}[t]
	\centering
	\begin{tikzpicture}[shorten >=1pt, node distance=2.5cm, on grid,
	state/.style={
		rectangle ,minimum size=9mm, rounded corners=1mm,
		draw=black,
	},
	terminal/.style={
		rectangle ,minimum size=9mm, rounded corners=1mm,
		draw=black,
		top color=black!20, bottom color=black!20
	}]
		\node[state]   (q_0)                {$\mathit{Env}$};
		\node[terminal](q_1) [left=of q_0]  {$P_0$};
		\node[terminal](q_2) [right=of q_0] {$P_1$};
		\node[state]   (q_3) [left=of q_1, opacity=0]  {};
		\node[state]   (q_4) [right=of q_2, opacity=0] {};
		\path[->] (q_0) edge                node [above] {$s_0$} (q_1)
                  		edge 				node [above] {$s_1$} (q_2)
                  (q_1) edge					node [above] {$o_0$} (q_3)
                  (q_2) edge					node [above] {$o_1$} (q_4);
	\end{tikzpicture}
	\caption{Architecture of the undecidability proof for synthesis of synchronous distributed systems.}
	\label{fig:distributed-architecture-pnueli-rosner}
\end{figure}

\subsection{First Reduction}

We present the reduction from the Post correspondence problem (PCP) to Petri games with good and bad markings.

\subsubsection{The Two System Players}

The two system players without their \MODT counters are given in \refFig{structure-system} on the left side.
We display transitions with their label which is equal to the transition when the label is unique, e.g., $\tau$ or the letters to represent words $w_i$ are only labels as they occur more than once whereas $\#_k$ is both the transition and its label as it occurs only once in each system player.
Let $k\in\{1,2\}$ differentiate between the first system player and the second system player and $w$ identify the words of the players (i.e., $w=r$ if $k=1$ or $w=v$ if $k=2$).
For each of the two system players, the system place $p^k_\mathit{start}$ with the token initially has a choice between the indices $0, \ldots, n$.
Afterward, the corresponding word $w_0, \ldots, w_n$ is output letter-by-letter finished by a transition labeled by~$\tau$ (cf.\ right side of \refFig{structure-system}).
In all cases, the system place $p^k_\mathit{choice}$ is reached which presents the choices with the same label to the system player as from $p^k_\mathit{start}$ with the additional option to terminate with transition $\#_k$.

Next, we add the index counter $\cl^k_\inde = \{0^k_\inde, 1^k_\inde, 2^k_\inde\}$ and letter counter $\cl^k_\lett = \{0^k_\lett, 1^k_\lett, 2^k_\lett\}$.
Places have the form $\mathit{ID}^k \times \cl^k_\inde \times \cl^k_\lett$ where $\mathit{ID}^k$ are the places from \refFig{structure-system}.
Transitions labeled with numbers increase the \MODT counter for indices by one and transitions labeled with letters increase the \MODT counter for letters by one.
The termination transition $\#_k$ and transitions labeled by $\tau$ leave the counters unchanged and initially the counters are set to zero.

\begin{figure}[t]
	\centering
	\begin{tikzpicture}[on grid]
		\node [sysplace, tokens=1]	(p0) [label=right:$p^k_\mathit{start}$] {};
		\node [sysplace]			 	(p00) [below=of p0, opacity=0, label={[label distance=-0.45cm]90:...}] {};
		\node [sysplace]			 	(p1) [below left=2cm and 1cm of p0, label=right:$p^k_0$] {};
		\node [sysplace]			 	(p2) [below right=2cm and 1cm of p0, label=left:$p^k_n$] {};
		\node [sysplace]			 	(p11) [below=2cm of p1
		] {};
		\node [sysplace]			 	(p22) [below=2cm of p2
		] {};
		\node [sysplace]			 	(p3) [below=6cm of p0, label=left:$p^k_\mathit{choice}$] {};
		\node [sysplace]			 	(p33) [below=of p3, opacity=0, label={[label distance=-0.45cm]90:...}] {};
		\node [sysplace]			 	(p6) [right=3cm of p3, label=right:$p^k_\mathit{term}$] {};

		\node [transition, below left=of p0, label=left:$0$]	(t1) {}
			edge [pre]  (p0)
			edge [post] (p1);
		\node [transition,  below right=of p0, label=right:$n$] (t2) {}
			edge [pre]  (p0)
			edge [post] (p2);

		\node [transition,  below left=of p3, label=below:$0$] (t3) {}
			edge [pre]  (p3)
			edge [post, bend angle=90, bend left] (p1);
		\node [transition,  below right=of p3, label=below:$n$] (t4) {}
			edge [pre]  (p3)
			edge [post, bend angle=90, bend right] (p2);

		\node [transition, below=of p1, opacity=0, label={[label distance=-0.3cm]90:...}, label=left:$w_0$]	(t5) {}
			edge [pre]  (p1)
			edge [post] (p11)
			;
		\node [transition, below=of p2, opacity=0, label={[label distance=-0.3cm]90:...}, label=right:$w_n$]	(t6) {}
			edge [pre]  (p2)
			edge [post] (p22)
			;

		\node [transition, below=of p11, label=left:$\tau$]	(t7) {}
			edge [pre]  (p11)
			edge [post] (p3);
		\node [transition, below=of p22, label=right:$\tau$]	(t8) {}
			edge [pre]  (p22)
			edge [post] (p3);

		\node [transition, right=of p3, label=above:$\#_k$]	(t9) {}
			edge [pre]  (p3)
			edge [post] (p6);

		\node [sysplace, right=7cm of p0, label=right:$p^k_i$]							(s0) {};
		\node [sysplace, below=1.8cm of s0, label=right:$p^k_{i:0:w_i[0]}$]						(s1) {};
		\node [sysplace, below=1.8cm of s1, opacity=0, label={[label distance=-0.35cm]90:...}]	(s11) {};
		\node [sysplace, below=3.6cm of s1, label=right:$p^k_{i:|w_i|-1:w_i[|w_i|-1]}$]			(s2) {};
		\node [sysplace, below=1.8cm of s2, label=right:$p^k_\mathit{choice}$]						(s3) {};
		\node [transition, below=0.9cm of s0, label=right:$w_i[0{]}$]	(t0r) {}
			edge [pre]  (s0)
			edge [post] (s1)
			;
		\node [transition, below=0.9cm of s2, label=right:$\tau$]	(t1r) {}
			edge [pre]  (s2)
			edge [post] (s3)
			;
		\node [transition, below=0.9cm of s1, label=right:$w_i[1{]}$]	(t2r) {}
			edge [pre]  (s1)
			edge [post] (s11)
			;
		\node [transition, below=0.9cm of s11, label=right:$w_i[|w_i|-1{]}$]	(t3r) {}
			edge [pre]  (s11)
			edge [post] (s2)
			;

	\end{tikzpicture}
	\caption{Let $k=1$ identify the first system player with $w=r$ and let $k=2$ identify the second system player with $w=v$. On the left, each of the two system players is depicted without \MODT counters. It is abbreviated how the words $w_0, \ldots, w_n$ are output letter-by-letter by each of the two system players. On the right, this abbreviation between places $p^k_i$ for $i \in \{0, \ldots, n\}$ and $p^k_\mathit{choice}$ is given explicitly.}
	\label{fig:structure-system}
\end{figure}
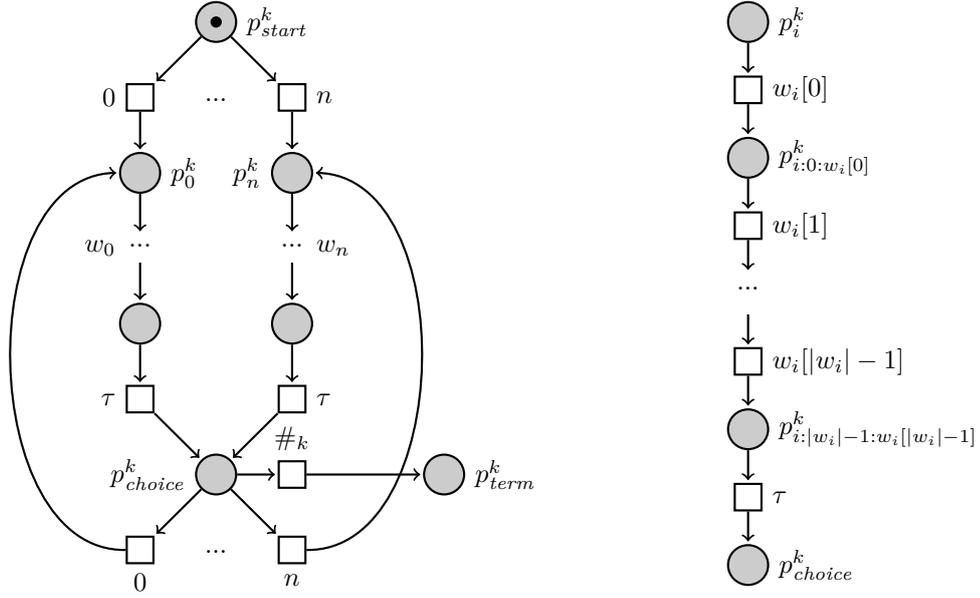

For $k \in \{1,2\}$, the Petri games for each of the two system players has the form $\pGame^k_S = (\plS^k, \emptyset, \tr^k, \fl^k, \init^k, (\emptyset, \emptyset))$ with $\plS^k$, $\tr^k$, $\fl^k$, and $\init^k$ defined as follows.
For words $w_i$ of length at least one, we introduce the notation $w_i[j]$ for $0\leq j\leq |w_i|-1$ to obtain the letters of the word.
A place $p^k_{i:j:l}$ with $0\leq i\leq n$ and $0\leq j \leq |w_i| - 1$ is reached after the $j$-th letter~$l\in\Sigma$ of word $w_i$ has been output.
We define the IDs of places as
$\mathit{ID}^k = \{p^k_\mathit{start}, p^k_\mathit{choice}, p^k_\mathit{term}, p^k_0, \ldots, p^k_n\} \cup \{ p^k_{i:j:w_i[j]} \mid 0\leq i\leq n \land 0\leq j\leq |w_i| - 1 \}$
and the set of system places as
$\plS^k = \{ (p^k_\mathit{start}, 0^k_\inde, 0^k_\lett) \} \cup \{ (\id^k, x^k, a^k) \mid \id^k \in \mathit{ID}^k \setminus \{p^k_\mathit{start}\} \land x^k \in \cl^k_\inde \land a^k \in \cl^k_\lett \}$.
The initial marking $\init^k$ is $\{(p^k_\mathit{start}, 0^k_\inde, 0^k_\lett)\}$.

We introduce the set of \textit{labels} $\labels$ for transitions as $\labels = \{\#_k, \tau\} \cup \{0, \ldots, n\} \cup \Sigma$.
Further, we introduce the notation $p$ $\firable{l}^k$ $p'$ to define a unique transition with label $l \in \labels$ from place~$p$ to place~$p'$ as transition $t_{p \firable{l} p'}^k\in\tr^k$ with the corresponding arcs $(p, t_{p\firable{l}p'}^k), (t_{p\firable{l}p'}^k, p') \in \fl^k$.
In the following, we use this notation to define the transitions $\tr^k$ and the flow $\fl^k$ step-by-step.
The outgoing transitions of the initial place and their flow are defined by
\[\{ (p^k_\mathit{start}, 0^k_\inde, 0^k_\lett) \firable{i}^k (p^k_i, 1^k_\inde, 0^k_\lett) \mid 0\leq i\leq n \}.\]
The outgoing transitions of the choice place and their flow are defined for indices by
\[
\{ (p^k_\mathit{choice}, x^k_\inde, a^k_\lett) \firable{i}^k (p^k_i, ((x+1) \bmod 3)^k_\inde, a^k_\lett) \mid 0\leq i\leq n \land 0\leq x, a \leq 2\}
\]
and for termination by
\[\{(p^k_\mathit{choice}, x^k_\inde, a^k_\lett) \firable{\#_k}^k (p^k_\mathit{term}, x^k_\inde, a^k_\lett) \mid 0\leq x, a\leq 2 \}.\]
The letter-by-letter output is outlined on the right in \refFig{structure-system} and defined as follows:
The first transitions of words are defined as 
\begin{align*}
	\{(p^k_i, x^k_\inde, a^k_\lett) \firable{w_i[0]}^k (p^k_{i:0:w_i[0]}, x^k_\inde, (&(a+1)\bmod 3)^k_\lett) \mid\\& 0\leq i \leq n \land |w_i| > 0 \land 0\leq x, a \leq 2 \}.
\end{align*}
The transitions for the remainder of the words are defined as
\begin{align*}
	\{(p^k_{i:j-1:w_i[j-1]}, x^k_\inde, a^k_\lett)&\firable{w_i[j]}^k (p^k_{i:j:w_i[j]}, x^k_\inde, ((a+1)\bmod 3)^k_\lett)\mid\\& 0\leq i \leq n \land |w_i| > 0 \land 0\leq x, a \leq 2 \land 1\leq j \leq |w_i| - 1 \}.
\end{align*}
We add a transition labeled by $\tau$ to return to place $p^k_\mathit{choice}$ by
\begin{align*}
	\{(p^k_{i:|w_i|-1:w_i[|w_i|-1])}, x^k_\inde, a^k_\lett)\firable{\tau}^k (&p^k_\mathit{choice}, x^k_\inde, a^k_\lett) \mid\\& 0\leq i \leq n \land |w_i| > 0 \land 0\leq j, m \leq 2 \}.
\end{align*}

\subsubsection{The One Environment Player}

The environment player is depicted in \refFig{structure-environment} and makes two decisions in one step:
It decides whether it wishes to check the output indices ($t_{i \land *}$) or letters ($t_{l \land * }$).
Furthermore, the environment player decides whether it suspects the first ($t_{* \land 1}$) or second ($t_{* \land 2}$) system player to terminate untruthfully or whether the termination of the system players is okay ($t_{* \land \mathit{okay}}$).
The result is stored in six pairs of the form $\{\inde,\lett\} \times \{\mathit{first,\mathit{okay,\mathit{second}}}\} $.
As all places and transition are unique in \refFig{structure-environment}, this formally defines the Petri game $\pGame_E$ for the environment player.

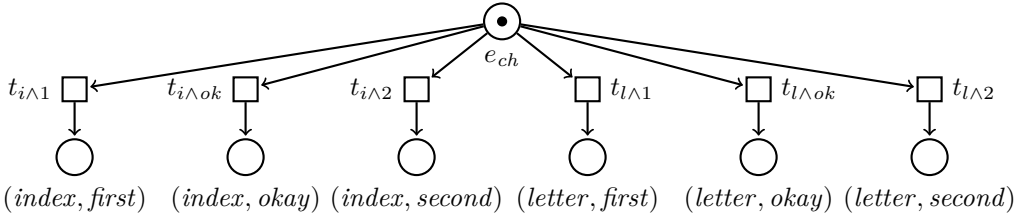
\begin{figure}
	\centering
	\begin{tikzpicture}[on grid, scale=0.9] 
		\parseNodeDistance
		\node [envplace, tokens=1, label=below:$e_{\mathit{ch}}$]			 						(e0) {};
		\node [envplace, below=2*\nodeDisX of e0, xshift=-0.25cm, left of=e0, label=below:{$(\inde,\mathit{second})$}]	 				(e3) {};
		\node [envplace, left=2.25*\nodeDisX of e3, xshift=-0.25cm, label=below:{$(\inde,\mathit{okay})$}]	 				(e2) {};
		\node [envplace, left=2.25*\nodeDisX of e2, xshift=-0.25cm, label=below:{$(\inde,\mathit{first})$}]	 				(e1) {};
		\node [envplace, right=2.25*\nodeDisX of e3, xshift=0.25cm, label=below:{$(\lett,\mathit{first})$}]	 				(e4) {};
		\node [envplace, right=2.25*\nodeDisX of e4, xshift=0.25cm, label=below:{$(\lett,\mathit{okay})$}]	 				(e5) {};
		\node [envplace, right=2.25*\nodeDisX of e5, xshift=0.25cm, label=below:{$(\lett,\mathit{second})$}]	 				(e6) {};
		\node [transition, above=of e1, label=left:$t_{i\land 1}$]	(t0) {}
			edge [pre]  (e0)
			edge [post] (e1)
			;
		\node [transition, above=of e2, label=left:$t_{i\land \mathit{ok}}$]	(t1) {}
			edge [pre]  (e0)
			edge [post] (e2)
			;
		\node [transition, above=of e3, label=left:$t_{i\land 2}$]	(t2) {}
			edge [pre]  (e0)
			edge [post] (e3)
			;
		\node [transition, above=of e4, label=right:$t_{l\land 1}$]	(t3) {}
			edge [pre]  (e0)
			edge [post] (e4)
			;
		\node [transition, above=of e5, label=right:$t_{l\land \mathit{ok}}$]	(t4) {}
			edge [pre]  (e0)
			edge [post] (e5)
			;
		\node [transition, above=of e6, label=right:$t_{l\land 2}$]	(t5) {}
			edge [pre]  (e0)
			edge [post] (e6)
			;
	\end{tikzpicture}
	\caption{The one environment player is depicted.}
	\label{fig:structure-environment}
\end{figure}

\subsubsection{Good Markings and Bad Markings}

The good and bad markings cover the two system players and the environment player.
They are implicitly ordered as the first system player, second system player, and then the environment player.
We use the notation $*$ to indicate that all places of the player at the respective position are part of the good or bad markings.
This notation can be applied to one element of the pair representing the decision by the environment player meaning that only the result of the other decision is important.
We ensure that the environment player makes its decision first and therefore can ignore the place $e_{\mathit{ch}}$ for $*$.

We start by defining good markings which indicate that both system players terminated with solutions of the same length (modulo three) as
\[\goodmarkings^\mathit{finish} = \{ \{ (p^1_\mathit{term}, x^1_\inde, a^1_\lett), (p^2_\mathit{term}, x^2_\inde, a^2_\lett), (*, \mathit{okay}) \} \mid 0\leq x, a \leq 2 \}.\]

Next, we define good markings to focus on linear firing sequences for indices as
\begin{align*}
	\goodmarkings^\inde =\{&\{ (p^1_{i_1}, y^1_\inde, b^1_\lett), (p^2, z^2_\inde, c^2_\lett), (\inde, *) \},\\&\{ (p^1, y^1_\inde, b^1_\lett), (p^2_{i_2}, z^2_\inde, c^2_\lett), (\inde, *) \} \mid \\ &~p^1 \in \mathit{ID}^1 \setminus \{ p^1_\mathit{term} \} \land p^2 \in \mathit{ID}^2 \setminus \{ p^2_\mathit{term} \} \land 0 \leq i_1, i_2 \leq n~\land \\ &~(y\parallel z) \in \{ (0\parallel 1), (2\parallel 0), (1\parallel 2) \} \land 0 \leq b, c \leq 2 \}
\end{align*}
and for letters as
\begin{align*}
\goodmarkings^\lett =\{&
\{ (p^1_{i_1:j_1:w_{i_1}[j_1]}, y^1_\inde, b^1_\lett), (p^2, z^2_\inde, c^2_\lett), (\lett, *) \},\\&
\{ (p^1, y^1_\inde, b^1_\lett), (p^2_{i_2:j_2:w_{i_2}[j_2]}, z^2_\inde, c^2_\lett), (\lett, *) \} \mid\\&
p^1 \in \mathit{ID}^1 \setminus \{ p^1_\mathit{term} \} \land p^2 \in \mathit{ID}^2 \setminus \{ p^2_\mathit{term} \}~\land\\&
0\leq i_1, i_2 \leq n \land0\leq j_1 \leq |w_{i_1}| - 1 \land 0\leq j_2 \leq |w_{i_2}| - 1~\land\\&
0 \leq y, z \leq 2 \land (b\parallel c) \in \{ (0\parallel 1), (2\parallel 0), (1\parallel 2) \} \}.
\end{align*}
The pairs of numbers for the respective \MODT counters are motivated in \refFig{linear_scheduling}.
The index counter only increases at places $p^k_{i_k}$ whereas the letter counter only increases at places $p^k_{i_k:j_k:w_{i_k}[j_k]}$.
Therefore, the place of one system player is fixed.

We introduce good markings to identify situations where the environment player claimed that one system player does not terminate but, in fact, this system player terminated as
\begin{align*}
 \goodmarkings^\mathit{term} =\:&\{\{ (p^1_\mathit{term}, y^1, b^1), *, (*, \mathit{first}) \} \mid y^1 \in \cl^1_\inde \land b^1 \in \cl^1_\lett \}~\cup\\& \{ \{ *, (p^2_\mathit{term}, z^2, c^2), (*, \mathit{second}) \} \mid z^2 \in \cl^2_\inde \land c^2 \in \cl^2_\lett \}.
\end{align*}

We define further good markings to only consider firing sequences starting with the environment player's decisions.
As the system players never learn about the decision of the environment, this only simplifies our proof later.
We define the good marking
\begin{align*}
\goodmarkings^\mathit{envfirst} =\:&
\{\{ (p^1_i, 1^1_\inde, 0^1_\lett), * , e_\mathit{ch} ) \} \mid 0\leq i \leq n \}~\cup\\&
\{ \{ *, (p^2_i, 1^2_\inde, 0^2_\lett) , e_\mathit{ch} ) \} \mid 0\leq i \leq n \}.
\end{align*}

We define bad markings for different output indices or letters at the same counter positions depending on the choice of the environment player as
\begin{align*}
\badmarkings^\inde = \{&\{ (p^1_{i_1}, x^1_\inde, b^1), (p^2_{i_2}, x^2_\inde, c^2), (\inde, *)\}  \mid\\& 0 \leq i_1,i_2 \leq n \land i_1 \neq i_2 \land 0 \leq x \leq 2 \land b^1 \in \cl^1_\lett \land c^2 \in \cl^2_\lett \}
\end{align*}
and
\begin{align*}
\badmarkings^\lett = \{&\{(p^1_{i_1:j_1:w_{i_1}[j_1]}, y^1, a^1_\lett), (p^2_{i_2:j_2:w_{i_2}[j_2]}, z^2, a^2_\lett), (\lett, *) \} \mid\\& 0 \leq i_1,i_2 \leq n \land 0 \leq j_1 \leq |w_{i_1}| - 1 \land 0 \leq j_2 \leq |w_{i_2}| - 1~\land\\& w_{i_1}[j_1] \neq w_{i_2}[j_2]\land y^1 \in \cl^1_\inde \land z^2 \in \cl^2_\inde \land 0 \leq a \leq 2 \}.
\end{align*}
When one system player terminated, the other system player can make at most one legal step and afterward should terminate as well.
Therefore, we define markings where one system player terminated and the other one makes more than one additional step as bad with

\begin{align*}
\badmarkings^{\mathit{term}_\inde}= \{&
\{ (p^1_\mathit{term}, y^1_\inde, b^1) , (p^2_i, z^2_\inde, c^2), (\inde, *) \},\\&
\{ (p^1_i, y^1_\inde, b^1) , (p^2_\mathit{term}, z^2_\inde, c^2), (\inde, *) \}
\mid\\&0 \leq i \leq n \land (y\parallel z) \in \{ (0\parallel 1), (2\parallel 0), (1\parallel 2) \} \land c^1 \in \cl^1_\lett \land  z^2 \in \cl^2_\lett \}
\end{align*}
and
\begin{align*}
\badmarkings^{\mathit{term}_\lett} =\:&
\{\{ (p^1_\mathit{term}, y^1, b^1_\lett) , (p^2_{i:j:v_i[j]}, z^2, c^2_\lett), (\lett, *) \} \mid\\&~
0 \leq i \leq n \land 0 \leq j \leq |v_i| - 1 \land (b\parallel c) \in \{ (0\parallel 1), (2\parallel 0), (1\parallel 2) \}~\land\\&~ y^1 \in \cl^1_\inde \land  z^2 \in \cl^2_\inde \}~\cup\\&
\{ \{ (p^1_{i:j:r_i[j]}, y^1, b^1_\lett) , (p^2_\mathit{term}, z^2, c^2_\lett), (\lett, *) \} \mid\\&~
0 \leq i \leq n \land 0 \leq j \leq |r_i| - 1 \land (b\parallel c) \in \{ (0\parallel 1), (2\parallel 0), (1\parallel 2) \}~\land\\&~ y^1 \in \cl^1_\inde \land z^2 \in \cl^2_\inde \}.
\end{align*}
Both system players terminating with different counter values for index or letter counter is also bad behavior but from there no good marking is reachable anymore and we do not need to define it as bad marking.

The constructed Petri game has the form $\pGame = \pGame^1 || \pGame^2 || \pGame^E$ where $||$ defines the parallel composition as the disjoint union over system and environment places, transitions, flows and initial markings with the good markings $\goodmarkings = \goodmarkings^\mathit{finish} \cup \goodmarkings^\inde \cup \goodmarkings^\lett \cup \goodmarkings^\mathit{term} \cup \goodmarkings^\mathit{envfirst}$ and the bad markings $\badmarkings = \badmarkings^\inde \cup \badmarkings^\lett \cup \badmarkings^{\mathit{term}_\inde} \cup \badmarkings^{\mathit{term}_\lett} $.

\subsection{Correctness of the First Reduction}

It is an easy check to see that our usage of $*$ results in disjoint good and bad markings.
Our reduction further only requires two system players and a single environment player.

\begin{lemma}[Linear firing sequence]\label{lem:PCPlinearScheduling}
	Bad markings for different output indices or letters are reached without a good marking being reached before iff the letters are output at the same position.
\end{lemma}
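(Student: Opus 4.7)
The plan is to show that along any firing sequence that does not pass through a good marking from $\goodmarkings^\inde$ (respectively $\goodmarkings^\lett$), the values of the two \MODT index (letter) counters coincide exactly at those moments when both system players have executed the same number of index-outputting (letter-outputting) transitions. Once this invariant is in hand, both implications fall out immediately from the shape of $\badmarkings^\inde$ and $\badmarkings^\lett$, whose defining conjunct $x^1=x^2$ (respectively $a^1=a^2$) matches the counters.

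First, I would formalise \emph{linear firing sequences for indices} as those firing sequences along which no marking from $\goodmarkings^\inde$ is reached, and analogously for letters. Inspecting \refFig{linear_scheduling}, the only counter configurations that remain reachable without entering a green good marking are the six configurations on the white cycle $(0\parallel 0)\to(1\parallel 0)\to(1\parallel 1)\to(2\parallel 1)\to(2\parallel 2)\to(0\parallel 2)\to(0\parallel 0)$. I would establish by induction on the length of a linear firing sequence that, after each step along this cycle, if the two counters are equal then the two system players have executed exactly the same number $N$ of the relevant output transitions, and if they differ in the pattern $(y\parallel z)$ with $y=z+1 \bmod 3$ then the first player has executed $N+1$ transitions and the second player $N$. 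The base case is $(0\parallel 0)$ with $N=0$, and the step case follows because each outgoing edge in \refFig{linear_scheduling} either increments the first or the second counter, and $\goodmarkings^\inde$ rules out any other configuration. The invariant also uses $\goodmarkings^\mathit{envfirst}$ and $\goodmarkings^\mathit{term}$ to exclude edge cases where one player has not yet started or has already terminated asymmetrically.

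For the ``if'' direction, suppose both players have output indices at the same position $N$, with different indices $i_1\neq i_2$. By the invariant, immediately after both players have fired their $N$-th index transition, the underlying marking contains $(p^1_{i_1},x^1_\inde,b^1_\lett)$ and $(p^2_{i_2},x^2_\inde,c^2_\lett)$ with $x^1=x^2=N\bmod 3$, which, together with the environment decision $(\inde,*)$, is exactly a marking in $\badmarkings^\inde$. Because the sequence up to this point is linear, no good marking from $\goodmarkings^\inde$ has occurred; the other good-marking components $\goodmarkings^\mathit{finish}$, $\goodmarkings^\mathit{term}$, $\goodmarkings^\mathit{envfirst}$ also do not apply, so the bad marking is reached first. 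The letter case is symmetric, comparing $(p^1_{i_1:j_1:w_{i_1}[j_1]},\ldots)$ and $(p^2_{i_2:j_2:w_{i_2}[j_2]},\ldots)$ with $a^1=a^2$ and $w_{i_1}[j_1]\neq w_{i_2}[j_2]$.

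For the ``only if'' direction, suppose a marking in $\badmarkings^\inde$ is reached without any preceding good marking. Then the entire prefix is a linear firing sequence for indices, so the invariant applies. The bad marking forces the two index counters to coincide, hence by the invariant both players have executed the same number of index-output transitions, i.e.\ the two differing indices sit at the same position. The analogous argument applies for $\badmarkings^\lett$. The main subtlety, and the only real obstacle, is justifying that the invariant is preserved across the interleaving of index transitions and letter transitions of the two players; here the key observation is that index and letter counters evolve independently of one another, so the induction for indices only needs to consider steps that change an index counter, and the good markings in $\goodmarkings^\inde$ precisely cut off every non-linear index interleaving (and symmetrically for letters).
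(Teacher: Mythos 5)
Your proposal is correct and follows essentially the same route as the paper: both arguments rest on the observation that, in the \MODT counter reachability graph of \refFig{linear_scheduling}, the only way for the two counters to coincide at different output positions is to pass through one of the green good markings, so a bad marking reached before any good marking must compare outputs at the same position. You merely make the paper's two-sentence figure-based argument explicit via an induction establishing the invariant that equal counters on the white cycle correspond to equal output counts, which is a legitimate elaboration rather than a different approach.
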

\begin{proof}
	For a strategy with different output indices or letters at position $q$ and $q+3$,
	a good marking is reached at one of the positions $(q\parallel q+1)$, $(q+2\parallel q)$, or $(q+1\parallel q+2)$ as indicated in \refFig{linear_scheduling}.
	Therefore, positions $q$ and $q+3$ are never compared and no bad marking can be reached for inequality without reaching a bad marking before.
\end{proof}

\begin{lemma}[Same number of output indices/letters]\label{lem:PCPsizeWinStrategy}
	For the constructed Petri game $\pGame$, strategies outputting a different number of indices or letters at the two system players are not winning.
\end{lemma}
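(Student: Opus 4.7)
The plan is to prove non-winningness directly: for any strategy with mismatched output counts, I exhibit one complete firing sequence that reaches a bad marking strictly before any good marking. Split into two cases by what differs: the numbers $l$ and $m$ of indices, or the totals $L_1$ and $L_2$ of letters. By symmetry between the two system players (and between the suspect values \emph{first} and \emph{second}), assume the smaller count belongs to player~1, so either $l<m$ (index case) or $l=m$ but $L_1<L_2$ (letter case).

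For the index case, schedule the environment to fire $t_{i\land 2}$ first, landing in $(\inde,\mathit{second})$; then execute a linear firing sequence on indices where player~1 completes its $k$-th index-word-$\tau$ block before player~2 mirrors the same pattern, for $k=1,\ldots,l$. As indicated by \refFig{linear_scheduling}, the index-counter pair then visits only the six linear states $\{(0,0),(1,0),(1,1),(2,1),(2,2),(0,2)\}$, disjoint from the good/bad pair set $\{(0,1),(1,2),(2,0)\}$. After player~1 commits $\#_1$, permit player~2 two further index outputs: the first restores matching counters, and the second pushes the pair to $(l\bmod 3,(l+1)\bmod 3)$, which lies in the bad pair set; with player~1 at $p^1_\mathit{term}$ and player~2 at $p^2_{j_{l+1}}$, this is precisely a marking of $\badmarkings^{\mathit{term}_\inde}$.

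The verification that no good marking is reached earlier is routine. $\goodmarkings^\mathit{envfirst}$ is ruled out by the environment's first move; $\goodmarkings^\mathit{finish}$ and $\goodmarkings^\mathit{term}$ both need player~2 at $p^2_\mathit{term}$, which never happens in this truncated run because only $l+1<m$ of its moves are used; $\goodmarkings^\lett$ is disabled by the environment choice $(\inde,*)$; and $\goodmarkings^\inde$ demands a non-linear counter pair with neither player at~$p^k_\mathit{term}$, both of which fail during the corresponding phases. If the strategy moreover outputs $i_k\neq j_k$ for some $k\le l$, the marking $\badmarkings^\inde$ only reinforces non-winningness by firing even earlier.

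The letter case is analogous but subtler: the environment picks $(\lett,\mathit{second})$, and the players execute strict linear firing on letters, bridging with $\tau$ and index transitions whenever a player must be routed to its next letter place. After player~1 terminates, two additional letter outputs by player~2 push the letter-counter pair into $\{(0,1),(1,2),(2,0)\}$, reaching $\badmarkings^{\mathit{term}_\lett}$. The main obstacle is confirming that strict alternation on letters is actually schedulable even when word lengths disagree: this holds because the transitions $\tau$, index choice, and $\#_k$ do not touch the letter counter, so either player can always be routed to its next letter via a short burst of such bridging moves without disturbing the linear-counter invariant, and the independence of the two players further guarantees that such bridging never forces the opponent out of turn.
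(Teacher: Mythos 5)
Your proof is correct in substance but organized differently from the paper's. The paper splits on the \emph{size} of the discrepancy: for a difference of exactly one it argues that the linear firing sequence never reaches a good marking (in particular $\goodmarkings^\mathit{finish}$ fails because the terminal counters differ), and only for a difference of two or more does it invoke $\badmarkings^{\mathit{term}_\inde}$. You instead split on \emph{which} quantity differs (indices vs.\ letters) and handle every nonzero difference uniformly by exhibiting a single adversarial schedule that reaches a $\badmarkings^{\mathit{term}}$ marking strictly before any good marking, with an explicit environment move $(\inde,\mathit{second})$ resp.\ $(\lett,\mathit{second})$ to neutralize $\goodmarkings^\mathit{term}$ and $\goodmarkings^\mathit{finish}$. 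This buys a more self-contained and uniform argument (the difference-by-one case does in fact hit $\badmarkings^{\mathit{term}_\inde}$ once the shorter player has fired $\#_1$ and the longer one overtakes), and your explicit treatment of the bridging moves in the letter case --- that $\tau$, index choices, and $\#_k$ leave the letter counter untouched, so letter-linearity is always schedulable --- makes precise something the paper leaves implicit. One slip to fix: in your index schedule player~2 mirrors each of player~1's $l$ blocks \emph{before} $\#_1$ fires, so the counters already agree at $(l\parallel l)$ at that point and a \emph{single} further index output by player~2 yields the bad pair $(l\bmod 3\parallel (l+1)\bmod 3)$; your "two further outputs, the first restores matching counters" only matches a schedule in which player~2 still lags by one block when $\#_1$ fires. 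Either schedule works, but the narration should commit to one. Also, "by symmetry" deserves a sentence: the linear order is not symmetric under swapping the players, but $\badmarkings^{\mathit{term}_\inde}$ and $\badmarkings^{\mathit{term}_\lett}$ cover both orientations and the suspect value is swapped accordingly, so the adaptation is routine.
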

\begin{proof}
	Assume the strategy has output $q$ indices at the first player and $q+1$ indices at the second player (letters analog).
	From the structure of the system players, we know that both strategies have to terminate after outputting the letters of the output indices.
	After both players have terminated, no good marking from $\goodmarkings^\mathit{finish}$ is reached because the index counters differ.
	No good marking was reached before for the linear firing sequence.

	Assume the strategy outputs $q$ indices at the first player and $q+2$ indices at the second player (letters analog).
	A corresponding bad marking from $\badmarkings^{\mathit{term}_\inde}$ is reached.
\end{proof}

This proves the absence of untruthful termination.

\begin{lemma}[Finite winning strategies]\label{lem:PCPfiniteWinStrategy}
	For the constructed Petri game $\pGame$, all infinite strategies are not winning.
\end{lemma}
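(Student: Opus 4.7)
The plan is to assume, for contradiction, that $\sysstrat$ is infinite and winning, and then exhibit a single complete firing sequence of a maximal play of $\sysstrat$ that violates the winning condition. Because $\sysstrat$ must cope with every choice of the environment player, I focus on the subtree below the environment move reaching $(\inde,\mathit{okay})$; the $(\lett,\mathit{okay})$ case is symmetric. In this subtree the only good markings that can still be reached are those of $\goodmarkings^\mathit{finish}$ and of $\goodmarkings^\inde$, since $\goodmarkings^\mathit{envfirst}$ requires $e_\mathit{ch}$, $\goodmarkings^\mathit{term}$ requires the environment at $(*,\mathit{first})$ or $(*,\mathit{second})$, and $\goodmarkings^\lett$ requires it at $(\lett,*)$.

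The key observation is that the two system players and the environment player use pairwise disjoint sets of places and transitions, so in any maximal play $\pi$ of $\sysstrat$ the concurrent events of the two system players may be totally ordered in any way consistent with causality. I pick the block-wise alternating linearisation indicated by the white path in \refFig{linear_scheduling}: after the environment moves, the first system player fires a complete block consisting of one index transition, the letter transitions of its chosen word, and the trailing $\tau$; then the second system player fires such a block; and the pattern repeats. Tracking the pair of \MODT index counters while a player sits in a place of the form $p^k_i$, the only pairs ever visited are $(1,0)$, $(1,1)$, $(2,1)$, $(2,2)$, $(0,2)$, and $(0,0)$, cyclically. None of these lies in $\{(0\parallel 1),(2\parallel 0),(1\parallel 2)\}$, so no element of $\goodmarkings^\inde$ is ever visited along this firing sequence.

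It remains to rule out $\goodmarkings^\mathit{finish}$. If both players never fire $\#_k$ along $\pi$, neither ever enters $p^k_\mathit{term}$ and $\goodmarkings^\mathit{finish}$ is avoided; the constructed firing sequence then visits no good marking at all, contradicting winning. Otherwise one player, without loss of generality the first, eventually fires $\#_1$ and parks in $p^1_\mathit{term}$ while the second keeps playing infinitely. Now $\goodmarkings^\inde$ is excluded because its definition forbids $p^1_\mathit{term}$, $\goodmarkings^\mathit{finish}$ is never reached because the second player never terminates, and as soon as the second player advances its index counter by two further steps the marking falls into $\badmarkings^{\mathit{term}_\inde}$ before any good marking is hit. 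Either way the winning condition fails.

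The main obstacle is the first step, namely justifying that the block-alternating schedule actually arises as the firing order of some maximal play of $\sysstrat$. This follows from the independence of the three subnets: an enabled transition of one player remains enabled under any firing of another, so the alternation can be sustained for as long as both players keep firing, and each new index or termination choice prescribed by $\sysstrat$ can be inserted into the schedule without violating causality. The remainder is a direct counter-tracking check against the definitions of $\goodmarkings^\inde$, $\goodmarkings^\mathit{finish}$, and $\badmarkings^{\mathit{term}_\inde}$.
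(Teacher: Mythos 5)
Your proof is correct and follows essentially the same route as the paper: exhibit the linear (turn-taking) firing sequence under the environment's $(\inde,\mathit{okay})$ choice and verify by tracking the \MODT index counters that no good marking is ever reached. The only difference is that you handle the case where exactly one player terminates directly via $\goodmarkings^\mathit{finish}$ and $\badmarkings^{\mathit{term}_\inde}$, whereas the paper delegates that case to \refLemma{PCPsizeWinStrategy}; your version also makes explicit the schedulability argument (independence of the three subnets) that the paper leaves implicit.
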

\begin{proof}
	The only possibility for an infinite strategy is to not fire $\oneend$ or $\twoend$ at one or at both system players.
	If both system players do not fire $\oneend$ and $\twoend$, then the same or two different infinite solutions to the PCP are given which is not winning because, for the linear firing sequence on the indices or the letters, no good marking can be reached.
	If one process does not fire $\oneend$ or $\twoend$, then different solutions to the PCP are given which is not winning by \refLemma{PCPsizeWinStrategy}.
\end{proof}

\begin{lemma}[Same output of winning strategies]\label{lem:PCPsameOutput}
	For the constructed Petri game $\pGame$, strategies outputting a different index or letter at the two system players at the same position are not winning.
\end{lemma}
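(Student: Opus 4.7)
The plan is to show that any strategy $\sysstrat$ where the two system players disagree at the same position on an index, resp.\ a letter, admits a maximal play reaching a bad marking before any good marking, contradicting winningness. I treat the index case in detail; the letter case is symmetric. First I would invoke $\goodmarkings^\mathit{envfirst}$ to argue that in every maximal play the environment player must fire one of $t_{i\land*}, t_{l\land*}$ before either system player outputs anything (otherwise the play already passes through a good marking, hence is winning). Among these opening plays I choose the one in which $t_{i\land\mathit{ok}}$ fires first, placing the environment token in $(\inde,\mathit{okay})$. Since the environment's choice does not causally precede any system transition, the system branches of $\sysstrat$ are identical under any environment choice, so this opening is realizable in $\sysstrat$.

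Next I would schedule a \emph{linear firing sequence on indices}, as in \refFig{linear_scheduling}: alternately, player~1 outputs its next index, then player~2 outputs its next index, with the letter outputs of each word interleaved between the two index outputs in the same linear fashion. This keeps the index-counter pair $(y^1\parallel z^2)$ confined to the white cells $\{(0\parallel 0),(1\parallel 0),(1\parallel 1),(2\parallel 1),(2\parallel 2),(0\parallel 2)\}$, so $\goodmarkings^\inde$ is avoided; the environment's choice of $(\inde,\mathit{okay})$ together with non-termination of both players before step $q$ rules out $\goodmarkings^\lett$, $\goodmarkings^\mathit{finish}$, and $\goodmarkings^\mathit{term}$. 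At the moment immediately after player~2 outputs its $q$-th index, the marking contains $(p^1_{i_1}, x^1_\inde, b^1)$ and $(p^2_{i_2}, x^2_\inde, c^2)$ with $x = q\bmod 3$, together with the environment token at $(\inde,\mathit{okay})$; since $i_1 \neq i_2$, this marking lies in $\badmarkings^\inde$. For differing letters at position $q$, the construction is dual: the environment fires $t_{l\land\mathit{ok}}$, a linear schedule on letter outputs is chosen (indices fired between letters in analogous alternation), and the disagreement produces a marking in $\badmarkings^\lett$ before any good marking.

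The main obstacle is to verify that the desired interleaving is actually a play of $\sysstrat$. This reduces to two observations: (i)~the environment transitions are never blocked by justified refusal, so $\sysstrat$ admits the chosen opening; and (ii)~the system players share no causal past with the environment's decision place $e_\mathit{ch}$ prior to their own transitions, so the set of transitions present in $\sysstrat$ at each reachable system cut is independent of which environment branch was taken. Consequently, the transitions needed for the linear schedule are in $\sysstrat$. The remaining verification — that every intermediate marking along the schedule stays outside $\goodmarkings$ — is routine counter bookkeeping against the white-cell cases of \refFig{linear_scheduling}, noting in particular that markings in which only one of the two players sits at a $p^k_{i_k}$ or $p^k_{i:j:w_i[j]}$ place never satisfy the structural shape required by $\goodmarkings^\inde$ or $\goodmarkings^\lett$.
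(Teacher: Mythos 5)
Your proof is correct and follows essentially the same route as the paper's: after the environment commits to $(\inde,\mathit{okay})$ (resp.\ $(\lett,\mathit{okay})$), a turn-taking linear schedule on the disputed outputs drives the play into $\badmarkings^\inde$ (resp.\ $\badmarkings^\lett$) at position $q$ before any good marking can occur. The paper states this in two sentences and leaves implicit what you spell out — realizability of the schedule in the strategy via player independence and the white-cell bookkeeping against $\goodmarkings^\inde$/$\goodmarkings^\lett$ — so your version is simply a more detailed rendering of the same argument.
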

\begin{proof}
	Assume that the strategy differs at position $q$ on the indices $i_1$ and~$i_2$. The case for letters is analog.
	For the turn-taking run of the system players after the decision of the environment player for $\inde$ and $\mathit{okay}$, the bad marking $\{ (p^1_{i_1}, x^1_\inde, b^1) , (p^2_{i_2}, x^2_\inde, c^2 ) , (\inde, \mathit{okay}) \}$ for the same value $x$ of the \MODT index counter and arbitrary \MODT letter counters $b^1$ and $c^2$ is reached.
\end{proof}

\begin{lemma}[Strategy translation]\label{lem:PCPstrategyTranslation}
	For every instance of the PCP, there exists a solution to the instance iff there exists a winning strategy in the Petri game from our construction.
\end{lemma}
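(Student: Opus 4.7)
The plan is to prove both directions using the structural properties already established in \refLemma{PCPlinearScheduling}, \refLemma{PCPsizeWinStrategy}, \refLemma{PCPfiniteWinStrategy}, and \refLemma{PCPsameOutput}.

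For the forward direction, given a solution $i_1, \ldots, i_l$ to the PCP with $r_{i_1}\cdots r_{i_l} = v_{i_1}\cdots v_{i_l}$, I would construct the strategy $\sysstrat$ in which both system players output the same sequence $i_1, \ldots, i_l$: at $p^k_\mathit{start}$ the system player allows only index~$i_1$; after each word $w_{i_q}$ is output letter-by-letter followed by $\tau$, from $p^k_\mathit{choice}$ the player allows only index~$i_{q+1}$; and after $i_l$ has been output, the player allows only the termination transition $\#_k$. The strategy is independent of the environment player's choice. To verify that $\sysstrat$ is winning, I would check each class of bad markings: $\badmarkings^\inde$ cannot occur since both players output the same index at each position; $\badmarkings^\lett$ cannot occur because the concatenated word outputs are identical letter-by-letter; and $\badmarkings^{\mathit{term}_\inde}$ and $\badmarkings^{\mathit{term}_\lett}$ are avoided since the two players terminate after the same number of indices and letters. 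Eventually, regardless of the environment's decision, either a good marking from $\goodmarkings^\mathit{finish}$ is reached when both players terminate with matching counter values, a marking from $\goodmarkings^\mathit{term}$ is reached if the environment suspected untruthful termination, or a good marking from $\goodmarkings^\mathit{envfirst}$ is reached if the environment has not yet played.

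For the backward direction, suppose a winning strategy $\sysstrat$ exists. By \refLemma{PCPfiniteWinStrategy}, $\sysstrat$ is finite, so both system players eventually fire $\oneend$ and $\twoend$. By \refLemma{PCPsizeWinStrategy}, both output the same number $l$ of indices and the same total number of letters. By \refLemma{PCPsameOutput}, both players output the same index at each of the $l$ positions and the same letter at each position of their concatenated words. Let $i_1, \ldots, i_l$ denote this common index sequence. Since the structure of \refFig{structure-system} forces the letters of $w_{i_q}$ to be emitted before the next index is chosen, the concatenated letter sequence of the first player equals $r_{i_1}\cdots r_{i_l}$ and of the second player equals $v_{i_1}\cdots v_{i_l}$, and these sequences agree. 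Non-emptiness ($l \geq 1$) follows from the observation that $p^k_\mathit{start}$ has no outgoing termination transition, so each player must output at least one index and its word before being able to fire $\#_k$.

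The main obstacle is handling the asynchronous interleavings of the three independent players in the argument, in particular the interaction between the environment's initial choice and the linear firing sequences enforced by the \MODT counters. \refLemma{PCPlinearScheduling} is crucial here: it guarantees that the equality checks encoded by the bad markings are only triggered along firing sequences where the two system players are synchronized modulo three, so that the comparison is truly index-by-index and letter-by-letter, while the good markings in $\goodmarkings^\inde$ and $\goodmarkings^\lett$ discharge all other interleavings as winning. Together with the absence of untruthful termination guaranteed by \refLemma{PCPsizeWinStrategy}, this reduces the analysis to the synchronous turn-taking case in which the two extracted sequences must coincide exactly, yielding the desired PCP solution.
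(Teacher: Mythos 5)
Your proposal is correct and follows essentially the same route as the paper's proof: the forward direction builds the identical strategy (both players deterministically outputting the common index sequence with letter-by-letter words and final termination) and discharges the environment's choices and non-turn-taking interleavings via the good markings, while the backward direction combines \refLemma{PCPfiniteWinStrategy}, \refLemma{PCPsizeWinStrategy}, and \refLemma{PCPsameOutput} to extract a common finite index sequence as a PCP solution. Your explicit note on non-emptiness ($p^k_\mathit{start}$ having no termination transition) is a small detail the paper leaves implicit, but the argument is otherwise the same.
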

\begin{proof}
``$\Rightarrow$'': Let $i_1,i_2,\ldots,i_l$ be the solution to the PCP.
We build the corresponding strategy in the Petri game where both system players activate the transitions labeled by $i_1,w_{i_1},\tau,i_2,w_{i_2},\tau,\ldots,i_l,w_{i_l},\tau,\#_k$, where all words are output letter-by-letter, and the environment player remains unrestricted.
Initially, all three players have transitions enabled.
When the transition of the environment player is not the first transition of the considered firing sequence, good markings from $\goodmarkings^\mathit{first}$ are reached immediately and we do not need to consider these cases further.
When the environment player suspects one system player to not terminate, good markings from $\goodmarkings^\mathit{term}$ are reached eventually because the strategy terminates both system players and bad markings from $\badmarkings^{\mathit{term}_\inde}$ and $\badmarkings^{\mathit{term}_\lett}$ are avoided.
We therefore can neglect the second decision of the environment player and only need to consider the two cases further where the environment player either wants to check output indices or output letters.
For both cases, firing sequences which are not turn-taking on the considered outputs reach good markings from $\goodmarkings^\inde$ or $\goodmarkings^\lett$.
For turn-taking firing sequences, bad markings from $\badmarkings^\inde$ or $\badmarkings^\lett$ are avoided as the same indices and letters are output by the assumption that we translate a solution to the PCP.
In the end, a good marking from $\goodmarkings^\mathit{finish}$ is reached as the strategy is finite and of the same length at both system players.

``$\Leftarrow$'': Winning strategies have to be finite at both system players and have to end with a transition to the final place due to \refLemma{PCPsizeWinStrategy} and  \refLemma{PCPfiniteWinStrategy}.
We claim that the allowed sequence of indices is the same at both system players and constitutes a solution to the instance of the PCP.
If the sequence of indices is not the same, the environment player could have tested for different indices and if the sequence of indices is no solution to the instance of the PCP, the environment player could have tested for different letters.
Therefore, each possible winning strategy constitutes a solution to the Post correspondence problem.
\end{proof}

From these lemmas, it follows that the only possibility for a winning strategy is to output the same finite solution to the instance of the PCP at both system players.
Therefore, the realizability problem for Petri games with good and bad markings is undecidable as it would be able to decide the existence of such a strategy.

\begingroup
\def\thetheorem{\ref{theo:undecGoodAndBad}}
\begin{theorem}
	For Petri games with good and bad markings and at least two system and one environment player, the question if the system players have a winning strategy is undecidable.
\end{theorem}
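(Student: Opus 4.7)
The plan is to reduce the Post correspondence problem (PCP) to the question of existence of a winning strategy in a Petri game with two system players, one environment player, and good and bad markings, using precisely the construction $\pGame = \pGame^1 \| \pGame^2 \| \pGame^E$ described in the excerpt. Given a PCP instance with lists $r_0,\dots,r_n$ and $v_0,\dots,v_n$, I build the two system players (each outputting an index followed by the letters of the corresponding word, then either another index or the end symbol $\#_k$) together with the three-valued index and letter counters, the single environment player making the initial choice between checking indices or letters and between suspecting termination of player~1, player~2, or neither, and the good and bad markings $\goodmarkings$ and $\badmarkings$ as formally defined. The reduction is then \emph{computable in polynomial time} in the size of the PCP instance.

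I would then prove the central equivalence: a winning strategy for the system players in $\pGame$ exists iff the PCP instance has a solution. The ``only if'' direction will use \refLemma{PCPsizeWinStrategy}, \refLemma{PCPfiniteWinStrategy}, and \refLemma{PCPsameOutput} to show that any winning strategy must be finite, must terminate at both players with exactly the same number of indices and letters, and must output the same sequence of indices at both players; the outputs must also match letter-by-letter, since otherwise the environment's letter-check branch would reach a bad marking from $\badmarkings^\lett$ via the linear firing sequence guaranteed by \refLemma{PCPlinearScheduling} before any good marking is reached. Hence the common output sequence of indices is a valid PCP solution. The ``if'' direction is exactly \refLemma{PCPstrategyTranslation}: given $i_1,\dots,i_l$ solving the PCP, the strategy in which both system players commit to outputting $i_1, w_{i_1}, \tau, \dots, i_l, w_{i_l}, \tau, \#_k$ is winning, by the case analysis on the environment's choice carried out in the proof of that lemma.

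Combining these two directions yields a many-one reduction from PCP to the realizability problem for Petri games with two system players, one environment player, and good and bad markings; since PCP is undecidable, so is the target problem. The extension to more system players is immediate by adding inert dummy system players that fire no transitions and appear unconstrained in all good markings (using $*$ entries) and never in a bad marking.

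The main obstacle, already handled by the construction in the excerpt, is reconciling the fundamentally asynchronous nature of Petri games, where firing orders are free, with the essentially synchronous nature of the PCP-style encoding, which requires comparing outputs position-by-position at both players. The \MODT counters together with the good markings in $\goodmarkings^\inde$, $\goodmarkings^\lett$, and $\goodmarkings^\mathit{envfirst}$ force the only runs in which bad markings remain reachable to be linear, turn-taking firing sequences on either indices or letters, while $\goodmarkings^\mathit{term}$ and $\badmarkings^{\mathit{term}_\inde} \cup \badmarkings^{\mathit{term}_\lett}$ together with the independence of the three players (which blocks any communication between the two system players) prevent winning by untruthful termination. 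The verification of this reconciliation is precisely the content of Lemmas~\ref{lem:PCPlinearScheduling}--\ref{lem:PCPstrategyTranslation}, so the proof of \refTheo{undecGoodAndBad} reduces to assembling these lemmas and observing that our construction is polynomial-time computable.
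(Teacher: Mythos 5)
Your proposal follows essentially the same route as the paper: it uses the identical reduction $\pGame = \pGame^1 \| \pGame^2 \| \pGame^E$ from the PCP and assembles the same five lemmas (\refLemma{PCPlinearScheduling}--\refLemma{PCPstrategyTranslation}) to establish the equivalence between PCP solutions and winning strategies, which is exactly how the paper proves \refTheo{undecGoodAndBad}. The only additions (explicit polynomial-time computability of the reduction and the dummy-player extension to more system players) are harmless elaborations, not a different argument.
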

\addtocounter{theorem}{-1}
\endgroup
\begin{proof}
	This follows from \refLemma{PCPlinearScheduling}, \refLemma{PCPfiniteWinStrategy}, \refLemma{PCPsizeWinStrategy}, \refLemma{PCPsameOutput}, and \refLemma{PCPstrategyTranslation}.
\end{proof}

\subsection{Second Reduction}

We present the reduction from Petri games with good and bad markings to Petri games with only good markings.
Environment places are added such that all players become environment players after each transition in the original Petri game.
Transitions are added to reverse the change before the next transition in the original Petri game.
For each bad marking, a transition is added from the additional environment places corresponding to the bad marking leading to a new sink place which is not part of any good marking.
The environment players can fire the transition to the sink place when a bad marking is reached in the original Petri game.
If no such transition is enabled, then the environment players have to return to system places and the next decision of the system players follows.
Notice that this construction does not introduce additional synchronization between the players but enables firing orders between the players such that all players are environment players and a transition to the sink place is enabled if a bad marking would be reached in the original Petri game.

As proven in the first reduction, Petri games with good and bad markings and at least two system and one environment player are undecidable.
Therefore, Petri games with good markings and at least three players, out of which one is always an environment player and two  can change between being a system and environment player are undecidable.
Without changing the type of players, this corresponds to Petri games with good markings and at least two system and three environment players where two pairs of system and environment player change between being active and inactive.

In the following, we use the symbols $\pi$ and $\tau$ to introduce and reference new and unique places and transitions.
Given a Petri game with good and bad markings $\pGame^A = (\plS^A,\plE^A,\tr^A,\fl^A,\init^A,(\goodmarkings^A,\badmarkings^A))$, we define the Petri game with good markings $\pGame^B = (\plS^B,\plE^B,\tr^B,\fl^B,\init^B,\goodmarkings^B)$ as
\begin{align*}
	\plS^B =&~\plS^A \cup \{ \pi'_p \mid p \in \init^A \cap \plS \},\\
	\plE^B =&~\plE^A \cup \{ \pi'_p \mid p \in \init^A \cap \plE \} \cup \{\pi_p \mid p \in \plS^A \cup \plE^A\} \cup \{\pi_{\mathit{sink}}\},\\
	\tr^B  =&~\tr^A  \cup \{\tau_t \mid t \in \tr^A \} \cup \{\tau_M \mid M \in \badmarkings^A \} \cup \{\tau_\init\},\\
	\fl^B  =&~\{(p,t) \mid (p,t) \in \fl^A \} \cup \{ (t,\pi_p) \mid (t,p) \in \fl^A \}~\cup\\
	        &~\{ (\pi_p, \tau_t) \mid (p,t) \in \fl^A \} \cup \{ (\tau_t, p) \mid (t,p) \in \fl^A\}~\cup\\
	        &~\{ (\pi_p, \tau_M) \mid M \in \badmarkings^A \land p \in M \} \cup \{ (\tau_M, \pi_{\mathit{sink}}) \mid M \in \badmarkings^A \}~\cup\\
	        &~\{ (\pi'_p,\tau_\init) \mid p \in \init^A \} \cup \{(\tau_\init, p) \mid p \in \init^A \},\\
	\init^B =&~\{\pi'_p \mid p \in \init^A\}\text{, and }
	\goodmarkings^B =\goodmarkings^A.
\end{align*}

Each original place $p$ is preceded by an environment place $\pi_p$.
They are connected by the transition $\pi_t$.
For the original successor places $p'$ of $p$, the places $\pi_{p'}$ are reached from $p$ by transition $t$.
To encode bad markings, we add a sink place $\pi_{\mathit{sink}}$ which is not part of any good marking.
For each bad marking $M$, the sink place is reachable from the places $\pi_p$ for the places $p \in M$.
The Petri game starts from places $\pi'_p$ for places $p$ from the initial marking.
The original initial marking is reached after firing transition $\tau_\init$.
Notice that this transition can fire at most once and is necessary in case the initial marking is a bad marking.

\subsection{Correctness of the Second Reduction}

\begingroup
\def\thetheorem{\ref{theo:undecGoodAndThree}}
\begin{theorem}
	For Petri games with good markings and at least three players, out of which one is an environment player and each of the other two changes between system and environment player, the question if the system players have a winning strategy is undecidable.
\end{theorem}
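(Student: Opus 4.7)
The plan is to reduce from the immediately preceding Theorem~\ref{theo:undecGoodAndBad}: given a Petri game $\pGame^A$ with two system players, one environment player, and winning condition $(\goodmarkings^A, \badmarkings^A)$, I would use the construction $\pGame^B$ already sketched in the excerpt and show that the system players have a winning strategy in $\pGame^A$ if and only if they have one in $\pGame^B$. The structural requirement then follows by inspection: the original environment player only ever visits places in $\plE^A \cup \{\pi_p : p \in \plE^A\} \subseteq \plE^B$, so it remains an environment player throughout. Each of the two original system players alternates, since on original places $p \in \plS^A \subseteq \plS^B$ the token is a system player, while after firing some $t \in \tr^A$ it lands in $\pi_{p'} \in \plE^B$ and only returns to a system place after some $\tau_{t'}$ fires. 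Hence $\pGame^B$ has exactly the promised shape: one pure environment player plus two players that oscillate between system and environment roles.

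For soundness, I would lift a winning strategy $\sysstrat^A$ for $\pGame^A$ to a strategy $\sysstrat^B$ for $\pGame^B$ by expanding every occurrence of $t \in \tr^A$ in $\sysstrat^A$ into the pair $t,\tau_t$ with the intermediate $\pi_p$-places inserted. The system players have no control during the shadow phase, but justified refusal is trivially satisfied there because all shadow places are environment places. The only extra environment moves available are the transitions $\tau_M$ to $\pi_\mathit{sink}$; by definition $\tau_M$ is enabled only when the entire multiset $\{\pi_p : p \in M\}$ is marked, which corresponds to the marking $M \in \badmarkings^A$ being reachable in $\pGame^A$ along the same firing sequence. Since $\sysstrat^A$ avoids all bad markings, $\tau_M$ is never enabled in $\sysstrat^B$, so every complete firing sequence eventually returns to an original marking and visits the same good marking as in $\sysstrat^A$ (recall $\goodmarkings^B = \goodmarkings^A$).

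For completeness, I would project a winning $\sysstrat^B$ back to $\pGame^A$ by contracting every $t$-$\tau_t$ pair. The projection is a strategy of $\pGame^A$ (justified refusal is preserved because system players only make choices on original places, which are unchanged). I claim it avoids $\badmarkings^A$: if some reachable marking $M$ of the projection lay in $\badmarkings^A$, then in $\sysstrat^B$ the environment could have interleaved the simulating transitions so as to collect all shadow tokens of $M$ simultaneously and then fire $\tau_M$, reaching the absorbing $\pi_\mathit{sink}$, which lies in no good marking — contradicting winning of $\sysstrat^B$. Hence the projection reaches a marking in $\goodmarkings^A = \goodmarkings^B$ without first visiting any bad marking and is therefore winning.

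The main obstacle is precisely the argument that the environment can in fact realise every bad marking of $\pGame^A$ as a fully shadow marking in $\pGame^B$. Because $\pGame^A$ has concurrency, the shadow tokens corresponding to a bad marking are produced by several (possibly independent) transitions, and the environment must be allowed to delay each corresponding $\tau_{t'}$ until all the relevant $\pi_p$-tokens are present. Establishing this relies on the fact that every $\tau_{t'}$ is an environment transition (so the system cannot force it to fire early), together with the individual-token semantics, which permits the environment to schedule independent shadow conversions in any order. Once this scheduling argument is in place, the bi-implication follows and undecidability of $\pGame^A$ transfers to $\pGame^B$.
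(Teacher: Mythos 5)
Your proposal follows essentially the same route as the paper: the identical shadow-place construction, a lift of a winning strategy for $\pGame^A$ to $\pGame^B$ by interleaving each $t$ with $\tau_t$, and a projection back, with the correspondence ``sink place reachable iff a bad marking is reachable'' doing all the work. The only difference is one of emphasis: you spell out the scheduling argument (that the environment can delay the $\tau$-transitions under the individual-token semantics to assemble a complete shadow image of any reachable bad marking), which the paper merely asserts in one sentence when presenting the construction.
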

\addtocounter{theorem}{-1}
\endgroup
\begin{proof}
	Each winning strategy for $\pGame^A$ reaches no bad markings by the definition of being winning.
	Therefore, it can be translated into a winning strategy for $\pGame^B$ by adding places~$\pi_p$ and transitions $\tau_t$ as in the construction above.
	Reaching no bad markings implies that the sink place is never reached.
	Thus, the constructed strategy for $\pGame^B$ is a winning strategy because the maximal firing sequences of all maximal plays are the same when removing places $\pi_p$ and transitions $\tau_t$ and because places $\pi_p$ are not contained in good markings.

	Each winning strategy for $\pGame^B$ never reaches the sink place because it is not contained in any good marking and otherwise the strategy would not be winning.
	Therefore, it can be translated into a winning strategy for $\pGame^A$ by removing places~$\pi_p$ and transitions $\tau_t$ to reverse the construction above.
	Never reaching the sink place implies that no bad marking is reached.
	Thus, the constructed strategy for $\pGame^A$ is a winning strategy because the maximal firing sequences of all maximal plays are the same when removing places $\pi_p$ and transitions~$\tau_t$ and because places $\pi_p$ are not contained in good markings.
\end{proof}

\end{document}